\numberwithin{equation}{section}
\newtheorem{theorem}{Theorem}
\DeclareMathOperator*{\argmin}{argmin}
\begin{document}

\onehalfspacing

\title{\bf Scalable Function-on-Scalar Quantile Regression \\ for Densely Sampled Functional Data}
\author[1]{Yusha Liu}
\author[2]{Meng Li}
\author[3]{Jeffrey S. Morris}
\affil[1]{\small Department of Biostatistics, University of North Carolina at Chapel Hill}
\affil[2]{\small Department of Statistics, Rice University}
\affil[3]{\small Division of Biostatistics, Department of Biostatistics, Epidemiology and Informatics, Perelman School of Medicine, University of Pennsylvania}
\date{}
\maketitle

\begin{abstract}
Functional quantile regression (FQR) is a useful alternative to mean regression for functional data as it provides a comprehensive understanding of how scalar predictors influence the conditional distribution of functional responses. In this article, we study the FQR model for densely sampled, high-dimensional functional data without relying on parametric error or independent stochastic process assumptions, with the focus on statistical inference under this challenging regime along with scalable implementation. This is achieved by a simple but powerful distributed strategy, in which we first perform separate quantile regression to compute $M$-estimators at each sampling location, and then carry out estimation and inference for the entire coefficient functions by properly exploiting the uncertainty quantification and dependence structure of $M$-estimators. We derive a uniform Bahadur representation and a strong Gaussian approximation result for the $M$-estimators on the discrete sampling grid, leading to dimension reduction and serving as the basis for inference. An interpolation-based estimator with minimax optimality is proposed, and large sample properties for point and simultaneous interval estimators are established. The obtained minimax optimal rate under the FQR model shows an interesting phase transition phenomenon that has been previously observed in functional mean regression. The proposed methods are illustrated via simulations and an application to a mass spectrometry proteomics dataset. 
\end{abstract}

\noindent%
{\it Keywords:}  Functional response regression; Functional data analysis; Quantile regression; Bahadur representation; Semiparametric model.
\vfill

\section{Introduction} \label{intro}
Function-on-scalar regression, which refers to the regression of functional responses on a set of scalar predictors, has been extensively studied in the functional data analysis literature; see \citet{ramsay2005functional, ramsay2007applied, morris2015functional} and \citet{wang2016functional} for a thorough review. High-dimensional functional data that are densely sampled at the same locations across subjects commonly arise in many fields. Examples include high-throughput genomics and epigenomics data (e.g., mutation status, copy number, methylation) over chromosomal locations, and neuroimaging data such as functional magnetic resonance imaging and electroencephalography data where brain activity is measured over time for multiple subjects. A function-on-scalar regression model can be formulated as
\begin{align} \label{eq:1}
Y_i(t) = X_i' \: \bm{\beta}(t) + \eta_i(t), \quad  i= 1, \dots, n,
\end{align}
where $t$ is the functional index, $Y_i(t)$ is a functional response on a compact support $\mathcal{T} \subset \mathbb{R}^1$, $\eta_i(t)$ is a residual process on $\mathcal{T}$, $X_i$ is a $d \times 1$ covariate vector in $\mathcal{X} \subset \mathbb{R}^{d}$, and $\bm{\beta}(t)=(\beta_1(t), \dots, \beta_d(t))'$ is a $d \times 1$ vector of regression coefficient functions that relate the covariates $X_i$ with the response $Y_i(t)$ at location $t$. We suppose a sample of $n$ curves $\bm{Y}(t)=(Y_1(t),\dots, Y_n(t))'$ are observed on a common grid $\bm{t}=(t_1, \dots, t_T)'$ in $\mathcal{T}$, where the number of observations $T$ per curve is allowed to grow with $n$, and $\bm{X}$ is an $n \times d$ design matrix.

Existing work on model \eqref{eq:1} has focused predominantly on functional mean regression, where $\eta_i(t)$ is assumed to be a zero-mean stochastic process, and the conditional mean of $Y_i(t)$ can be modeled as $X_i'\:\bm{\beta}(t)$ for each $t$. Quantile regression \citep{koenker1978regression} that studies the effect of covariates on a given quantile level $\tau \in (0,1)$ of a response variable can provide a much more comprehensive understanding of how covariates influence different aspects of the conditional distributions of the response, and has been widely used in various practical applications. In this paper, we study the function-on-scalar quantile regression model, which involves quantile regression of functional responses on scalar predictors that we henceforth refer to as \textit{functional quantile regression} (FQR). For a given quantile level $\tau$, we assume that the stochastic process $\eta_i(t)$ in model \eqref{eq:1} has a zero $\tau$th quantile for each $t$, so the conditional $\tau$th quantile of $Y_i(t)$ is equal to $X_i'\:\bm{\beta}_{\tau}(t)$. The within-function dependence structure is determined by the functional residual process $\eta_i(t)$. Primary interest is estimation of coefficient functions $\bm{\beta}_{\tau}(t)$ that characterize the effect of covariates $X$ on the $\tau$th quantile of the functional response $Y(t)$ at location $t$ as well as performing statistical inference via asymptotic simultaneous confidence bands, while accounting for the within-function dependence structure. 

There is comparatively little work on quantile regression for functional data. A related but distinct line of research is scalar-on-function quantile regression, which is concerned with the conditional quantile of a scalar response given functional covariates~\citep{Cardot+:05, Kato:12,chen+muller2012, li2021inference}; in contrast, the FQR considered in this work refers to function-on-scalar quantile regression, in which the stochastic within-function dependence of functional responses, when coupled with the semiparametric quantile regression problem, presents unique challenges. Among previous studies on FQR, \cite{wang2009quantile} introduced a partially linear varying coefficient model for quantile regression on sparse irregular longitudinal data, but the number of measurements $T$ per subject does not diverge with the number of subjects $n$, and simultaneous band construction was not addressed. Under a densely sampling design where $T$ grows with $n$, \cite{liu2019function} proposed a Bayesian approach and used the posterior samples for estimation and inference. However, they assumed an asymmetric Laplace working likelihood for $\eta(t)$ at each $t$, and for model tractability they did not model the within-function correlations across $t$. More recently, \cite{zhang2021high} studied FQR with a focus on estimation of coefficient functions rather than inference. Our estimation and inference strategy, along with model assumptions, are drastically different both in principle and implementation; however, some overlap might be expected as both works are on FQR. 

In the present paper, we focus on both estimation and inference of $\bm{\beta}_{\tau}(t)$ in model \eqref{eq:1} where $T$ grows with $n$, and do not rely on parametric or independent assumptions on the residual process $\eta(t)$, endowing the model in \cite{liu2019function} with substantially increased flexibility. 

We make the following contributions in this work. From the methodological perspective, we propose a novel distributed strategy in which we first perform pointwise quantile regression separately at each sampling location $t_l \:(l=1, \dots, T)$ to obtain the $ M$-estimator $\hat{\bm{\beta}}_{\tau}(t_l)$ that minimizes the check loss function at each $t_l$, then utilize these $M$-estimators and their uncertainty quantifications to carry out estimation and inference for the entire coefficient functions.  As one concrete example based on this general distributed strategy, we introduce an interpolation-based approach where we interpolate $\hat{\bm{\beta}}_{\tau}(t_l)$ between $t_l$'s to estimate $\bm{\beta}_{\tau}(t)$ for any $t \in \mathcal{T}$.

From the theoretical perspective, we present a uniform Bahadur representation for $\hat{\bm{\beta}}_{\tau}(t_l)$ across the sampling grid $\bm{t}$, where we allow the sampling frequency $T$ to grow exponentially fast with the sample size $n$ by appealing to Vapnik--Chervonenkis (VC) theory.  Based on this uniform Bahadur representation, we next derive a strong Gaussian approximation result for the asymptotic joint distribution of $\hat{\bm{\beta}}_{\tau}(\bm{t})$, which builds a theoretical foundation for our proposed distributed strategy. Importantly, we do not make any parametric assumptions on the residual process $\eta(t)$ for this Gaussian approximation result to hold. Instead, we merely require a mild condition on its zero-crossing behavior in addition to several standard assumptions in the quantile regression literature; see Assumption (A6) in Section \ref{gaussian} for more details. We provide rigorous justification for the interpolation-based approach, showing its rate optimality for estimation and constructing asymptotically valid simultaneous confidence bands under a dense design. 

From the computational perspective, the use of this distributed strategy makes our proposed approach easy to implement and computationally scalable to high dimensional settings ($T \gg n$), which can be further accelerated by utilizing parallel computing, while capable of accounting for intrafunctional correlations in the functional responses. We demonstrate the scalability of our proposed approach by applying it to a mass spectrometry proteomics dataset sampled on $T=3279$ spectral locations, much larger than $T$ that can be handled by \cite{liu2019function}. 

The rest of this paper is organized as follows. We present the uniform Bahadur representation in Section \ref{bahadur}, and study its asymptotic behavior in Section \ref{gaussian}. In Section \ref{interpolation}, we introduce the interpolation-based estimator for the entire coefficient functions $\bm{\beta}_{\tau}(t)$ and develop asymptotic theory for this estimator. We assess the finite sample performance of this interpolation-based approach via a simulation study in Section \ref{simulations}, show an application to a mass spectrometry proteomics dataset in Section \ref{data}, and conclude the paper in Section \ref{discussion}. The supplementary materials include additional technical results and all the proofs. 

\textbf{Model and Notation.} For a given quantile $\tau \in (0,1)$, let $Q(x; t, \tau) = x'\bm{\beta}_{\tau}(t) $ denote the $\tau$th quantile of the functional response $Y(t)$ conditional on the covariates $X=x$ at location $t \in \mathcal{T}$. Let $\left\{(X_i, \; Y_i(t_l)_{l=1}^{T})\right\}_{i=1}^{n}$ be the i.i.d. samples in $\mathcal{X} \times \mathbb{R}^{T}$. Denote the empirical measure of $\left(X_i, \: Y_i(t_l)_{l=1}^{T}\right)$ by $\mathbb{P}_n$ with the corresponding expectation $\mathbb{E}_n$, and the true underlying measure by $P$ with the corresponding expectation $\mathbb{E}$. Denote by $\lVert\mathbf{b}\rVert$ the $L^2$-norm of a vector $\mathbf{b}$. For a square matrix $A$, $\lambda_{\min}(A)$ and $\lambda_{\max}(A)$ are respectively its smallest and largest eigenvalues, and $\lVert A \rVert$ is its operator norm. Let $\mathcal{S}^{m-1} \coloneqq \left\{\mathbf{u}\in \mathbb{R}^m: \: \lVert\mathbf{u}\rVert = 1\right\}$. Define $\rho_{\tau}(u) \coloneqq \left(\tau - \mathbf{1}(u \leq 0)\right) u$, where $\mathbf{1}(\cdot)$ is the indicator function, and $\psi(Y,X;\bm{\beta},\tau) \coloneqq X\left(\mathbf{1}\{Y \leq X'\bm{\beta}\} - \tau\right)$. For a given location $t$, we denote the $M$-estimator of $\bm{\beta}_{\tau}(t)$ by 
\begin{align}\label{eq:2}
\hat{\bm{\beta}}_{\tau, \: n}(t) \coloneqq \argmin_{\bm{\beta}\in \mathbb{R}^d} \sum_{i=1}^{n} \rho_{\tau}(Y_i(t) - X_i'\bm{\beta}). 
\end{align}

\section{Uniform Bahadur Representation} \label{bahadur}
As our first main result, we derive a uniform Bahadur representation for the $M$-estimator $\hat{\bm{\beta}}_{\tau,\: n}(t)$ defined in equation \eqref{eq:2} on the discrete sampling grid $\bm{t}=(t_1, \dots, t_T)'$. For notational simplicity, we suppress the subscript $n$ in $\hat{\bm{\beta}}_{\tau, \: n}(t)$, with the understanding that we consider an estimator based on $n$ curves. 

Throughout the paper, we assume that the following assumptions hold. 

\begin{itemize}
	\item [(A1)] There exist constants $\xi > 0$ and $M > 0$ such that $\lVert X\rVert \leq \xi$ almost surely, and $1/M \leq \lambda_{\min}(\mathbb{E}[XX']) \leq \lambda_{\max}(\mathbb{E}[XX']) \leq M$.
	\item [(A2)] The conditional distribution $F_{Y(t)|X}(y|x)$ is twice differentiable with respect to $y$ for each $t$ and $x$. Denote the derivatives by $f_{Y(t)|X}(y|x)=\!\frac{\partial}{\partial y} F_{Y(t)|X}(y|x)$ and $f'_{Y(t)|X}(y|x)=\!\frac{\partial}{\partial y} f_{Y(t)|X}(y|x)$. Assume that $\overline{f} \coloneqq \sup_{y,x,t \in \mathcal{T}}|f_{Y(t)|X}(y|x)| < \infty$ and $\overline{f'} \coloneqq \sup_{y,x,t \in \mathcal{T}}|f'_{Y(t)|X}(y|x)| < \infty$.
	\item [(A3)] There exists $f_{\min} > 0$ such that $\inf_{t \in \mathcal{T}} \inf_{x} f_{Y(t)|X}(Q(x; \; t, \tau)|x) \geq f_{\min}$.
\end{itemize}

\textit{Remark 1.} Assumption (A1) is a mild condition on the covariate. At any given $t$, Assumptions (A2) and (A3) are standard assumptions on the conditional density $f_{Y(t)|X}(y|x)$ in the quantile regression literature. In our context of FQR, we additionally require that these conditions hold uniformly in $t \in \mathcal{T}$. Letting $J_{\tau}(t) \coloneqq \mathbb{E}[XX'f_{Y(t)|X}(Q(X; \; t, \tau)|X)] = \mathbb{E}[XX'f_{Y(t)|X}(X'\bm{\beta}_{\tau}(t)|X)]$ for each $t$, Assumptions (A1) and (A3) imply that the smallest eigenvalues of $J_{\tau}(t)$ are bounded away from zero uniformly in $t$. 

\begin{theorem}[Uniform Bahadur Representation]
\label{thm1}
Suppose Assumptions (A1)-(A3) hold and additionally $\log T \:\! \log n = o(n^{1/3})$. Then for $t \in \bm{t}$, 
\begin{align}\label{eq:6}
\hat{\bm{\beta}}_{\tau}(t)-\bm{\beta}_{\tau}(t) = -\frac{1}{n} J_{\tau}(t)^{-1} \sum_{i=1}^{n}\psi(Y_i(t), X_i; \bm{\beta}_{\tau}(t), \tau) + r_{n}(t,\tau), 
\end{align}
where $\sup_{t \in \bm{t}}\lVert r_{n}(t,\tau) \rVert = o_p(n^{-1/2})$. \end{theorem}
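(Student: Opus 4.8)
The plan is to reduce the theorem to a uniform-in-$t$ control of a localized empirical process, by analyzing the approximate first-order conditions of the convex check-loss minimization in \eqref{eq:2}. Write the empirical and population score maps
\begin{align*}
\mathbb{G}_n(\bm{\beta}, t) \coloneqq \frac{1}{n}\sum_{i=1}^{n}\psi(Y_i(t), X_i; \bm{\beta}, \tau), \qquad G(\bm{\beta}, t) \coloneqq \mathbb{E}\left[\psi(Y(t), X; \bm{\beta}, \tau)\right],
\end{align*}
so that $G(\bm{\beta}_{\tau}(t), t) = 0$ and, by a Taylor expansion of $F_{Y(t)|X}$ together with Assumption (A2), $G(\bm{\beta}, t) = J_{\tau}(t)\left(\bm{\beta} - \bm{\beta}_{\tau}(t)\right) + R(\bm{\beta}, t)$ with $\lVert R(\bm{\beta}, t)\rVert \leq \tfrac{1}{2}\xi^{3}\overline{f'}\lVert \bm{\beta} - \bm{\beta}_{\tau}(t)\rVert^{2}$ uniformly in $t$. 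The subgradient characterization of $\hat{\bm{\beta}}_{\tau}(t)$ forces $\lVert \mathbb{G}_n(\hat{\bm{\beta}}_{\tau}(t), t)\rVert = O(\xi d / n)$ uniformly in $t$, since at most $d$ observations lie exactly on the fitted quantile plane. Subtracting $G(\hat{\bm{\beta}}_{\tau}(t),t)$, using $G(\bm{\beta}_\tau(t),t)=0$, and inverting $J_\tau(t)$ (whose smallest eigenvalue is bounded below uniformly in $t$ by the Remark) gives
\begin{align*}
\hat{\bm{\beta}}_{\tau}(t) - \bm{\beta}_{\tau}(t) = -J_{\tau}(t)^{-1}\mathbb{G}_n(\bm{\beta}_{\tau}(t), t) - J_{\tau}(t)^{-1}\left[\nu_n(\hat{\bm{\beta}}_{\tau}(t), t) + R(\hat{\bm{\beta}}_{\tau}(t), t)\right] + O_p(n^{-1}),
\end{align*}
where $\nu_n(\bm{\beta}, t) \coloneqq (\mathbb{G}_n - G)(\bm{\beta}, t) - (\mathbb{G}_n - G)(\bm{\beta}_{\tau}(t), t)$. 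The first term is precisely the claimed linear representation, so it remains to show the bracketed remainder is $o_p(n^{-1/2})$ uniformly over $\bm{t}$.

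As a preliminary step I would establish the uniform rate $\delta_n \coloneqq \sup_{t \in \bm{t}}\lVert \hat{\bm{\beta}}_{\tau}(t) - \bm{\beta}_{\tau}(t)\rVert = O_p(\sqrt{\log T / n})$. This follows from convexity of the objective and the quadratic minorization $\langle \bm{\beta} - \bm{\beta}_{\tau}(t), G(\bm{\beta}, t)\rangle \gtrsim \lVert \bm{\beta} - \bm{\beta}_{\tau}(t)\rVert^{2}$ near the truth, guaranteed by Assumptions (A1) and (A3) through $\lambda_{\min}(J_{\tau}(t)) \geq c > 0$ uniformly in $t$, combined with the score bound $\sup_{t \in \bm{t}}\lVert \mathbb{G}_n(\bm{\beta}_{\tau}(t), t)\rVert = O_p(\sqrt{\log T / n})$. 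The latter holds because, for each fixed $t$, $\mathbb{G}_n(\bm{\beta}_{\tau}(t), t)$ is an average of bounded, centered i.i.d. vectors, so Bernstein's inequality yields sub-Gaussian tails and a union bound over the $T$ grid points produces the $\sqrt{\log T}$ inflation.

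The crux, and the step I expect to be the main obstacle, is the uniform control of the oscillation term $\nu_n$; the quadratic term is then immediate, since $\sup_{t}\lVert R(\hat{\bm{\beta}}_{\tau}(t), t)\rVert = O_p(\delta_n^{2}) = O_p(\log T / n) = o_p(n^{-1/2})$ under the hypothesis. For $\nu_n$ one must bound $\sup_{t \in \bm{t}}\ \sup_{\lVert \bm{\beta} - \bm{\beta}_{\tau}(t)\rVert \leq \delta_n}\lVert \nu_n(\bm{\beta}, t)\rVert$. Here I would appeal to VC theory: the coordinate maps $(x, y) \mapsto x_j(\mathbf{1}\{y \leq x'\bm{\beta}\} - \tau)$ are indexed by half-spaces and form a VC-subgraph class of fixed index, so the associated localized class has a uniform entropy bound free of $t$ and $n$. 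Since the increment $\mathbf{1}\{y \leq x'\bm{\beta}\} - \mathbf{1}\{y \leq x'\bm{\beta}_{\tau}(t)\}$ has second moment $O(\lVert \bm{\beta} - \bm{\beta}_{\tau}(t)\rVert)$ by the bounded density $\overline{f}$ in (A2), the localized process has variance of order $\delta_n$, and a chaining/modulus-of-continuity maximal inequality gives, for each fixed $t$, a high-probability bound of order $\sqrt{\delta_n \log n / n}$ on the $\delta_n$-ball, the $\log n$ arising from $\log(1/\delta_n)$ in the entropy integral. Combining this per-location sub-Gaussian deviation bound with a union bound over the $T$ grid points multiplies the effective exponent by $\log T$, yielding
\begin{align*}
\sup_{t \in \bm{t}}\ \sup_{\lVert \bm{\beta} - \bm{\beta}_{\tau}(t)\rVert \leq \delta_n}\lVert \nu_n(\bm{\beta}, t)\rVert = O_p\!\left(\sqrt{\frac{\delta_n \log T \log n}{n}}\right).
\end{align*}

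Substituting $\delta_n \asymp \sqrt{\log T / n}$ turns the right-hand side into order $(\log T)^{3/4}(\log n)^{1/2} n^{-3/4}$, which is $o_p(n^{-1/2})$ exactly when $\log T \log n = o(n^{1/3})$, matching the stated hypothesis; the exponent $3/4$ here is the Bahadur--Kiefer rate, and it is this sub-$n^{1/2}$ exponent that forces the $n^{1/3}$-scale restriction on the sampling frequency $T$. Together with the $O_p(\log T/n)$ quadratic term and the $O_p(n^{-1})$ subgradient slack, this shows $\sup_{t \in \bm{t}}\lVert r_n(t, \tau)\rVert = o_p(n^{-1/2})$, completing the argument. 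A minor technical point is that the localization radius $\delta_n$ is itself random; this is handled in the standard way by a peeling argument over dyadic shells $\{2^{k-1}\delta_n < \lVert \bm{\beta} - \bm{\beta}_{\tau}(t)\rVert \leq 2^{k}\delta_n\}$, which does not affect the rate.
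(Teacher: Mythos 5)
Your proposal is correct in outline and would prove the theorem, and its skeleton coincides with the paper's: the same decomposition into a subgradient slack of order $\xi d/n$, a Taylor (quadratic) remainder, and a localized empirical-process oscillation term (the paper's $r_{n,1}, r_{n,2}, r_{n,3}$), the same preliminary rate $\sup_{t\in\bm{t}}\lVert\hat{\bm{\beta}}_{\tau}(t)-\bm{\beta}_{\tau}(t)\rVert = O_p(\sqrt{\log T/n})$, and the same Bahadur--Kiefer exponent, an oscillation bound of order $(\delta_n \log T \log n/n)^{1/2}\asymp(\log T)^{3/4}(\log n)^{1/2}n^{-3/4}$, which is where the $n^{1/3}$-scale restriction enters. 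Where you genuinely depart from the paper is in how uniformity over the $T$ grid points is achieved. The paper folds the grid into the function classes themselves: $\mathcal{G}_1$ and $\mathcal{G}_2(\delta)$ in \eqref{eq:51} are indexed jointly by $\bm{\beta}\in\mathbb{R}^{d\times T}$ and a coordinate selector $\bm{\gamma}\in\mathcal{U}^{T}$; Lemma \ref{lemma3} bounds the covering exponent of these combined classes by $O(\log T)$ via a Sauer's-lemma counting argument, and Lemma \ref{lemma4} then applies a single maximal inequality plus Talagrand's concentration to the whole class. You instead work per location with fixed-dimensional VC classes and pay for uniformity with a union bound over the $T$ points, taking the tail parameter of order $\log T$. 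The two devices buy the same rates. Yours is more elementary, since it never needs the VC index of a $T$-dependent class (the paper's main technical novelty per its own remark); the price is that per-location concentration with \emph{exponential tails} (Talagrand or an equivalent) becomes indispensable, because an in-expectation chaining bound or a bare $O_p$ statement at each $t$ cannot be union-bounded over exponentially many grid points. Your sketch gestures at this correctly but it is the load-bearing step.

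Two soft spots to tighten in a full write-up. First, your preliminary consistency rate does not follow from the score bound at $\bm{\beta}_{\tau}(t)$ alone: the convexity argument needs positivity of the outward directional derivative $\bm{\delta}'\,\mathbb{G}_n(\bm{\beta}_{\tau}(t)+r\bm{\delta},t)$ (in your notation) over the whole sphere, hence uniform control of the \emph{centered} empirical score in a neighborhood of the truth, not only at it. The paper gets this for free because $\mathcal{G}_1$ is indexed by all $\bm{\beta}$ (so $s_{n,1}$ dominates the fluctuation at any point); in your scheme it must come from your own oscillation bound combined with the peeling you mention, and this should be made explicit to avoid circularity between localization and consistency. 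Second, ``exactly when'' is loose: your final bound is $o(n^{-1/2})$ under $(\log T)^3(\log n)^2 = o(n)$, which is slightly weaker than, hence implied by, the stated hypothesis $\log T\:\!\log n = o(n^{1/3})$; the hypothesis is sufficient, which is all the theorem requires, but the two conditions are not equivalent.
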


\textit{Remark 2.} There is a rich literature \citep{he1996general, arcones1996bahadur} on Bahadur representation for $M$-estimators under a scalar response setting. Theorem \ref{thm1} can effectively be viewed as a nontrivial extension of these results to the functional response setting, where we bound the remainder term $r_{n}(t,\tau)$ at $o_p(n^{-1/2})$ \textit{uniformly} over $t$. 

\textit{Remark 3.} The condition $\log T \:\! \log n = o(n^{1/3})$ is very mild as it essentially allows $T$ to grow exponentially fast with $n$. We achieve this flexibility using some arguments based on VC theory. Note that Theorem \ref{thm1} may not be obtained from a straightforward modification of uniform Bahadur representation results over quantile levels $\tau$ available in the literature, such as Theorem 5.1 in \citet{chao2017quantile} and Theorem 2 in \citet{belloni2019conditional}, whose proof techniques were utilized by our proof of Theorem \ref{thm1}. This is primarily because the VC index of the class of functions defined by (S.3.1) over $\bm{t}$ in the functional response setting depends on $\mathbf{card}(\bm{t}) = T$ while its analogue defined over $\tau$ is $O(1)$.  See Lemma 3 and its proof for details.

\section{Strong Gaussian Approximation to M-estimators}\label{gaussian}
The uniform Bahadur representation provided in Theorem \ref{thm1} enables us to study the asymptotic joint distribution of any given linear combination of $\hat{\bm{\beta}}_{\tau,\: n}(t)$ on the discrete sampling grid $\bm{t}$, i.e., $\bm{a}'\hat{\bm{\beta}}_{\tau,\: n}(t)$ where $\bm{a} \in \mathcal{S}^{d-1}$. For notational simplicity, we denote $\bm{a}'\hat{\bm{\beta}}_{\tau,\: n}(t)$ by $\hat{\mu}_n(t)$ and $\bm{a}'\bm{\beta}_{\tau}(t)$ by $\mu(t)$ throughout the rest of the paper, with the understanding that we consider a given quantile level $\tau$ and a given linear combination $\bm{a}$.  

As the second main result, we present a strong Gaussian approximation to the $M$-estimators $\hat{\mu}_n(\bm{t})$. The following additional assumptions are needed.

\begin{itemize}
	\item [(A4)] The coefficient function $\bm{\beta}_{\tau}(t)$ is differentiable with respect to $t$, and $\sup_{t \in \mathcal{T}} \lVert \frac{d}{dt} \bm{\beta}_{\tau}(t) \rVert < \infty$.
	\item [(A5)] The conditional density $f_{Y(t)|X}(y|x)$ is differentiable with respect to $t$ for each $y$ and $x$, and $\sup_{y,x,t \in \mathcal{T}}\left|\frac{\partial}{\partial t} f_{Y(t)|X}(y|x) \right| < \infty$.
	\item [(A6)] Conditional on $\forall\: X \in \mathcal{X}, \eta(t) \:|\: X $ has almost surely continuous sample paths in $\mathcal{T}$, and for $\forall \: t < s \in \mathcal{T}, X \in \mathcal{X}$, there exists a constant $c_0$ independent of $t, s$ and $X$ such that $P\left(\eta(v) \:|\: X \; \text{shows change of sign at least once in} \; v \in [t,s] \right) \leq c_0 |t-s|$.
\end{itemize}

\textit{Remark 4.} Assumption (A4) is about the differentiability of the coefficient function $\bm{\beta}_{\tau}(t)$ with respect to $t$ and uniform boundedness of its first derivative. Assumption (A5) requires that for each $y \in \mathbb{R}$ and $x \in \mathcal{X}$, the conditional density $f_{Y(t)|X}(y|x)$ is differentiable in $t \in \mathcal{T}$, and this derivative is uniformly bounded over $x, y$ and $t$. Assumption (A6) regularizes the residual process $\eta(t)$ using its zero-crossing behavior, which does not require specifying the distribution of the stochastic process $\eta(t)$. Assumption (A6) holds if $\eta(t)$ is a Gaussian process which possesses almost surely continuous sample paths in $\mathcal{T}$ and certain additional properties; see Lemma 5 in the supplement and discussion therein for more details. 

\begin{theorem}[Gaussian Coupling for \textit{M}-estimator]
\label{thm2} For a given linear combination $\bm{a} \in \mathcal{S}^{d-1}$ and any $t \in \mathcal{T}$, let
\begin{align}\label{eq:10}
\mathbb{G}_n(t) \coloneqq \frac{1}{\sqrt{n}} \bm{a}' J_{\tau}(t)^{-1} \sum_{i=1}^{n} X_i\left(\mathbf{1}\{Y_i(t) \leq X_i'\bm{\beta}_{\tau}(t)\} - \tau\right).
\end{align}
Under Assumptions (A1)-(A6), if we additionally assume that $\log T \:\! \log n = o(n^{1/3})$ and $\delta_{T} \coloneqq \max_{1 \leq l \leq T-1} \left|t_{l+1} - t_l \right| = o(1)$, then we have
\begin{align}\label{eq:16}
\sqrt{n} \left(\hat{\mu}_n(\bm{t}) - \mu(\bm{t}) \right) =  \tilde{G}_n(\bm{t}) + \tilde{r}_n(\bm{t}),
\end{align}
where $\tilde{G}_n(\cdot)$ is a process on $\mathcal{T}$ that, conditional on $\left(X_i\right)_{i=1}^{n}$, is zero-mean Gaussian with almost surely continuous sample paths and the covariance function
\begin{align}\label{eq:15}
\mathbb{E}\left[\tilde{G}_n(t) \tilde{G}_n(s) \mid (X_i)_{i=1}^n \right] = \mathbb{E}\left[\vphantom{\tilde{G}_n(t)} \mathbb{G}_n(t) \mathbb{G}_n(s) \mid (X_i)_{i=1}^n \right], \quad \forall \; t,\: s \in \mathcal{T}, 
\end{align} 
and the sup norm of the residual term $\tilde{r}_n(\bm{t})$ is bounded by $o_p(1)$.
\end{theorem}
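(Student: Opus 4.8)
The plan is to combine the uniform Bahadur representation of Theorem \ref{thm1} with a conditional high-dimensional Gaussian coupling of the leading linear term over the grid. Left-multiplying \eqref{eq:6} by $\bm{a}'$, scaling by $\sqrt{n}$, and using $\lVert\bm{a}\rVert=1$ together with $\sup_{t\in\bm{t}}\lVert r_n(t,\tau)\rVert=o_p(n^{-1/2})$, I obtain $\sqrt{n}(\hat{\mu}_n(t)-\mu(t)) = -\mathbb{G}_n(t)+R_n(t)$ with $\sup_{t\in\bm{t}}|R_n(t)|=o_p(1)$, where $\mathbb{G}_n$ is the process in \eqref{eq:10}. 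Thus it suffices to approximate the mean-zero process $\mathbb{G}_n(t)=n^{-1/2}\sum_{i=1}^{n}\phi_i(t)$, $\phi_i(t):=\bm{a}'J_{\tau}(t)^{-1}X_i(\mathbf{1}\{Y_i(t)\leq X_i'\bm{\beta}_{\tau}(t)\}-\tau)$, by a conditionally Gaussian process uniformly over $\bm{t}$. Since the definition of the conditional quantile gives $\mathbb{E}[\phi_i(t)\mid X_i]=0$ and $(X_i,Y_i)$ are i.i.d., conditional on the design the summands $\phi_i(\cdot)$ are independent, mean zero, and uniformly bounded by (A1) and (A3); this is exactly the structure the coupling exploits.

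First I would construct the limiting process. Conditional on $(X_i)$, define $\Sigma_n(t,s):=\mathbb{E}[\mathbb{G}_n(t)\mathbb{G}_n(s)\mid(X_i)]=n^{-1}\sum_i\mathbb{E}[\phi_i(t)\phi_i(s)\mid X_i]$ on all of $\mathcal{T}\times\mathcal{T}$, which is nonnegative definite as a covariance. The key regularity estimate is an increment bound $\mathbb{E}[(\mathbb{G}_n(t)-\mathbb{G}_n(s))^2\mid(X_i)]\lesssim|t-s|$, uniform in the design. Writing $\phi_i(t)=c_i(t)w_i(t)$ with $c_i(t)=\bm{a}'J_{\tau}(t)^{-1}X_i$ and $w_i(t)=\mathbf{1}\{\eta_i(t)\leq0\}-\tau$, the multiplier $c_i$ is uniformly Lipschitz in $t$: by (A2), (A4), and (A5) the map $t\mapsto J_{\tau}(t)$ is differentiable with bounded derivative, and the uniform lower bound on $\lambda_{\min}(J_{\tau}(t))$ (from the remark following (A3)) makes $J_{\tau}(\cdot)^{-1}$ Lipschitz, while $\lVert X_i\rVert\leq\xi$. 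For the indicator part, the squared difference $(w_i(t)-w_i(s))^2$ is the event that exactly one of $\eta_i(t),\eta_i(s)$ is nonpositive, which by continuity of sample paths forces a zero crossing on $[s,t]$, so Assumption (A6) yields $\mathbb{E}[(w_i(t)-w_i(s))^2\mid X_i]\leq c_0|t-s|$. Combining these gives the $O(|t-s|)$ increment bound, and standard Gaussian sample-path theory then furnishes a conditionally Gaussian process $\tilde{G}_n$ on $\mathcal{T}$ with covariance $\Sigma_n$ and almost surely continuous paths.

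The heart of the argument, and the main obstacle, is to couple $\mathbb{G}_n$ to the grid restriction $\tilde{G}_n(\bm{t})\sim N(0,\Sigma_n|_{\bm{t}})$ in sup norm when $T$ is allowed to grow almost exponentially in $n$. Conditional on the design, $\mathbb{G}_n(\bm{t})=n^{-1/2}\sum_i\bm{\phi}_i$ is a normalized sum of independent, bounded, mean-zero vectors in $\mathbb{R}^{T}$ with average covariance $\Sigma_n|_{\bm{t}}$, so I would invoke a high-dimensional Gaussian coupling inequality of Chernozhukov--Chetverikov--Kato type to construct, on an enriched probability space, a Gaussian vector $G_n^{\dagger}(\bm{t})$ with law $N(0,\Sigma_n|_{\bm{t}})$ satisfying $\sup_{1\leq l\leq T}|\mathbb{G}_n(t_l)-G_n^{\dagger}(t_l)|=o_p(1)$. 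The coupling error of such results is polynomial in $n^{-1}$ up to $\mathrm{polylog}(Tn)$ factors; the uniform boundedness of $\phi_i(t_l)$ and a lower bound on $\min_l\mathrm{Var}(\mathbb{G}_n(t_l)\mid(X_i))$ by a positive constant, both inherited from (A1) and (A3), let the stated rate $\log T\,\log n=o(n^{1/3})$ drive this error to zero even for a dense grid ($\delta_T=o(1)$). The delicate points are verifying the moment and anti-concentration hypotheses of the coupling theorem uniformly over the random design, so that the conditional bound integrates to an unconditional $o_p(1)$, and identifying $G_n^{\dagger}(\bm{t})$ with the grid restriction of the continuous process of the previous step, which is legitimate since both have law $N(0,\Sigma_n|_{\bm{t}})$.

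Finally I would assemble the pieces. Setting $\tilde{G}_n:=-G_n^{\dagger}$, which shares the covariance \eqref{eq:15} and the continuity established above, I get on the grid
\[
\sqrt{n}(\hat{\mu}_n(t_l)-\mu(t_l)) = -\mathbb{G}_n(t_l)+R_n(t_l) = \tilde{G}_n(t_l)+\tilde{r}_n(t_l),
\]
where $\tilde{r}_n(t_l)=-(\mathbb{G}_n(t_l)-G_n^{\dagger}(t_l))+R_n(t_l)$, so that $\sup_{l}|\tilde{r}_n(t_l)|\leq\sup_{l}|\mathbb{G}_n(t_l)-G_n^{\dagger}(t_l)|+\sup_{l}|R_n(t_l)|=o_p(1)$, the first term vanishing by the coupling and the second by Theorem \ref{thm1}. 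This is exactly \eqref{eq:16}, with $\tilde{G}_n$ the conditionally Gaussian, almost surely continuous process constructed above.
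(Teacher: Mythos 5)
Your reduction via Theorem \ref{thm1} and your construction of the continuous limit process are sound: left-multiplying the Bahadur representation by $\bm{a}'$ gives $\sqrt{n}(\hat{\mu}_n(t)-\mu(t)) = -\mathbb{G}_n(t)+o_p(1)$ uniformly over $\bm{t}$, and your increment bound $\mathbb{E}\left[(\mathbb{G}_n(t)-\mathbb{G}_n(s))^2\mid (X_i)_{i=1}^n\right]\lesssim |t-s|$, obtained from the Lipschitz continuity of $J_{\tau}(\cdot)^{-1}$ and the zero-crossing Assumption (A6), is exactly the estimate the paper derives (Lemmas \ref{lemma6} and \ref{lemma8}). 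The gap is the step you yourself call the heart of the argument: there is no ``Chernozhukov--Chetverikov--Kato type'' inequality that couples the full $T$-dimensional vector $\mathbb{G}_n(\bm{t})$ to a Gaussian vector $G_n^{\dagger}(\bm{t})$ in sup norm with error polynomial in $n^{-1}$ up to $\mathrm{polylog}(Tn)$ factors. The CCK results couple the scalar suprema/maxima, i.e.\ they control quantities like $\left|\max_l \mathbb{G}_n(t_l) - \max_l G_n^{\dagger}(t_l)\right|$ (or Kolmogorov distance over hyperrectangles), which is strictly weaker than the vector statement $\max_l \left|\mathbb{G}_n(t_l) - G_n^{\dagger}(t_l)\right| = o_p(1)$ that \eqref{eq:16} requires. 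Genuine vector couplings (Yurinskii, Zaitsev) carry polynomial-in-dimension factors: here $\lVert \phi_i(\bm{t})\rVert \lesssim \sqrt{T}$, so Yurinskii's bound forces $T$ to be a small power of $n$ (roughly $T = o(n^{1/5})$), and it cannot be applied on the grid $\bm{t}$ when $\log T$ may grow nearly as fast as $n^{1/3}$. Note also that in the paper the condition $\log T \log n = o(n^{1/3})$ is consumed by the Bahadur remainder in Theorem \ref{thm1}; it is not, and cannot be, a budget for a coupling in dimension $T$.

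The repair is an idea you already have in hand but do not use: exploit the equicontinuity implied by your own increment bound to make the coupling dimension independent of $T$. This is what the paper does (proof of Theorem \ref{thm4}, following \citet{belloni2019conditional}): project $\mathbb{G}_n$ onto a dyadic grid of only $2^{j_n}=n^{\tilde{\epsilon}}$ points with $n^{5\tilde{\epsilon}}\log n = o(n)$, apply Yurinskii's coupling there (the dimension is now polynomial, so the error vanishes), extend the resulting Gaussian vector to a continuous Gaussian process on $\mathcal{T}$ with covariance \eqref{eq:15} via a same-probability-space construction (equality in law alone, as in your last step, is not sufficient), and then bridge back to all of $\mathcal{T}$ — hence to the arbitrarily dense grid $\bm{t}$ — using (a) asymptotic equicontinuity of $\mathbb{G}_n$ (Lemma \ref{lemma8}, resting on Assumption (A6) and a maximal inequality) and (b) a Gaussian modulus-of-continuity bound for $\tilde{G}_n$. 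In that architecture the sup over $\bm{t}$ is controlled by the two moduli of continuity rather than by the coupling dimension. Without this detour, or some substitute for the nonexistent polylog-dimension sup-norm vector coupling, your proof does not go through.
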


Theorem \ref{thm2} shows that $\sqrt{n} \left(\hat{\mu}_n(\bm{t}) - \mu(\bm{t}) \right)$ can be strongly approximated by zero-mean Gaussian, which has important theoretical and practical implications.  More specifically, Theorem \ref{thm2} implies that, rather than directly working with the $n \times T$ matrix $\bm{Y}$ consisting of observed functional responses, we can instead work with the $T \times 1$ vector $\hat{\mu}_n(\bm{t})$ that is asymptotically zero-mean Gaussian with the $T \times T$ covariance matrix 
\begin{align}\label{eq:17}
\Sigma_{\bm{t}} \coloneqq  \text{Cov}\left[\tilde{G}_n(\bm{t}) \mid (X_i)_{i=1}^n \right], 
\end{align}
after centering by $\mu(\bm{t})$ and rescaling by $\sqrt{n}$. This data reduction is computationally appealing especially for large sample size $n$. In addition, Theorem \ref{thm2} effectively transforms the originally complicated FQR problem, which is semiparametric in nature, into a much more manageable Gaussian mean regression problem with a particular covariance structure for the residual errors, for which many modeling approaches are available in the literature, such as the commonly used kernel or spline smoothing and some nonparametric Bayesian methods.

We next estimate the entire coefficient function $\mu(t)$ for $t \in \mathcal{T}$ based on the $M$-estimator $\hat{\mu}_n(t)$ for $t \in \bm{t}$, which can be achieved using various approaches. In particular, we propose an approach based on interpolation in Section \ref{interpolation}, which is shown to be rate optimal for estimating $\mu(t)$.

\section{Asymptotic Properties of Interpolation-based Estimator} \label{interpolation}

\subsection{Linear Interpolation-based Estimator} \label{linear}
We first consider a linear interpolation-based estimator $\hat{\mu}_n(t)^{LI}$, which is defined as
\begin{align}\label{eq:7}
\hat{\mu}_n(t)^{LI} \coloneqq \frac{t_{l+1}- t}{t_{l+1}-t_l} \hat{\mu}_n(t_l) +  \frac{t- t_l}{t_{l+1}-t_l} \hat{\mu}_n(t_{l+1}), \quad \forall \; t \in [t_l, t_{l+1}], \;\; l=1, 2, \dots, T-1. 
\end{align}
If $t \in \bm{t}$, then it is apparent that $\hat{\mu}_n(t)^{LI} = \hat{\mu}_n(t)$. Let $\tilde{\mu}$ be the linear interpolation of $\{ \mu(t_l): \; 1 \leq l \leq T\}$, that is, 
\begin{align}\label{eq:8} 
\tilde{\mu}(t) \coloneqq \frac{t_{l+1}- t}{t_{l+1}-t_l} \mu(t_l) + \frac{t - t_l}{t_{l+1}-t_l} \mu(t_{l+1}),  \quad \forall \; t \in [t_l, t_{l+1}], \;\; l=1, 2, \dots, T-1. 
\end{align} 

The following theorem shows that the process $\sqrt{n}\left( \hat{\mu}_n(\cdot)^{LI} - \tilde{\mu}(\cdot) \right)$ converges weakly to a centered Gaussian process in $l^{\infty}(\mathcal{T})$.

\begin{theorem}[Weak Convergence]
\label{thm3}
Under the conditions assumed for Theorem \ref{thm2},
\begin{align}\label{eq:9}
\hat{\mu}_n(t)^{LI} - \tilde{\mu}(t) = -\frac{1}{\sqrt{n}} \mathbb{G}_n(t) + o_p(\frac{1}{\sqrt{n}}), 
\end{align}
where the remainder term $o_p(n^{-1/2})$ is uniform in $t \in \mathcal{T}$. In addition,
\begin{align}\label{eq:11}
\sqrt{n} \left( \hat{\mu}_n(\cdot)^{LI} - \tilde{\mu}(\cdot) \right) \rightsquigarrow \mathbb{G}(\cdot) \;\; \text{in} \;\; l^{\infty}(\mathcal{T}),
\end{align}
where $\mathbb{G}(\cdot)$ is a centered Gaussian process on $\mathcal{T}$ with the covariance function $H_{\tau}$ given by
\begin{align}\label{eq:12}
H_{\tau}(t, s; \: \bm{a}) \coloneqq \bm{a}'J_{\tau}(t)^{-1} \mathbb{E}\left[\psi(Y(t), X; \: \bm{\beta}_{\tau}(t), \tau) \cdot \vphantom{\left(\lambda\right)^2} \psi(Y(s), X;\: \bm{\beta}_{\tau}(s), \tau)'\right] J_{\tau}(s)^{-1} \bm{a},  
\end{align}
for any $t, s \in \mathcal{T}$. In particular, there exists a version of $\mathbb{G}$ with almost surely continuous sample paths.
\end{theorem}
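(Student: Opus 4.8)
The plan is to reduce the whole statement to the single process $\mathbb{G}_n(\cdot)$ and then establish its weak convergence. Since $\psi(Y,X;\bm\beta,\tau)=X(\mathbf 1\{Y\leq X'\bm\beta\}-\tau)$, write $\psi_i(t)\coloneqq\psi(Y_i(t),X_i;\bm\beta_\tau(t),\tau)$ and $g_i(t)\coloneqq\bm a'J_\tau(t)^{-1}\psi_i(t)$, so that by \eqref{eq:10} we have $\mathbb{G}_n(t)=\frac1{\sqrt n}\sum_{i=1}^n g_i(t)$ with the $g_i(\cdot)$ i.i.d. and $\mathbb{E}[g_i(t)]=0$ for every $t$ (because $\mathbb{E}[\mathbf 1\{Y(t)\leq X'\bm\beta_\tau(t)\}-\tau\mid X]=F_{Y(t)|X}(Q(X;t,\tau)|X)-\tau=0$ by the definition of the conditional quantile). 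Once \eqref{eq:9} is established, \eqref{eq:11} reads $\sqrt n(\hat\mu_n(\cdot)^{LI}-\tilde\mu(\cdot))=-\mathbb{G}_n(\cdot)+o_p(1)$ in $l^\infty(\mathcal T)$, so by Slutsky's lemma in $l^\infty$ it suffices to prove $\mathbb{G}_n\rightsquigarrow\mathbb{G}$; the sign is immaterial since $\mathbb{G}$ is centered.

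First I would prove the uniform Bahadur representation \eqref{eq:9}. Taking the inner product of the grid representation \eqref{eq:6} with $\bm a$ and using $\lVert\bm a\rVert=1$ gives, for every $t_l\in\bm t$, $\hat\mu_n(t_l)-\mu(t_l)=-\frac1{\sqrt n}\mathbb{G}_n(t_l)+\bm a'r_n(t_l,\tau)$ with $\sup_{t_l\in\bm t}|\bm a'r_n(t_l,\tau)|\leq\sup_{t_l\in\bm t}\lVert r_n(t_l,\tau)\rVert=o_p(n^{-1/2})$. For $t\in[t_l,t_{l+1}]$, both $\hat\mu_n(\cdot)^{LI}$ and $\tilde\mu(\cdot)$ are the same convex combination of their nodal values, so with $\lambda=(t_{l+1}-t)/(t_{l+1}-t_l)$,
\begin{align*}
\hat\mu_n(t)^{LI}-\tilde\mu(t)=-\frac{1}{\sqrt n}\big[\lambda\,\mathbb{G}_n(t_l)+(1-\lambda)\,\mathbb{G}_n(t_{l+1})\big]+o_p(n^{-1/2}),
\end{align*}
uniformly in $t$. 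It then remains to replace the interpolated leading term by $\mathbb{G}_n(t)$ itself, and the triangle inequality bounds the resulting error by $2\sup\{|\mathbb{G}_n(s)-\mathbb{G}_n(s')|:|s-s'|\leq\delta_T\}$. Hence \eqref{eq:9} follows once the stochastic modulus of continuity $\omega_n(\delta)\coloneqq\sup_{|s-s'|\leq\delta}|\mathbb{G}_n(s)-\mathbb{G}_n(s')|$ satisfies $\omega_n(\delta_T)=o_p(1)$, using $\delta_T=o(1)$.

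For \eqref{eq:11} I would invoke the standard criterion that finite-dimensional convergence plus asymptotic tightness yields weak convergence in $l^\infty(\mathcal T)$. Finite-dimensional convergence of $(\mathbb{G}_n(s_1),\dots,\mathbb{G}_n(s_k))$ follows from the multivariate Lindeberg--Lévy CLT applied to the i.i.d., mean-zero, uniformly bounded summands $(g_i(s_1),\dots,g_i(s_k))$; since these are i.i.d. and centered, $\mathrm{Cov}(\mathbb{G}_n(t),\mathbb{G}_n(s))=\mathbb{E}[g_1(t)g_1(s)]=H_\tau(t,s;\bm a)$ exactly for every $n$, which identifies the limiting covariance \eqref{eq:12}. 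Asymptotic tightness is precisely $\lim_{\delta\downarrow0}\limsup_n P(\omega_n(\delta)>\epsilon)=0$, i.e. the same modulus control. Finally, continuity of the sample paths of $\mathbb{G}$ follows because the tight weak limit concentrates on functions uniformly continuous for its intrinsic semimetric $d(t,s)=\{H_\tau(t,t;\bm a)-2H_\tau(t,s;\bm a)+H_\tau(s,s;\bm a)\}^{1/2}$; Assumptions (A2), (A5) and (A6) make $H_\tau$ jointly continuous, so $d(t,s)\to0$ as $|t-s|\to0$ and uniform $d$-continuity upgrades to ordinary continuity on $\mathcal T$.

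The crux — and the step I expect to be hardest — is controlling $\omega_n(\delta)$, since $t\mapsto g_i(t)$ is discontinuous through the indicator $\mathbf 1\{Y_i(t)\leq X_i'\bm\beta_\tau(t)\}=\mathbf 1\{\eta_i(t)\leq 0\}$. I would bound the $L^2$ size of the summand increments: writing
\begin{align*}
g_i(t)-g_i(s)=\bm a'\big[J_\tau(t)^{-1}-J_\tau(s)^{-1}\big]\psi_i(t)+\bm a'J_\tau(s)^{-1}X_i\big[\mathbf 1\{\eta_i(t)\leq 0\}-\mathbf 1\{\eta_i(s)\leq 0\}\big],
\end{align*}
the first, smooth term contributes $O(|t-s|^2)$ to $\mathbb{E}[(g_i(t)-g_i(s))^2]$ because $J_\tau(\cdot)^{-1}$ is Lipschitz (its eigenvalues are bounded away from zero by (A1),(A3), and $t\mapsto J_\tau(t)$ is continuously differentiable by (A2),(A4),(A5)) while $X,\bm a$ are bounded by (A1); the second, discontinuous term is nonzero only if $\eta_i$ changes sign on $[s,t]$, which entails at least one zero-crossing, so Assumption (A6) supplies the essential bound $\mathbb{E}[(\mathbf 1\{\eta_i(t)\leq 0\}-\mathbf 1\{\eta_i(s)\leq 0\})^2]\leq c_0|t-s|$. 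Hence $\mathbb{E}[(g_i(t)-g_i(s))^2]\lesssim|t-s|$. Combining this $L^2$-increment control with the uniform boundedness of the summands and the finite VC complexity of the half-space class $\{(x,y):y\leq x'\bm\beta_\tau(t)\}$ indexed by the scalar $t$ — the same empirical-process/VC machinery underlying Theorem \ref{thm1}, now deployed on the continuum $\mathcal T$ via the a.s. path continuity in (A6) — a local maximal inequality bounds $\mathbb{E}\,\omega_n(\delta)$ by a term of order $\sqrt\delta\,\log(1/\delta)$ plus a lower-order $O(n^{-1/2}\log(1/\delta))$, which vanishes as $\delta\downarrow0$ and delivers the equicontinuity needed for both \eqref{eq:9} and the tightness in \eqref{eq:11}.
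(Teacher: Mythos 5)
Your high-level architecture coincides with the paper's own proof: obtain \eqref{eq:9} by combining Theorem \ref{thm1} at the grid points with the fact that $\hat{\mu}_n(\cdot)^{LI}$ and $\tilde{\mu}(\cdot)$ take the same convex combinations of nodal values, reduce everything to the modulus $\omega_n(\delta)=\sup_{|t-s|\le\delta}|\mathbb{G}_n(t)-\mathbb{G}_n(s)|$, and get \eqref{eq:11} from finite-dimensional CLTs plus asymptotic tightness (this is exactly the paper's Lemma \ref{lemma8}); your $L^2$-increment computation, splitting $g_i(t)-g_i(s)$ into the Lipschitz $J_\tau(\cdot)^{-1}$ part of order $|t-s|^2$ (the paper's Lemma \ref{lemma6}) and the indicator part of order $|t-s|$ via Assumption (A6), also matches the paper verbatim. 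The gap is in how you close the crux step, the maximal inequality for $\omega_n(\delta)$. You invoke ``the finite VC complexity of the half-space class $\{(x,y):y\leq x'\bm{\beta}_{\tau}(t)\}$ indexed by the scalar $t$ --- the same empirical-process/VC machinery underlying Theorem \ref{thm1}, now deployed on the continuum.'' This does not work. The relevant sample space is not $\mathbb{R}^d\times\mathbb{R}$ but the path space: the class is $\{(x,\eta(\cdot))\mapsto \mathbf{1}\{\eta(t)\le 0\}:\,t\in\mathcal{T}\}$, where different indices $t$ evaluate the \emph{same} path at \emph{different} locations. This class is not VC: given any $m$ points $t_1<\dots<t_m$ and any sign pattern, one can draw continuous paths realizing it, so arbitrarily large collections of paths are shattered. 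Indeed, the paper's own VC bound (Lemma \ref{lemma3}) has VC index $\nu(T)=O(\log T)$, \emph{growing} with the number of sampling points --- which is precisely why Theorem \ref{thm1} needs $\log T\log n=o(n^{1/3})$ and why that machinery cannot be transported to the continuum. Assumption (A6) limits oscillation only in probability under $P$; that is a measure-dependent (bracketing/moment) control, not the combinatorial, measure-free control that a uniform-entropy ``local maximal inequality'' requires. So the inequality $\mathbb{E}\,\omega_n(\delta)\lesssim\sqrt{\delta}\log(1/\delta)+n^{-1/2}\log(1/\delta)$ is asserted, not proved, and the asserted route to it fails.

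What the paper does instead (proof of Lemma \ref{lemma8}) is to convert the increment bounds into a fourth-moment bound $\mathbb{E}|\mathbb{G}_n(t)-\mathbb{G}_n(s)|^4\lesssim |t-s|/n+|t-s|^2$, note that this is $\lesssim|t-s|^{4/3}$ only at scales $|t-s|\ge n^{-3}$, apply the Kley-et-al.\ maximal inequality for separable, possibly discontinuous processes (Lemma \ref{lemma7}) with semimetric $|t-s|^{1/3}$ and cutoff $\bar{\omega}_n=2/n$, and then handle the residual sub-$n^{-3}$ scales by a multiplicative Chernoff bound on the binomial number of zero crossings over a net $\tilde{T}$ of $O(n^3)$ points --- i.e., the jumps of $\mathbb{G}_n$ are counted directly rather than controlled by entropy. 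If you want to keep your own route, the repair is to replace VC by \emph{bracketing}: by path continuity and (A6), the pair $\mathbf{1}\{\eta(v)\le 0\ \forall v\in[s,s']\}\le \mathbf{1}\{\eta(t)\le 0\}\le \mathbf{1}\{\exists v\in[s,s']:\eta(v)\le 0\}$ for $t\in[s,s']$ gives brackets of $L^2(P)$-size $\lesssim\sqrt{|s'-s|}$, hence $\log N_{[\,]}(\epsilon)\lesssim\log(1/\epsilon)$ and Ossiander's bracketing CLT yields the Donsker property and the required tightness. Two further small points: the deduction of $\omega_n(\delta_{T(n)})=o_p(1)$ for a \emph{sequence} $\delta_{T(n)}\to 0$ from asymptotic equicontinuity at fixed $\delta$ is not literally ``the same modulus control''; it needs the monotonicity of $\delta\mapsto\omega_n(\delta)$ (the paper spells this out with a subsequence argument), though this is a one-line fix; and your identification of the limiting covariance and the continuity of the limit $\mathbb{G}$ are fine.
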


\textit{Remark 5.} The major challenge in proving Theorem \ref{thm3} is to show the asymptotic tightness of the process $\mathbb{G}_n(t)$ in $l^{\infty}(\mathcal{T})$ (see Section 1.5 in \citet{van1996weak} and the proof to Lemma 8 in the supplement for more details),  i.e.,  for any $c > 0$,
\begin{align}\label{eq:13}
\lim_{\delta \downarrow 0} \limsup_{n \rightarrow \infty} P\left(\sup_{t,s \in \mathcal{T}, \; |t-s| \leq \delta} \left| \mathbb{G}_n(t) - \mathbb{G}_n(s) \right| > c \right) = 0. 
\end{align}
The two mild assumptions (A5) and (A6) are the only conditions made on the residual process $\eta(t)$ to conclude the asymptotic tightness of $\mathbb{G}_n(t)$. Note that unlike \cite{liu2019function}, we do not need to make any distributional assumptions on $\eta(t)$ or its dependence structure across $t$. 

We then present a strong approximation to the process $\sqrt{n}\left( \hat{\mu}_n(\cdot)^{LI} - \tilde{\mu}(\cdot) \right)$ by a sequence of Gaussian processes, by extending Theorem \ref{thm2} to the continuum $\mathcal{T}$. 

\begin{theorem}[Gaussian Coupling for Linear Interpolation-based Estimator]
\label{thm4}
Under the conditions assumed for Theorem \ref{thm2}, 
\begin{align}\label{eq:14}
\sqrt{n} \left( \hat{\mu}_n(t)^{LI} - \tilde{\mu}(t) \right) = \tilde{G}_n(t) + \tilde{r}_n(t), \quad t \in \mathcal{T}, 
\end{align}
where $\sup_{t \in \mathcal{T}} \left|\tilde{r}_n(t)\right| = o_p(1)$. If we additionally assume that $\delta_{T} = o(n^{-1/2})$, then we have
\begin{align}\label{eq:19}
\sqrt{n} \left( \hat{\mu}_n(t)^{LI} - \mu(t) \right) = \tilde{G}_n(t) + \tilde{r}_n(t), \quad t \in \mathcal{T}, 
\end{align}
where $\sup_{t \in \mathcal{T}} \left|\tilde{r}_n(t)\right| = o_p(1)$.
\end{theorem}

The Gaussian coupling result \eqref{eq:19} in Theorem \ref{thm4} allows us to construct a $1-\alpha$ simultaneous confidence band for the functional parameter $\mu$. More specifically, let 
\begin{align} \label{eq:22}
\sigma_n(t) \coloneqq  \left(\mathbb{E}\left[\tilde{G}^2_n(t) \:|\: (X_i)_{i=1}^n \right] \right)^{1/2} = \left( \tau (1-\tau) \bm{a}'J_{\tau}(t)^{-1} \:\! \mathbb{E}_n \left[X_i X_i' \right]  \:\!  J_{\tau}(t)^{-1} \bm{a} \right)^{1/2}, 
\end{align}
and $C_n(\alpha)$ is defined such that
\begin{align}\label{eq:20}
P\left( \sup_{t \in \mathcal{T}} \left|\sigma^{-1}_n(t) \tilde{G}_n(t) \right| \leq C_n(\alpha) \right) = 1 - \alpha,
\end{align}
then a $1-\alpha$ simultaneous confidence band for $\mu(t)$ is given by
\begin{align}\label{eq:18}
\left(\hat{\mu}_n(t)^{LI} - \frac{1}{\sqrt{n}} C_n(\alpha) \sigma_n(t), \;\;  \hat{\mu}_n(t)^{LI} + \frac{1}{\sqrt{n}} C_n(\alpha) \sigma_n(t) \right). 
\end{align}

In practice, both $\sigma_n(t)$ and $C_n(\alpha)$ are unknown and need to be estimated.  The expression of $\sigma_n(t)$ involves $J_{\tau}(t)$, which we estimate using the Powell sandwich method \citep{powell1991estimation} for $t \in \bm{t}$, followed by linear interpolation of Powell's estimator for $t \in \mathcal{T}$, as described in Section S.1. We adopt the weighted bootstrap method proposed in \citet{belloni2019conditional} to estimate $C_n(\alpha)$, which is also elaborated in Section S.1. Substituting the estimator $\hat{\sigma}_n(t)$ and $C^{b}_n(\alpha)$ into \eqref{eq:18} yields $1-\alpha$ simultaneous confidence band for the functional parameter $\mu$. This is given by Theorem \ref{thm6}, which is the third main result.

\begin{theorem}[Simultaneous Confidence Band]
\label{thm6}
Suppose Assumptions (A1)-(A6) hold. If we additionally assume that $\log T \:\! \log n = o(n^{1/3})$ and $\delta_{T} = o(n^{-1/2})$, then a $ 1-\alpha$ simultaneous confidence band for $\mu(t)$ is given by
\begin{align}\label{eq:33}
\left(\hat{\mu}_n(t)^{LI} - \frac{1}{\sqrt{n}} C^{b}_n(\alpha) \hat{\sigma}_n(t), \;\;  \hat{\mu}_n(t)^{LI} + \frac{1}{\sqrt{n}} C^{b}_n(\alpha) \hat{\sigma}_n(t) \right). 
\end{align}
\end{theorem}

Note that Theorem \ref{thm6} requires that the functional data are sampled on a sufficiently dense grid such that $\delta_{T} = o(n^{-1/2})$, which is equivalent to $T \gg n^{1/2}$ if the sampling locations $\bm{t}$ are equally spaced. This additional assumption bounds the bias associated with the linear interpolation-based estimator $\hat{\mu}_n(t)^{LI}$ at $o(n^{-1/2})$, eliminating the need to estimate the bias term and simplifying the construction of the simultaneous confidence band. To perform functional inference on $\mu(t)$ while adjusting for multiple testing over $t$, we can invert simultaneous confidence bands to construct simultaneous band scores (\textit{SimBaS}) $P_{\mu}(t)$ for each $t \in \mathcal{T}$, which is defined as the minimum $\alpha$ such that the $1-\alpha$ simultaneous confidence band of $\mu(t)$ excludes 0 at $t$ \citep{meyer2015bayesian}. $P_{\mu}(t)$ can be interpreted as the multiplicity-adjusted $p$-value for testing $\mu(t)=0$ at a given $t$ that adjusts across all $t \in \mathcal{T}$ based on the experimentwise error rate.

\subsection{Minimax Rate and Spline Interpolation-based Estimator} \label{splines}
We next consider an estimator for $\mu(t)$ based on spline interpolation which generalizes the linear interpolation-based estimator introduced in Section \ref{linear}, and show that this estimator is rate optimal for estimating $\mu(t)$.  For a general order $r \geq 1$,  the estimator based on $r$-th order spline interpolation is denoted by $\hat{\mu}_n(t)^{r-SI}$ and defined as the solution to
\begin{align}\label{eq:21}
\min_{g \in \mathcal{W}_2^{r}} \int_{\mathcal{T}}\left[g^{(r)}(t)\right]^2\:dt, \quad\quad \text{subject to} \;\; g(t_l) = \hat{\mu}_n(t_l), \quad l=1, \dots, T, 
\end{align} 
where $\mathcal{W}_2^{r}$ denotes the $r$-th order Sobolev-Hilbert space on $\mathcal{T}$, that is,
\begin{align}
\mathcal{W}_2^{r} \coloneqq \{g: \mathcal{T} \rightarrow \mathbf{R} \mid g, g^{(1)}, \dots, g^{(r-1)} \: \text{are absolutely continuous and} \: g^{(r)} \in \mathcal{L}_2(\mathcal{T})\}. 
\end{align}  
When $r=1$, the solution to equation \eqref{eq:21} is exactly $\hat{\mu}_n(t)^{LI}$ defined in equation \eqref{eq:7}. 

Theorem \ref{thm8} derives minimax lower bounds for estimating the coefficient function $\mu \in \mathcal{W}_2^{r}$.

\begin{theorem}[Minimax lower bound]
\label{thm8}
For any estimate $\check{\mu}$ of the coefficient function $\mu$ based on the observed data $\left\{(X_i, \; Y_i(t_l)_{l=1}^{T})\right\}_{i=1}^{n}$, 
\begin{align}\label{eq:36}
\lim_{d \rightarrow 0} \limsup\limits_{n \rightarrow \infty} \sup\limits_{\mu \in \mathcal{W}_2^{r}} P\left(\lVert \check{\mu} - \mu \rVert_{\mathcal{L}_2}^2 > d \left(T^{-2r} + n^{-1}\right) \right) = 1. 
\end{align}
\end{theorem}

Utilizing Theorem \ref{thm2} and some classical results about spline interpolation \citep{devore1993constructive}, we can calculate the rate of convergence for $\hat{\mu}_n(t)^{r-SI}$.

\begin{theorem}[Minimax upper bound] 
\label{thm5}
Suppose Assumptions (A1)-(A6) hold. If we additionally assume that $\log T \:\! \log n = o(n^{1/3})$, $\delta_{T} \leq C_0 \:\! T^{-1}$ for some constant $C_0 > 0$, then 
\begin{align}\label{eq:23}
\lim_{D \rightarrow \infty} \limsup\limits_{n \rightarrow \infty} \sup\limits_{\mu \in \mathcal{W}_2^{r}} P\left(\lVert \hat{\mu}_n^{r-SI} - \mu \rVert_{\mathcal{L}_2}^2 > D \left(T^{-2r} + n^{-1}\right) \right) = 0. 
\end{align}
\end{theorem}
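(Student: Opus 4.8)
The plan is to split the estimation error into a deterministic spline-approximation (bias) term and a stochastic (variance) term, and to show that these contribute the $T^{-2r}$ and $n^{-1}$ pieces of the rate, respectively. Since the $r$-th order spline interpolation defined by \eqref{eq:21} is a linear map of the nodal values, write $I_r$ for the operator sending grid values $\{g(t_l)\}_{l=1}^T$ to their minimal-$\mathcal{W}_2^r$-norm interpolant, so that $\hat{\mu}_n^{r-SI} = I_r(\hat{\mu}_n(\bm{t}))$, and let $\mu^{r-SI} \coloneqq I_r(\mu(\bm{t}))$ be the corresponding interpolant of the \emph{true} nodal values. By the triangle inequality,
\[
\lVert \hat{\mu}_n^{r-SI} - \mu \rVert_{\mathcal{L}_2}^2 \leq 2\lVert \hat{\mu}_n^{r-SI} - \mu^{r-SI} \rVert_{\mathcal{L}_2}^2 + 2\lVert \mu^{r-SI} - \mu \rVert_{\mathcal{L}_2}^2 ,
\]
so it suffices to bound the first term by $O_p(n^{-1})$ and the second, deterministic, term by $O(T^{-2r})$ uniformly over the class.

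For the bias term I would invoke the classical error bounds for spline interpolation on quasi-uniform meshes from \citet{devore1993constructive}. Since $\delta_T \le C_0 T^{-1}$ forces the mesh size to be $O(T^{-1})$, these results give $\lVert \mu^{r-SI} - \mu \rVert_{\mathcal{L}_2} \lesssim \delta_T^{\,r}\,\lVert \mu^{(r)} \rVert_{\mathcal{L}_2} \lesssim T^{-r}$ for $\mu \in \mathcal{W}_2^r$, whence $\lVert \mu^{r-SI} - \mu \rVert_{\mathcal{L}_2}^2 = O(T^{-2r})$, the uniformity in $\mu$ being controlled through $\lVert \mu^{(r)}\rVert_{\mathcal{L}_2}$. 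This term is purely analytic and does not involve the data.

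For the variance term, linearity gives $\hat{\mu}_n^{r-SI} - \mu^{r-SI} = I_r(\hat{\mu}_n(\bm{t}) - \mu(\bm{t}))$, i.e.\ the interpolant of the nodal estimation errors $v_l \coloneqq \hat{\mu}_n(t_l) - \mu(t_l)$. The key step is an $\mathcal{L}_2$-stability bound for $I_r$ on quasi-uniform meshes, $\lVert I_r v \rVert_{\mathcal{L}_2}^2 \lesssim \delta_T \sum_{l=1}^T v_l^2$, with a constant depending only on $r$ and $C_0$. Combining this with Theorem \ref{thm2}, which yields $v_l = n^{-1/2}(\tilde{G}_n(t_l) + \tilde r_n(t_l))$ with $\sup_{l}|\tilde r_n(t_l)| = o_p(1)$ and $\mathbb{E}[\tilde{G}_n(t_l)^2 \mid (X_i)_{i=1}^n] = \sigma_n^2(t_l) = O(1)$ uniformly in $l$ by Assumptions (A1) and (A3), leads to $\sum_l v_l^2 = O_p(T/n)$. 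Hence $\lVert \hat{\mu}_n^{r-SI} - \mu^{r-SI}\rVert_{\mathcal{L}_2}^2 = O_p(\delta_T \cdot T/n) = O_p(n^{-1})$, the last equality using $\delta_T \le C_0 T^{-1}$. It is essential to work with the $\mathcal{L}_2$/$\ell_2$ stability of $I_r$ rather than a sup-norm (Lebesgue-constant) bound: the latter would introduce $\max_l|\tilde{G}_n(t_l)|$, carrying an extraneous $\sqrt{\log T}$ factor, whereas the averaged $\ell_2$ control telescopes the product $\delta_T \cdot (T/n)$ exactly into the parametric rate $n^{-1}$.

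Putting the two bounds together gives $\lVert \hat{\mu}_n^{r-SI} - \mu\rVert_{\mathcal{L}_2}^2 = O_p(T^{-2r} + n^{-1})$, from which the stated $\lim_{D\to\infty}\limsup_{n\to\infty}\sup_\mu P(\cdots) = 0$ follows from the definition of $O_p$ together with the uniformity of the two bounds in $\mu$. I expect the main obstacle to be establishing the mesh-independent $\mathcal{L}_2$-stability of the higher-order operator $I_r$ (equivalently, the uniform boundedness, in the sense of de Boor, of the inverse of the $B$-spline interpolation matrix for quasi-uniform partitions, for which one uses the mesh regularity encoded in $\delta_T \le C_0 T^{-1}$), and verifying that the residual $o_p(1)$ and covariance bounds supplied by Theorem \ref{thm2} can be summed over all $T$ grid points without degrading the rate. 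For $r=1$ the stability bound is elementary, since the interpolant is piecewise linear and $\int_{t_l}^{t_{l+1}} s^2 \le \delta_T(v_l^2 + v_{l+1}^2)$, which also recovers the linear-interpolation case already treated via Theorems \ref{thm3}--\ref{thm4}.
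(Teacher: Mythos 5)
Your proposal is correct, and its architecture coincides with the paper's proof: the same decomposition via linearity of the minimal-norm spline interpolation operator (your $I_r$ is the paper's $Q_r$; your $\mu^{r-SI}$ is $Q_r(\mu)$, and $\hat{\mu}_n^{r-SI} - \mu^{r-SI} = Q_r(h)$ with $h$ the interpolant of the nodal errors), the same DeVore--Lorentz bound $\lVert Q_r(\mu) - \mu \rVert_{\mathcal{L}_2}^2 \lesssim T^{-2r}$ for the bias, and the same quasi-uniform-mesh $\mathcal{L}_2$-stability of the interpolant (the paper's chain $\lVert Q_r(h) \rVert_{\mathcal{L}_2}^2 \lesssim \lVert h \rVert_{\mathcal{L}_2}^2 \lesssim T^{-1}\sum_l \eta_l^2$ is exactly your $\lVert I_r v \rVert_{\mathcal{L}_2}^2 \lesssim \delta_T \sum_l v_l^2$). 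Where you genuinely diverge is the stochastic term. The paper bounds $\frac{1}{nT}\sum_l \tilde{G}_n^2(t_l) \leq \frac{1}{n}\sup_{t \in \mathcal{T}} \tilde{G}_n^2(t)$ and devotes the first part of its proof to showing $\sup_{t \in \mathcal{T}} \tilde{G}_n^2(t) = O_p(1)$ via the Gaussian maximal inequality (Proposition A.2.7 of van der Vaart and Wellner) combined with the covariance estimate $\rho_n(t,s) \lesssim |t-s|^{1/2}$ carried over from the proof of Theorem \ref{thm4}; you instead control the grid average directly, using $\mathbb{E}\bigl[\sum_l \tilde{G}_n^2(t_l) \mid (X_i)_{i=1}^n\bigr] = \sum_l \sigma_n^2(t_l) \lesssim T$ (the content of Lemma \ref{lemma9}) and Markov's inequality to get $\sum_l \tilde{G}_n^2(t_l) = O_p(T)$, with the $\tilde{r}_n$ contribution $o_p(T)$ from Theorem \ref{thm2}. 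Your route is more elementary — no maximal inequality or metric entropy is needed, precisely because only the $\ell_2$ average of the nodal errors enters after the stability bound — while the paper's sup-norm control is stronger than what the rate requires. One quibble: your claim that a sup-norm (Lebesgue-constant) argument would necessarily incur a $\sqrt{\log T}$ factor is not quite right, since the paper's supremum is taken over the fixed continuum $\mathcal{T}$, whose entropy does not grow with $T$, so $\max_l |\tilde{G}_n(t_l)| \leq \sup_{t \in \mathcal{T}} |\tilde{G}_n(t)| = O_p(1)$ with no logarithmic inflation. Finally, note that in both your argument and the paper's, the bias constant depends on $\lVert \mu^{(r)} \rVert_{\mathcal{L}_2}$, so the supremum over $\mathcal{W}_2^r$ in \eqref{eq:23} must implicitly be read as over a Sobolev ball; this is a shared convention, not a gap specific to your proof.
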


Theorem \ref{thm5} in conjunction with Theorem \ref{thm8} shows that $\hat{\mu}_n^{r-SI}$ achieves the optimal rate for estimating the coefficient function $\mu$ over $\mathcal{W}_2^{r}$, which is our fourth main result. Notably, this convergence rate is identical to the optimal rate established by \citet{cai2011optimal} for estimating the mean function based on discretely sampled functional data under the common sampling design. Theorem \ref{thm5} is reminiscent of an interesting phase transition phenomenon observed by \citet{cai2011optimal} in their function-on-scalar mean regression setting. In particular, a phase transition in the convergence rate of $\hat{\mu}_n^{r-SI}$ occurs when $T$ is of the order $n^{1/2r}$. When the functional data are observed on a relatively dense grid ($T \gg n^{1/2r}$), the sampling frequency $T$ does not have an effect on the rate of convergence, which is of the order $1/n$ and only determined by the sample size $n$. For a sparser grid $T = O(n^{1/2r})$, the rate of convergence is of the order $T^{-2r}$ and only determined by the sampling frequency $T$.

\section{Simulation Studies} \label{simulations}
In this section, we present results from simulation studies to assess the finite sample performance of the linear interpolation-based approach defined by equation \eqref{eq:7}.

\textbf{Simulation design.} We simulate functional data to mimic the motivating mass spectrometry data described in Section \ref{data}, where we approximate the shapes of mass spectrometry peaks with Gaussian densities~\citep{zhang2009review}. Data are generated according to the following model:
\begin{align}\label{eq:24}
\begin{split}
y_i(t) = \; & \sum\limits_{k=1}^{7}c_{i,k} \varphi\left(t\mid\mu_{k},\sigma_{k}\right) +  \epsilon_i(t), \\
c_{i,k} = \; & \mathbf{1}\{x_{i1}=-1\} f_{1,k} + \mathbf{1}\{x_{i1}=1\} f_{2,k} + x_{i2},  
\end{split}
\end{align}
In \eqref{eq:24}, $\varphi\left(t\mid\mu_{k},\sigma_{k}\right)$ denotes the probability density function of a normal distribution with mean $\mu_{k}$ and standard deviation $\sigma_{k}$, which corresponds to the peak $k$ centered at $\mu_k$. $c_{i,k}$ represents the height of the peak $k$ in subject $i$, which is determined by the covariate vector $x_i = (1, x_{i1}, x_{i2})'$, where $x_{i1}$ is a binary variable taking values from $\{-1,1\}$ with equal probability, and $x_{i2}$ is a standard normal variable independent of $x_{i1}$. For each peak $k$, the values of $\mu_{k}$ and $\sigma_{k}$ are shown in Table \ref{table:1} along with the distributions of $f_{1,k}$ and $f_{2,k}$. The i.i.d. noise term $\epsilon_i(t)$ is an $\text{AR}(1)$ process with lag $1$ autocorrelation $\rho=0.5$ and a marginal standard normal distribution.  Under model \eqref{eq:24}, the $\tau$th quantile of $Y(t)$ conditional on $x_1$ and $x_2$ is $\beta_0^{\tau}(t) + \beta_1^{\tau}(t) x_1 + \beta_2^{\tau}(t) x_2$ for any $\tau \in (0,1)$, where $\beta_1^{\tau}(t)$ quantifies the difference in the $\tau${th} quantile of $Y(t)$ between the two groups indexed by $x_1$ while conditioning on $x_2$, and $\beta_2^{\tau}(t)$ quantifies the change in the $\tau${th} quantile of $Y(t)$ if the continuous covariate $x_2$ increases by one unit while conditioning on $x_1$.  We considered two different sample sizes $n = 400, 8000$, and four different sampling frequencies $T = 64, 128, 256, 512$ which are equally spaced on the interval $\mathcal{T} = [0, 15.33]$. For each combination of $(n, T)$, we simulated $100$ replicate datasets. The true functional coefficients $\beta_1^{\tau}(t)$ and $\beta_2^{\tau}(t)$, which are constant across quantiles, are shown in Figure~\ref{fig:figure1}. 

\begin{table}[h!]
\caption{Parameter specifications of the data generating models in simulations.}
\label{table:1}
\setlength{\tabcolsep}{36pt}
\begin{tabular*}{\linewidth}{ @{} *{5}c @{}}
\toprule
basis index $k$ & $\mu_{k}$ & $\sigma_{k}$ & $f_{1,k}$ & $f_{2,k}$ \\[3pt]
\midrule
1 & 1 & 0.15 & $N(0,1)$ & $N(2,1)$ \\[2pt]
2 & 3 & 0.15 & $N(0,1)$ & $N(0,1)$ \\[2pt]
3 & 5 & 0.18 & $N(2,1)$ & $N(0,1)$ \\[2pt]
4 & 7 & 0.18 & $N(0,1)$ & $N(0,1)$ \\[2pt]
5 & 9 & 0.18 & $N(0,1)$ & $N(0,1)$ \\[2pt]
6 & 11 & 0.2 & $N(0,1)$ & $N(2,1)$ \\[2pt]
7 & 13 & 0.2 & $N(0,1)$ & $N(0,1)$ \\
\bottomrule
\end{tabular*}
\end{table}

\begin{figure}[h]
    \centering
    \includegraphics[width=\textwidth]{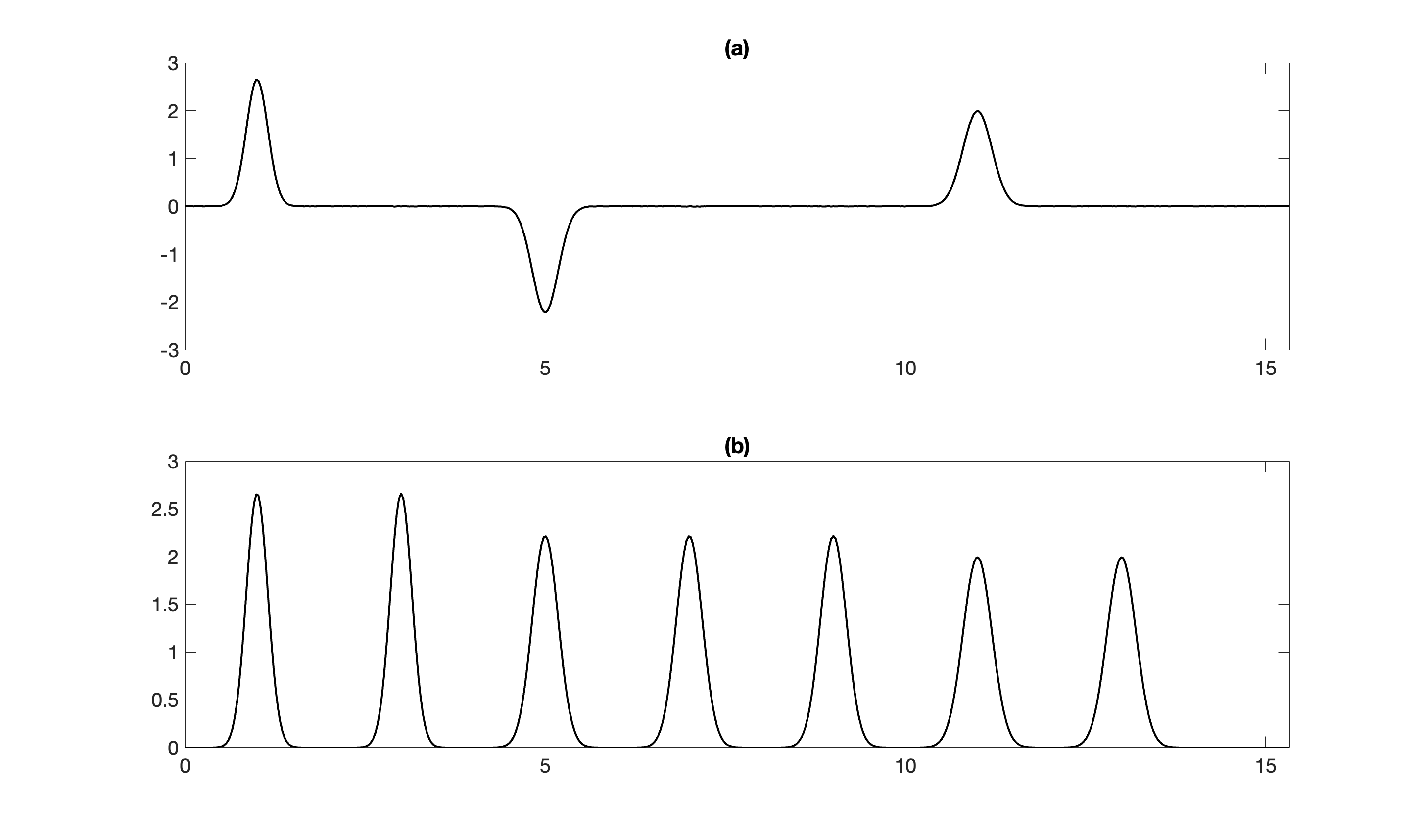}
    \caption{True functional coefficients $\beta_1^{\tau}(t)$ and $\beta_2^{\tau}(t)$, which are constant across quantiles, are shown respectively in (a) and (b).}  
    \label{fig:figure1}
\end{figure}

\textbf{Simulation results.} We apply the linear interpolation-based approach to the simulated datasets to perform FQR at $\tau = 0.5, 0.8, 0.9$. We evaluate estimation performance using the integrated mean squared error (IMSE) defined by $\int_{\mathcal{T}} \left(\hat{\mu}(t) - \mu(t) \right)^2 \text{d}t$ where $\hat{\mu}(t)$ is an estimate for $\mu(t)$, and inferential performance using the coverage probability of the 95\% simultaneous confidence band covering the true functional parameter. Simulation results are summarized in Figure~\ref{fig:figure2}. 

\begin{figure}[h]
    \centering
    \includegraphics[width=0.82\textwidth]{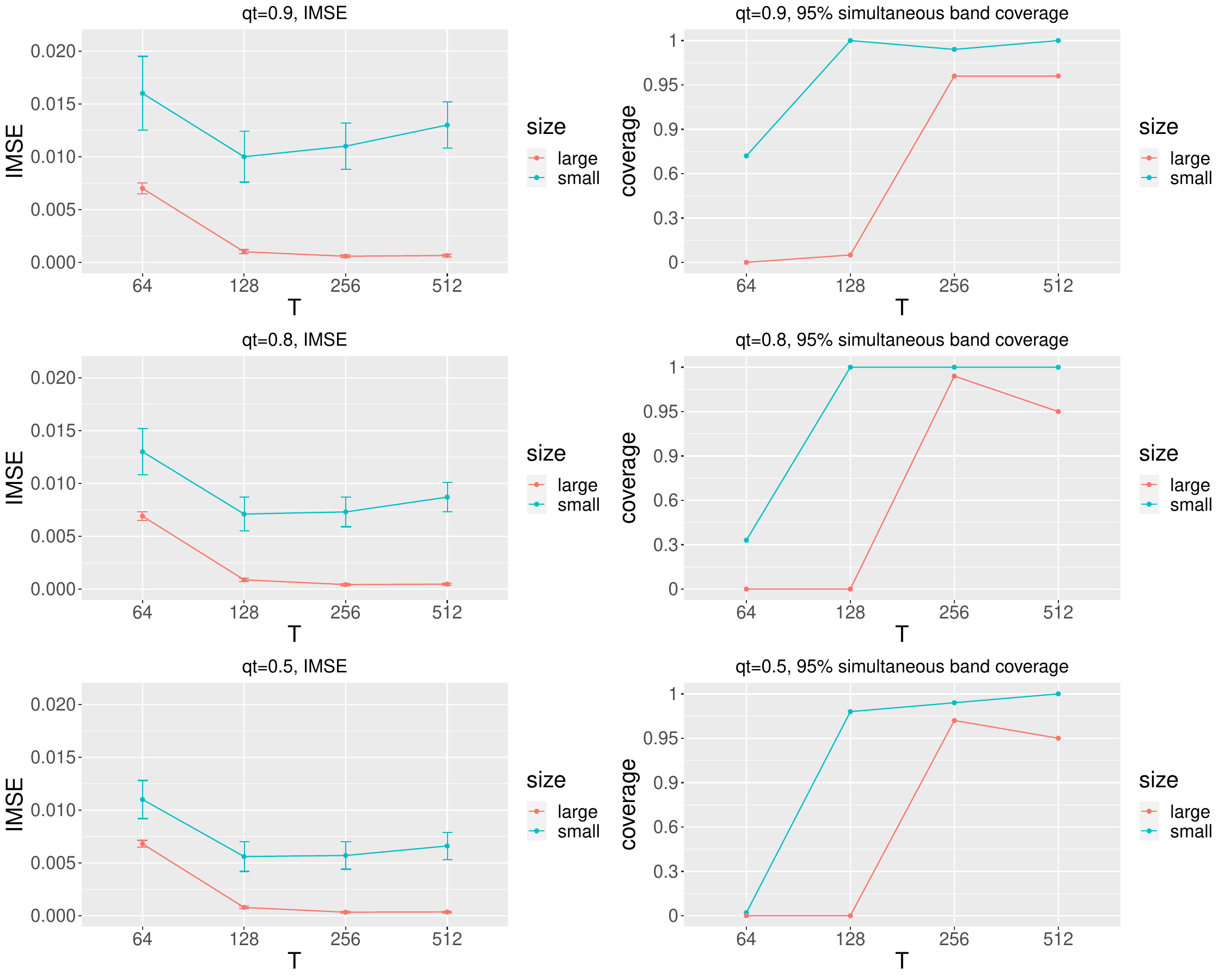}
    \includegraphics[width=0.82\textwidth]{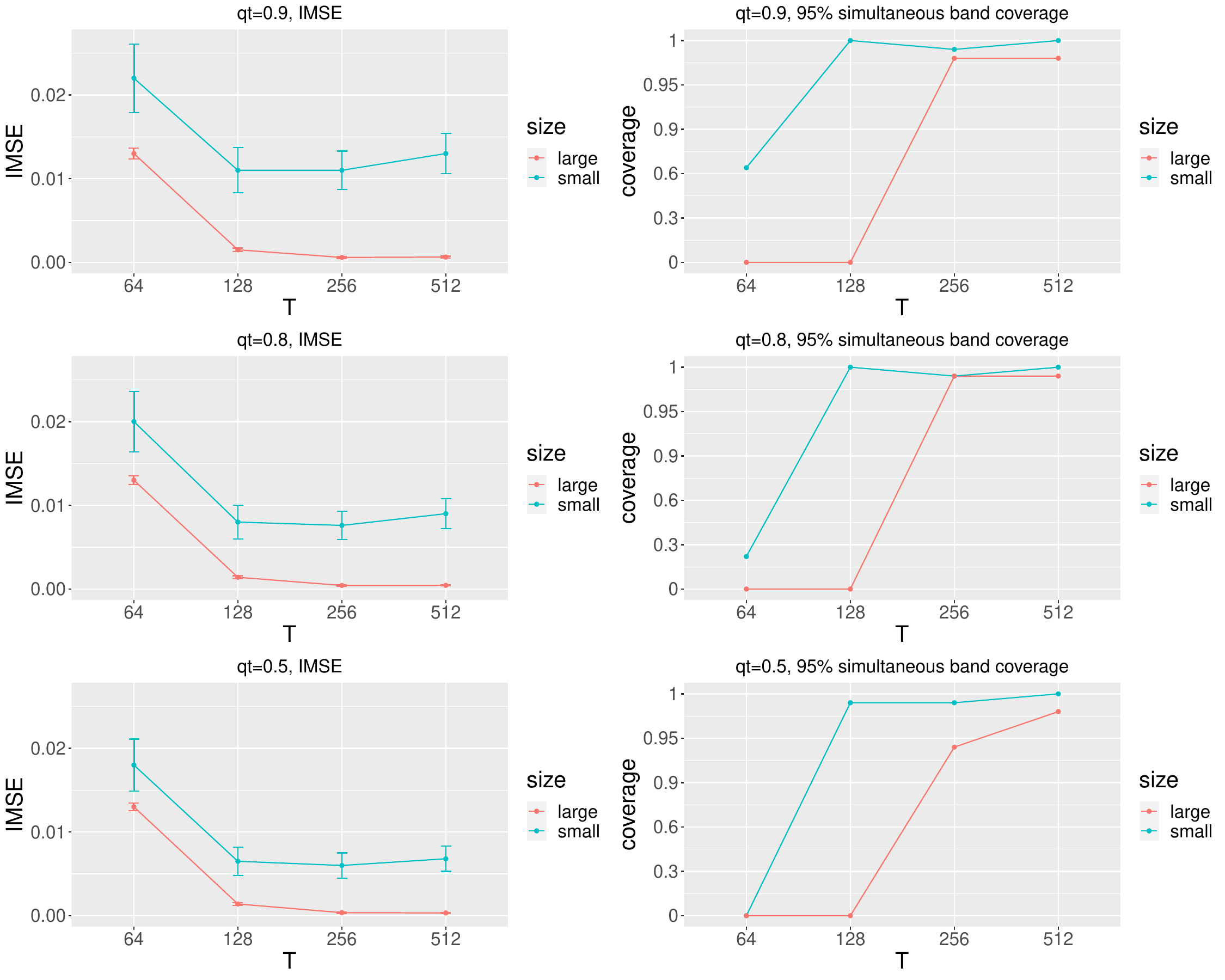}
    \vspace*{-3mm}
    \caption{Simulation results of the linear interpolation-based approach are shown for $\beta_1^{\tau}(t)$ in the top three rows, and $\beta_2^{\tau}(t)$ in the bottom three rows. Each summary measure is averaged across 100 replicate datasets. The error bars for IMSE denote the standard errors over the 100 replicates.}  
    \label{fig:figure2}
\end{figure}

In terms of estimation performance, for both $\beta_1^{\tau}(t)$ and $\beta_2^{\tau}(t)$, we can clearly observe a ``turning point'' in the estimation accuracy curve -- the IMSE drops drastically as $T$ increases before this turning point and then reaches a plateau or shows very mild variations afterwards. This applies to each quantile level $\tau$ and sample size $n$. For example, at $n = 8000$, the IMSE for estimating $\beta_1^{0.9}(t)$ drops from 0.007 at $T=64$ to 0.001 at $T=128$, then stabilizes as $T$ further increases. This observation agrees well with our theoretical results in Section \ref{splines} that suggest a phase transition in the rate of convergence for functional coefficient estimation, which is of the order $T^{-2}$ when $T = O(n^{1/2})$ and is of the order $1/n$ and unaffected by $T$ when $T \gg n^{1/2}$ in the case of linear interpolation. As expected, for each functional coefficient, the estimation accuracy improves as $n$ increases from 400 to 8000 for the same quantile level and $T$, and also improves as the quantile level gets closer to the median for the same $T$ and $n$. 

In terms of inferential performance, the coverage probability of the 95\% simultaneous confidence band also exhibits a turning point as a function of $T$, in the sense that the coverage probability sharply rises with increasing $T$ before this point and then stabilizes around or above the nominal level beyond this point. This applies to each functional parameter, quantile level and sample size. For example, at $n = 8000$, the 95\% simultaneous band coverage for $\beta_1^{0.9}(t)$ rises from 0.05 at $T=128$ to 0.96 at $T=256$ and remains stable afterwards. This observation is consistent with Theorem \ref{thm6}, which requires $T \gg n^{1/2}$ for the simultaneous confidence band to be asymptotically valid. Interestingly, when $T$ is large enough (e.g., $T=512$), the coverage probability is always closer to the 95\% nominal level for $n=8000$ than $n=400$, for each functional parameter and quantile level. This suggests that the simultaneous confidence band produced by the weighted bootstrap method tends to be conservative when the sample size is relatively small, but generally achieves nominal coverage for sufficiently large sample sizes, which empirically validates our theoretical results.   

Simulations were run on a $64$-bit operating system with 2 processors and an RAM of $24$GB. The running time averaged across 100 replicates and 3 quantile levels is presented in Table \ref{table:2} for each combination of $(n, T)$, suggesting that the running time is linear in $T$. For the Bayesian approach proposed by \cite{liu2019function}, it takes an average of 75 minutes to perform FQR on a dataset with $n=400$ and $T=300$, and the running time is quadratic in $T$. This leads to an estimated running time of more than 36 hours on a dataset with $n=8000$ and $T=512$ for the Bayesian approach; for this reason, we did not implement it as a comparator in the simulation study. For our proposed distributed strategy, we remark that parallel computing can be used across locations $t$ and across bootstrap iterations for the weighted bootstrap procedure for further acceleration, so it is scalable to much larger datasets than considered here. 

\begin{table}[h!]
\caption{Average computing time (in minutes) over 100 replicate datasets and 3 quantile levels for each combination of $(n, T)$.}
\label{table:2}
\setlength{\tabcolsep}{32pt}
\begin{tabular*}{\linewidth}{ @{} *{5}c @{}}
\toprule
& $T=64$ & $T=128$ & $T=256$ & $T=512$ \\[3pt]
\midrule
$n=400$ & 3 & 6 & 12 & 24 \\[2pt]
$n=8000$ & 30 & 60 & 120 & 240 \\
\bottomrule
\end{tabular*}
\end{table}

\section{Real Data Application} \label{data}
Mass spectrometry is a common analytical technique to simultaneously measure the expressions of many proteins in a biological sample, and produces a mass spectrum which is a highly spiky function quantifying the relative abundance of proteins at the mass-to-charge ratio $t$ in the given sample by the spectral intensity $y(t)$. We analyzed a mass spectrometry dataset \citep{koomen2005plasma} generated based on blood serum samples from 139 pancreatic cancer patients and 117 normal controls to identify proteins which have differential abundance between cancer and normal samples and might serve as potential proteomic biomarkers of pancreatic cancer.  

Nearly all existing statistical methods developed for differential expression detection in mass spectrometry data perform mean regression to compare mean protein expression levels across groups. However, given that cancer is characterized by inter-patient heterogeneity, proteomic biomarkers might be aberrantly expressed in only a small proportion of the cancer cohort compared to the normal cohort. In such cases, cancer-normal differences may be difficult to detect when comparing the means, and might be more easily detected by quantile-based methods that can look in the tails of the distributions. 

To obtain a more complete picture of protein expression differences between the cancer and normal cohorts in this dataset, we performed FQR at $\tau=0.5, 0.8, 0.9$ to identify spectral regions that significantly differ between the two cohorts at each quantile level. Since spectral intensities can span orders of magnitude across the entire range of mass-to-charge ratio, we took $\log_2$ transformation on the functional responses so that an absolute difference of one on the transformed scale corresponds to a two-fold change on the original scale. Each mass spectrum $y_i(t)$ is observed on the same set of $T=3,279$ spectral locations between $8,000$ and $16,000$ Daltons, and is regressed upon the covariate vector $x_i=(1, x_{i1})'$ where $x_{i1}$ denotes cancer $(=1)$ or normal $(=-1)$ status. We also performed functional mean regression by fitting the wavelet-based functional mixed model (WFMM, \cite{morris2006wavelet}) to this dataset and compared to FQR model fitting results, to illustrate the potential for FQR to provide new biological insights in mass spectrometry data analysis. More specifically, we fit the model $y_i(t) = x_i' \: \bm{\beta}^{\text{mean}}(t) + \epsilon_i(t)$, where $\epsilon_i(t)$ is assumed to be a zero-mean Gaussian process. The functional parameters of interest are $\beta_1^{\tau}(t)$ and $\beta_1^{\text{mean}}(t)$, which respectively model the $\log_2$ fold change of protein expression levels in the $\tau$th quantile and the mean between the cancer and normal cohorts at the spectral location $t$, and are referred to as the cancer main effect functions. 

Although this mass spectrometry dataset had been studied in \citet{liu2019function}, in the current work we focused on analyzing a non-overlapping and much broader spectral region ([$8000$D, $16000$D] versus [$5000$D, $8000$D]) where the number of observations $T$ per curve is roughly twice as many as that in \citet{liu2019function} ($3,279$ versus $1,659$). Thanks to the use of the distributed strategy, under the same computer setting as in Section \ref{simulations}, it took just 2 hours for the linear interpolation-based approach to perform FQR for each quantile level, which is computationally much more efficient than the Bayesian approach that would need at least 18 hours to perform the same task.

Figure \ref{fig:figure3} displays the estimate of $\beta_1^{\tau}(t)$ for $\tau=0.5, 0.8, 0.9$ and $\beta_1^{\text{mean}}(t)$, along with 95\% pointwise and simultaneous bands. Notably, the estimates of $\beta_1^{0.8}(t)$ and $\beta_1^{0.9}(t)$ are clearly greater in magnitude than those of $\beta_1^{0.5}(t)$ and $\beta_1^{\text{mean}}(t)$ in the spectral region [$11500$D, $12500$D]. The proteins corresponding to such locations are differentially expressed between the cancer and normal cohorts most prominently in the upper quantiles, indicating that they might be over-expressed in only a subset of cancer patients and reflect key aspects of the underlying biology of these patients' tumors. These proteins might serve as potential biomarkers of pancreatic cancer and warrant further investigation. In particular, the number of locations where the estimated cancer main effect function exceeds 0.5 (corresponding to 2-fold change between the two cohorts) in this region is respectively 302, 183, 1 and 79 for $\beta_1^{0.9}(t), \beta_1^{0.8}(t), \beta_1^{0.5}(t), \beta_1^{\text{mean}}(t)$; interestingly, this number is much larger for $\beta_1^{\text{mean}}(t)$ than $\beta_1^{0.5}(t)$, suggesting that functional median regression provides more robust estimation than functional mean regression when we are interested in comparing the average protein expression levels between the two cohorts.   

\begin{figure}[h]
    \centering
    \includegraphics[width=\textwidth]{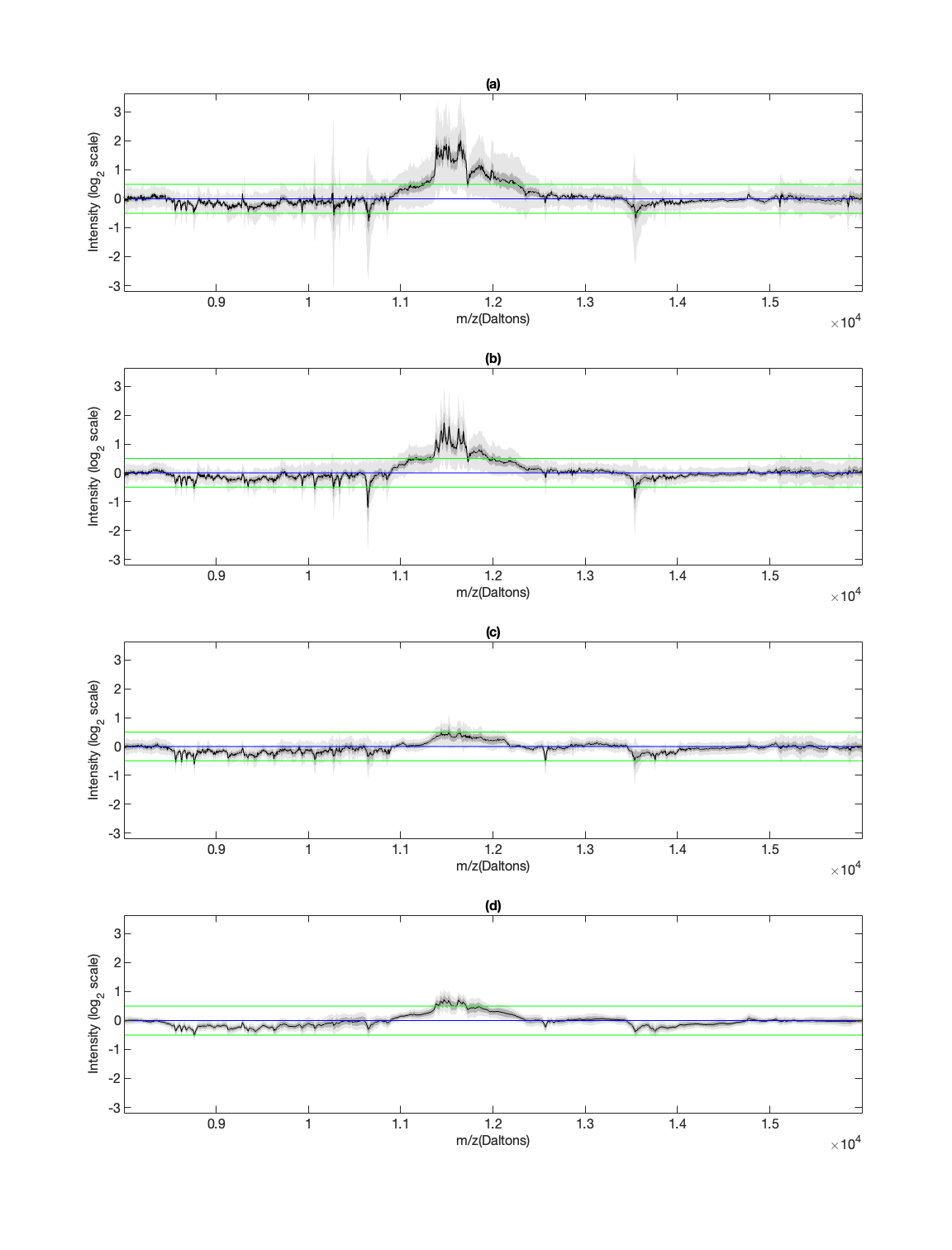}
    \vspace*{-25mm}
    \caption{Estimated cancer main effect functions $\beta_1^{\tau}(t)$ at $\tau = 0.9$ in (a),  $\tau = 0.8$ in (b) and $\tau = 0.5$ in (c) by the linear interpolation-based approach, and $\beta_1^{\text{mean}}(t)$ in (d) by the WFMM approach. In each subfigure, the functional coefficient estimate is shown on $\log_2$ scale, along with the 95\% pointwise (simultaneous) confidence band shown in dark (light) gray. The green horizontal lines correspond to $2$-fold change in the protein expression levels between the cancer and normal groups.}
    \label{fig:figure3}
\end{figure}

\section{Discussion} \label{discussion}
We have introduced a scalable distributed strategy to perform function-on-scalar quantile regression, where we first run separate quantile regression at each sampling location to compute the $M$-estimators, then make use of these $M$-estimators and their uncertainty estimates to do estimation and inference for the entire coefficient functions. As one concrete example, we propose to estimate the entire coefficient functions by linearly interpolating the $M$-estimators, which is shown to be rate optimal, and construct asymptotically valid simultaneous confidence bands. 

Theorem \ref{thm2} presents a strong Gaussian approximation result for the asymptotic joint distribution of $M$-estimators on the discrete sampling grid, which builds a theoretical foundation upon which various approaches to modeling the coefficient function other than the interpolation-based strategy can be developed. While the interpolation-based estimator possesses minimax optimality, in practice, more accurate estimation could be achieved by alternative modeling approaches. Also, our numerical studies reveal that the simultaneous confidence bands constructed using the weighted bootstrap method achieve nominal coverage when the sample size is sufficiently large, but tend to be overly conservative for smaller sample sizes, suggesting that the finite sample performance of simultaneous inference could be further improved. 

We briefly discuss a promising alternative strategy that we are currently exploring. Based on equation \eqref{eq:16}, $\sqrt{n} \left(\hat{\mu}_n(\bm{t}) - \mu(\bm{t}) \right) \! \mid \! \mu(\bm{t}) \; \sim \;  \text{MVN}(\bm{0}, \Sigma_{\bm{t}})$
up to some negligible error that is uniformly bounded by $o_p(1)$,  where $\Sigma_{\bm{t}}$ is the $T \times T$ covariance matrix defined in equation \eqref{eq:17}. This asymptotically valid Gaussian likelihood for $\hat{\mu}_n(\bm{t})$ conditioning on $\mu$ motivates us to adopt a Bayesian framework to model the coefficient function $\mu$ by placing a prior on it.  A Bayesian framework naturally incorporates our prior knowledge about characteristics of the functional responses through the choice of an appropriate prior for $\mu$, and yields uncertainty quantification based on posterior distributions in addition to point estimates. For example, for relatively smooth and regular $\mu$, we can use a Gaussian process prior with a squared exponential kernel; for spiky and spatially heterogeneous $\mu$, we can represent $\mu$ with wavelet basis functions and place a shrinkage prior on the wavelet basis coefficients. This Bayesian approach can properly take into account the covariance structure of the $M$-estimators when smoothing coefficient functions, in the sense that the heteroscedastic uncertainty across $t$ in the functional response is learned in the pointwise quantile regression step and then used to inform the functional coefficient regularization. Our preliminary numerical studies suggest that the Bayesian approach results in more adaptive smoothing and improved small sample properties than the interpolation approach. A theoretical investigation of the Bayesian approach would be beneficial but also challenging, which we reserve for future work. Future research topics also include explorations of alternative modeling approaches to FQR.

\section*{Acknowledgements}
This research was supported by the grants R01-CA178744 and R01-CA244845 from the National Cancer Institute, 1550088 from the National Science Foundation, 1R24MH117529 of the BRAIN Initiative of the United States National Institutes of Health, and an ORAU Ralph E. Powe Junior Faculty Enhancement Award.

\clearpage

\bibliographystyle{apalike}
\bibliography{references}

\begin{thebibliography}{}

\bibitem[Arcones, 1996]{arcones1996bahadur}
Arcones, M.~A. (1996).
\newblock The {B}ahadur-{K}iefer representation of {L}p regression estimators.
\newblock {\em Econometric Theory}, 12(2):257--283.

\bibitem[Belloni et~al., 2019]{belloni2019conditional}
Belloni, A., Chernozhukov, V., Chetverikov, D., and Fern{\'a}ndez-Val, I.
  (2019).
\newblock Conditional quantile processes based on series or many regressors.
\newblock {\em Journal of Econometrics}, 213(1):4--29.

\bibitem[Cai and Yuan, 2011]{cai2011optimal}
Cai, T.~T. and Yuan, M. (2011).
\newblock Optimal estimation of the mean function based on discretely sampled
  functional data: Phase transition.
\newblock {\em The Annals of Statistics}, 39(5):2330--2355.

\bibitem[Cardot et~al., 2005]{Cardot+:05}
Cardot, H., Crambes, C., and Sarda, P. (2005).
\newblock Quantile regression when the covariates are functions.
\newblock {\em Nonparametric Statistics}, 17(7):841--856.

\bibitem[Chao et~al., 2017]{chao2017quantile}
Chao, S.-K., Volgushev, S., and Cheng, G. (2017).
\newblock Quantile processes for semi and nonparametric regression.
\newblock {\em Electronic Journal of Statistics}, 11(2):3272--3331.

\bibitem[Chen and M{\"u}ller, 2012]{chen+muller2012}
Chen, K. and M{\"u}ller, H.-G. (2012).
\newblock Conditional quantile analysis when covariates are functions, with
  application to growth data.
\newblock {\em Journal of the Royal Statistical Society: Series B (Statistical
  Methodology)}, 74(1):67--89.

\bibitem[DeVore and Lorentz, 1993]{devore1993constructive}
DeVore, R.~A. and Lorentz, G.~G. (1993).
\newblock {\em Constructive Approximation}.
\newblock Springer, Berlin.

\bibitem[He and Shao, 1996]{he1996general}
He, X. and Shao, Q.-M. (1996).
\newblock A general {B}ahadur representation of {M}-estimators and its
  application to linear regression with nonstochastic designs.
\newblock {\em The Annals of Statistics}, 24(6):2608--2630.

\bibitem[Kato, 2012]{Kato:12}
Kato, K. (2012).
\newblock Estimation in functional linear quantile regression.
\newblock {\em The Annals of Statistics}, 40(6):3108--3136.

\bibitem[Koenker and Bassett~Jr, 1978]{koenker1978regression}
Koenker, R. and Bassett~Jr, G. (1978).
\newblock Regression quantiles.
\newblock {\em Econometrica}, 46(1):33--50.

\bibitem[Koomen et~al., 2005]{koomen2005plasma}
Koomen, J.~M., Shih, L.~N., Coombes, K.~R., Li, D., Xiao, L., Fidler, I.~J.,
  Abbruzzese, J.~L., and Kobayashi, R. (2005).
\newblock Plasma protein profiling for diagnosis of pancreatic cancer reveals
  the presence of host response proteins.
\newblock {\em Clinical Cancer Research}, 11:1110--1118.

\bibitem[Li et~al., 2021]{li2021inference}
Li, M., Wang, K., Maity, A., and Staicu, A.-M. (2021).
\newblock Inference in functional linear quantile regression.
\newblock {\em arXiv preprint arXiv:1602.08793v2}.

\bibitem[Liu et~al., 2020]{liu2019function}
Liu, Y., Li, M., Morris, J.~S., et~al. (2020).
\newblock Function-on-scalar quantile regression with application to mass
  spectrometry proteomics data.
\newblock {\em Annals of Applied Statistics}, 14(2):521--541.

\bibitem[Meyer et~al., 2015]{meyer2015bayesian}
Meyer, M.~J., Coull, B.~A., Versace, F., Cinciripini, P., and Morris, J.~S.
  (2015).
\newblock Bayesian function-on-function regression for multilevel functional
  data.
\newblock {\em Biometrics}, 71(3):563--574.

\bibitem[Morris, 2015]{morris2015functional}
Morris, J.~S. (2015).
\newblock Functional regression.
\newblock {\em Annual Review of Statistics and Its Application}, 2:321--359.

\bibitem[Morris and Carroll, 2006]{morris2006wavelet}
Morris, J.~S. and Carroll, R.~J. (2006).
\newblock Wavelet-based functional mixed models.
\newblock {\em Journal of the Royal Statistical Society: Series B (Statistical
  Methodology)}, 68(2):179--199.

\bibitem[Powell, 1991]{powell1991estimation}
Powell, J.~L. (1991).
\newblock Estimation of monotonic regression models under quantile
  restrictions.
\newblock In {\em Nonparametric and Semiparametric Methods in Econometrics},
  pages 357--384. Cambridge University Press, Cambridge, UK.

\bibitem[Ramsay and Silverman, 2005]{ramsay2005functional}
Ramsay, J.~O. and Silverman, B.~W. (2005).
\newblock {\em Functional Data Analysis}.
\newblock Springer, New York.

\bibitem[Ramsay and Silverman, 2007]{ramsay2007applied}
Ramsay, J.~O. and Silverman, B.~W. (2007).
\newblock {\em Applied Functional Data Analysis: Methods and Case Studies}.
\newblock Springer, New York.

\bibitem[van~der Vaart and Wellner, 1996]{van1996weak}
van~der Vaart, A.~W. and Wellner, J.~A. (1996).
\newblock {\em Weak Convergence and Empirical Processes: With Applications to
  Statistics}.
\newblock Springer.

\bibitem[Wang et~al., 2009]{wang2009quantile}
Wang, H.~J., Zhu, Z., and Zhou, J. (2009).
\newblock Quantile regression in partially linear varying coefficient models.
\newblock {\em The Annals of Statistics}, 37(6B):3841--3866.

\bibitem[Wang et~al., 2016]{wang2016functional}
Wang, J.-L., Chiou, J.-M., and M{\"u}ller, H.-G. (2016).
\newblock Functional data analysis.
\newblock {\em Annual Review of Statistics and Its Application}, 3:257--295.

\bibitem[Zhang et~al., 2009]{zhang2009review}
Zhang, J., Gonzalez, E., Hestilow, T., Haskins, W., and Huang, Y. (2009).
\newblock Review of peak detection algorithms in liquid-chromatography-mass
  spectrometry.
\newblock {\em Current genomics}, 10(6):388.

\bibitem[Zhang et~al., 2021]{zhang2021high}
Zhang, Z., Wang, X., Kong, L., and Zhu, H. (2021).
\newblock High-dimensional spatial quantile function-on-scalar regression.
\newblock {\em Journal of the American Statistical Association}, pages 1--16.

\end{thebibliography}


\begin{thebibliography}{}

\bibitem[Belloni et~al., 2019]{belloni2019conditional}
Belloni, A., Chernozhukov, V., Chetverikov, D., and Fern{\'a}ndez-Val, I.
  (2019).
\newblock Conditional quantile processes based on series or many regressors.
\newblock {\em Journal of Econometrics}, 213(1):4--29.

\bibitem[Cai and Yuan, 2011]{cai2011optimal}
Cai, T.~T. and Yuan, M. (2011).
\newblock Optimal estimation of the mean function based on discretely sampled
  functional data: Phase transition.
\newblock {\em The Annals of Statistics}, 39(5):2330--2355.

\bibitem[Chao et~al., 2017]{chao2017quantile}
Chao, S.-K., Volgushev, S., and Cheng, G. (2017).
\newblock Quantile processes for semi and nonparametric regression.
\newblock {\em Electronic Journal of Statistics}, 11(2):3272--3331.

\bibitem[DeVore and Lorentz, 1993]{devore1993constructive}
DeVore, R.~A. and Lorentz, G.~G. (1993).
\newblock {\em Constructive Approximation}.
\newblock Springer, Berlin.

\bibitem[Kley et~al., 2016]{kley2016quantile}
Kley, T., Volgushev, S., Dette, H., and Hallin, M. (2016).
\newblock Quantile spectral processes: Asymptotic analysis and inference.
\newblock {\em Bernoulli}, 22(3):1770--1807.

\bibitem[Powell, 1991]{powell1991estimation}
Powell, J.~L. (1991).
\newblock Estimation of monotonic regression models under quantile
  restrictions.
\newblock In {\em Nonparametric and Semiparametric Methods in Econometrics},
  pages 357--384. Cambridge University Press, Cambridge, UK.

\bibitem[van~der Vaart and Wellner, 1996]{van1996weak}
van~der Vaart, A.~W. and Wellner, J.~A. (1996).
\newblock {\em Weak Convergence and Empirical Processes: With Applications to
  Statistics}.
\newblock Springer.

\bibitem[Volgushev et~al., 2019]{volgushev2019distributed}
Volgushev, S., Chao, S.-K., and Cheng, G. (2019).
\newblock Distributed inference for quantile regression processes.
\newblock {\em The Annals of Statistics}, 47(3):1634--1662.

\end{thebibliography}

\end{document}

% --- supplement: supplement.tex ---

\maketitle

\setlength\abovedisplayskip{5pt}
\setlength\belowdisplayskip{-5pt}
\setlength\abovedisplayshortskip{3pt}
\setlength\belowdisplayshortskip{-5pt}

The supplement is organized as follows. In Section \ref{scb}, we describe how to estimate the quantities $\sigma_n(t)$ and $C_n(\alpha)$ which are needed to construct the simultaneous confidence band for the functional parameter $\mu$ using Theorem 5. In Section \ref{indep}, we present additional simulation results. In Section \ref{lemmas}, we state the lemmas that are needed to prove the theorems in the paper. We give proofs to the theorems in Section \ref{theorems} and proofs to the lemmas and propositions in Section \ref{lemma_proof}.

\section{Construction of Simultaneous Confidence Band for $\mu(t)$} \label{scb}
\subsection{Estimation of $\sigma_n(t)$}
The expression of $\sigma_n(t)$, as defined in (4.9),  involves $J_{\tau}(t)$ which we estimate using the Powell sandwich method \citep{powell1991estimation} for $t \in \bm{t}$, followed by linear interpolation of Powell's estimator for $t \in \mathcal{T}$.
\begin{align} \label{eq:27}
\hat{J}_{\tau}(t_l) \coloneqq  \frac{1}{2 h_n} \mathbb{E}_n \left[\mathbf{1}\{ | Y_i(t_l) \leq X_i'\hat{\bm{\beta}}_{\tau}(t_l) | \leq h_n \} X_i X_i' \right],  \quad  l = 1, 2, \dots, T, \\[-5pt]
\end{align}
where $h_n$ is some bandwidth parameter, and 
\begin{align} \label{eq:29}
\hat{J}_{\tau}(t) \coloneqq & \; \frac{t_{l+1}- t}{t_{l+1}-t_l} \hat{J}_{\tau}(t_l) +  \frac{t- t_l}{t_{l+1}-t_l} \hat{J}_{\tau}(t_{l+1}), \quad \forall \; t \in [t_l, t_{l+1}], \;\; l=1, 2, \dots, T-1. \\[-5pt]
\end{align}
Under the conditions $h_n = o(1)$ and $\log T \log n = o(n h_n)$, we show in Lemma \ref{lemma10} that $\hat{J}_{\tau}(t)$ is a consistent estimator of $J_{\tau}(t)$ uniformly over $t \in \mathcal{T}$.  Substituting $\hat{J}_{\tau}(t)$ into $J_{\tau}(t)$,  we estimate $\sigma_n(t)$ by $\hat{\sigma}_n(t) \coloneqq  \left( \tau (1-\tau) \bm{a}' \hat{J}_{\tau}(t)^{-1} \:\! \mathbb{E}_n \left[X_i X_i' \right]  \:\!  \hat{J}_{\tau}(t)^{-1} \bm{a} \right)^{1/2}$.

\subsection{Estimation of $C_n(\alpha)$}
To estimate $C_n(\alpha)$ defined in (4.10) for $\alpha \in (0, 1)$, we adopt the weighted bootstrap method proposed in \citet{belloni2019conditional}. In particular, we first draw i.i.d positive weights $\omega_1, \omega_2, \dots, \omega_n$ independently of the data, and solve the quantile regression problem based on the weighted data $\left\{(\omega_i X_i, \; \omega_i Y_i(t_l)_{l=1}^{T})\right\}_{i=1}^{n}$ for $t \in \bm{t}$.
\begin{align} \label{eq:30}
\hat{\bm{\beta}}^{b}_{\tau}(t) \coloneqq \argmin_{\bm{\beta}\in \mathbb{R}^d} \sum_{i=1}^{n} \rho_{\tau}( \omega_i Y_i(t) - \omega_i X_i'\bm{\beta}) = \argmin_{\bm{\beta}\in \mathbb{R}^d} \sum_{i=1}^{n} \omega_i \rho_{\tau}(Y_i(t) - X_i'\bm{\beta}). \\[-5pt]
\end{align}
Then we define the interpolation-based estimator $\hat{\mu}_n(t)^{b-LI}$ on $t \in \mathcal{T}$ for a given linear combination $\bm{a} \in \mathcal{S}^{d-1}$, \\[-5pt]
\begin{align} \label{eq:31}
\!\!\!\!\!\!\!\! \hat{\mu}_n(t)^{b-LI} \coloneqq \frac{t_{l+1}- t}{t_{l+1}-t_l} \bm{a}' \hat{\bm{\beta}}^{b}_{\tau}(t_l) +  \frac{t- t_l}{t_{l+1}-t_l} \bm{a}' \hat{\bm{\beta}}^{b}_{\tau}(t_{l+1}), \; \forall \; t \in [t_l, t_{l+1}], \; l=1, \dots, T-1. \\[-5pt]
\end{align}

Under certain conditions on the distributions for weights $\left(\omega_i \right)_{i=1}^n$, we show in Proposition \ref{prop1} that the process $\sqrt{n} \left( \hat{\mu}_n(\cdot)^{b-LI} - \hat{\mu}_n(\cdot)^{LI} \right)$ is coupled with a sequence of centered Gaussian processes $\tilde{G}^{b}_n(\cdot)$ that have the same covariance function as $\tilde{G}_n(\cdot)$ for each $n$, conditional on $\left(X_i\right)_{i=1}^{n}$.  

\begin{proposition}[Gaussian Coupling for Weighted Data]
\label{prop1}
Suppose $\omega_1, \dots, \omega_n$ are i.i.d draws from a distribution for the random variable $\omega$ that satisfies $\omega > 0, \; \mathbb{E} \left[\omega \right] = 1, \; \mathbb{E} \left[\omega^2 \right] = 2, \; \mathbb{E} \left[\omega^4 \right] \lesssim 1, \; \max_{1 \leq i \leq n} \omega_i \lesssim_{P} \log n$.  Under the conditions assumed for Theorem 2, 
\begin{align}\label{eq:32}
\sqrt{n} \left( \hat{\mu}_n(t)^{b-LI} - \hat{\mu}_n(t)^{LI} \right) = \tilde{G}^{b}_n(t) + \tilde{r}^{b}_n(t), \quad t \in \mathcal{T}, \\[-5pt]
\end{align}
where $\tilde{G}^{b}_n(\cdot)$ is a process on $\mathcal{T}$ that, conditional on $\left(X_i\right)_{i=1}^{n}$, is zero-mean Gaussian with almost surely continuous sample paths and the covariance function given by (3.3), and $\sup_{t \in \mathcal{T}} \left|\tilde{r}^{b}_n(t)\right| = o_p(1)$.
\end{proposition}

Given Proposition \ref{prop1}, we can estimate $C_n(\alpha)$ by $C^{b}_n(\alpha)$, which is defined such that
\begin{align}\label{eq:34}
P\left( \sup_{t \in \mathcal{T}} \left| \sqrt{n} \: \frac{ \hat{\mu}_n(t)^{b-LI} - \hat{\mu}_n(t)^{LI}}{\hat{\sigma}_n(t)} \right| \leq C^{b}_n(\alpha) \right) = 1 - \alpha. 
\end{align}

\bigskip

\section{Additional Details about the Simulation Study} \label{indep}
In this section, we present additional simulation results to assess the finite sample performance of the linear interpolation-based approach. 

Functional data were simulated according to model (5.1) in the main paper. The parameter specifications of the data generating model are essentially the same as those presented in the main paper, except that the i.i.d. noise term $\epsilon_i(t)$ is independent across $t$ with a marginal standard normal distribution to mimic the scenario where the measurement errors of functional observations are uncorrelated. Under this scenario, the residual process $\eta_i(t)$ in model (1.1) no longer possesses almost surely continuous sample paths, thus Assumption (A6) that regularizes the residual process using its zero-crossing behavior is violated. 

Similar to the simulation study in the main paper, we also considered two different sample sizes $n = 400, 8000$, and four different sampling frequencies $T = 64, 128, 256, 512$ which are equally spaced on $\mathcal{T}$. For each combination of $(n, T)$, we simulated $100$ replicate datasets and performed FQR at $\tau = 0.5, 0.8, 0.9$. For functional coefficients $\beta_1^{\tau}(t)$ and $\beta_2^{\tau}(t)$, simulation results averaged across 100 replicates are displayed in Figure~\ref{fig:figureS1}, which are qualitatively similar to the simulation results presented in Figure 2 in the main paper, where the Gaussian noise term $\epsilon_i(t)$ is an $\text{AR}(1)$ process with lag $1$ autocorrelation $\rho=0.5$. This suggests that the linear interpolation-based approach can still achieve satisfactory estimation and inferential performance in the presence of uncorrelated measurement errors. 

\begin{figure}[h]
    \centering
    \includegraphics[width=0.85\textwidth]{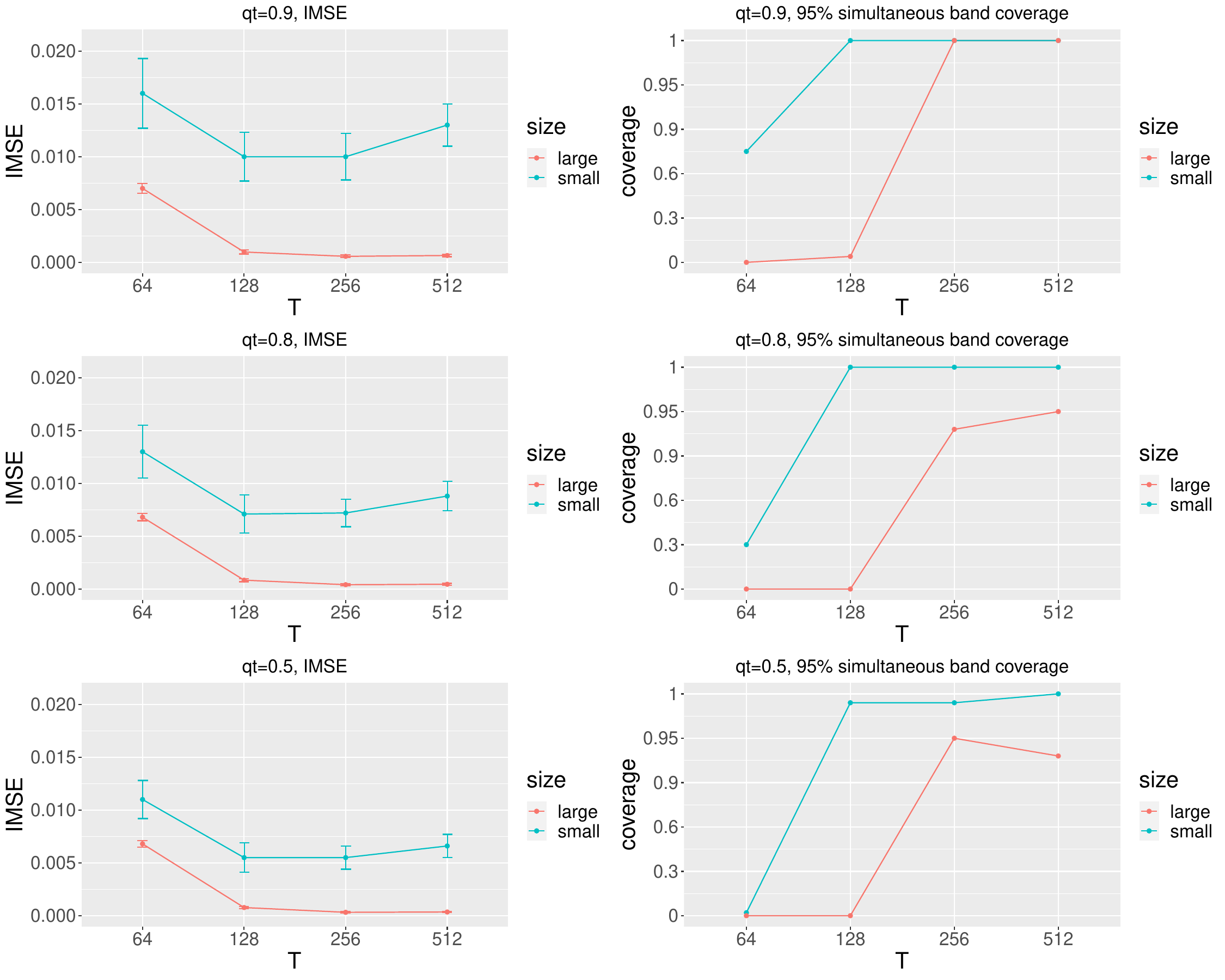}
    \includegraphics[width=0.85\textwidth]{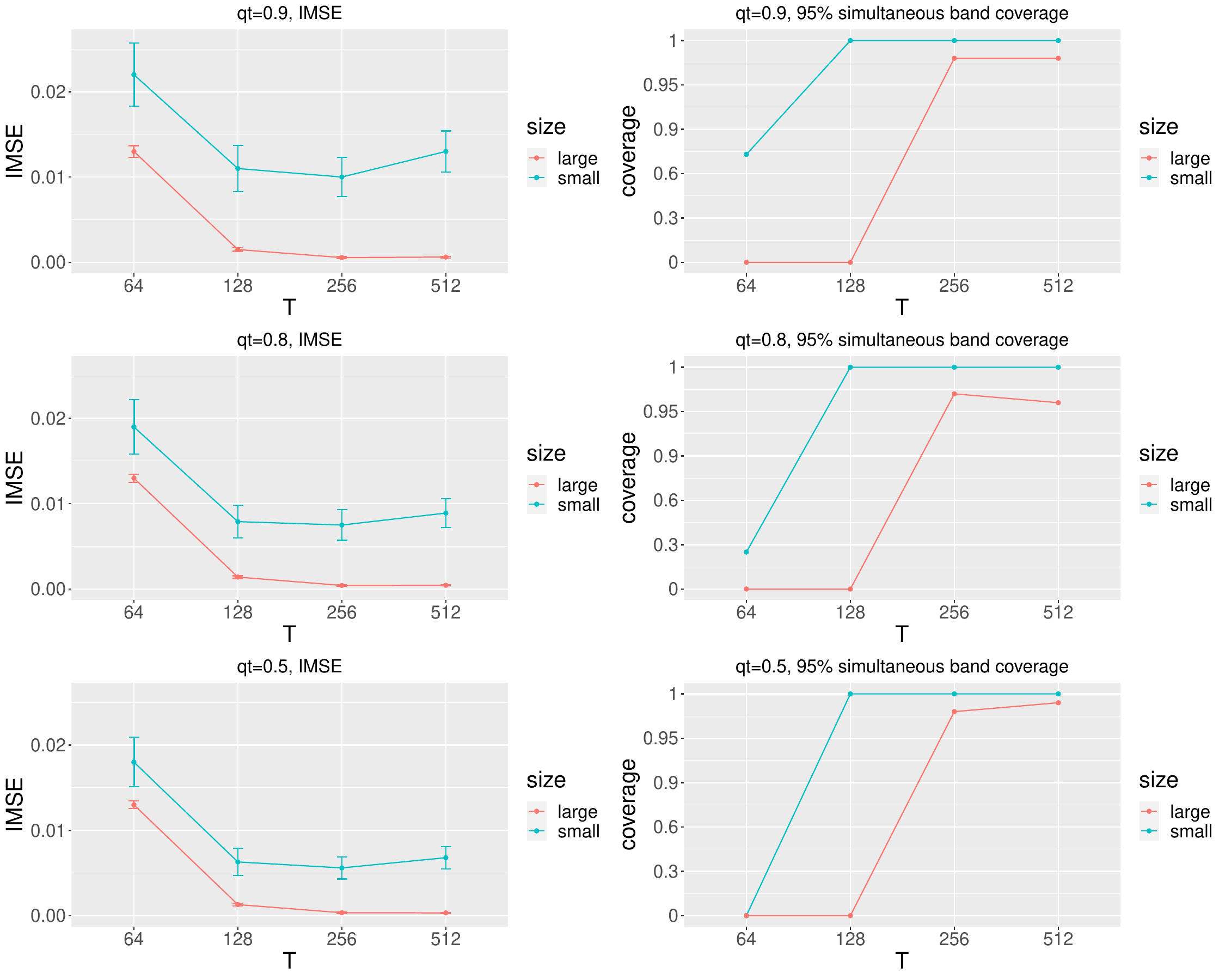}
    \caption{Simulation results of the linear interpolation-based approach are shown for $\beta_1^{\tau}(t)$ in the top three rows, and $\beta_2^{\tau}(t)$ in the bottom three rows. Each summary measure is averaged across 100 replicate datasets. The error bars for IMSE denote the standard errors over the 100 replicates.}  
    \label{fig:figureS1}
\end{figure}

\clearpage

\section{Statement of Lemmas} \label{lemmas}
Before stating the lemmas that will be needed to prove the theorems in the paper, we first define two classes of functions:
\begin{align} \label{eq:51}
\mathcal{G}_1 \coloneqq & \; \left\{ (X, \bm{Y}) \mapsto (\bm{u}'X)(\bm{1}\{Y(t) \leq X'\bm{\beta}\} - \tau) \; | \; \bm{u} \in \mathcal{S}^{d-1}, \; \bm{\beta} \in \mathbb{R}^{d}, \;  1 \leq t \leq T \right\}, \\
\mathcal{G}_2(\delta) \coloneqq & \; \left\{(X, \bm{Y}) \mapsto (\bm{u}'X)(\bm{1}\{Y(t) \leq X'\bm{\beta}_1\} - \bm{1}\{Y(t) \leq X'\bm{\beta}_2\}) \mid  \right. \\
\mathrel{\phantom{=}} & \; \left. \bm{u} \in \mathcal{S}^{d-1}, \; \bm{\beta}_1 \in \mathbb{R}^{d}, \; \bm{\beta}_2 \in \mathbb{R}^{d}, \; \lVert \bm{\beta}_1-\bm{\beta}_2 \rVert \leq \delta,  \;  1 \leq t \leq T \right\},\\[-5pt]
\end{align}
where $\bm{Y}$ denotes the $T \times 1$ vector with the $t^{th}$ element equal to $Y(t)$.

\bigskip

\begin{lemma} 
\label{lemma1}
Under Assumptions (A1)-(A3), for any $\delta > 0$, 
\begin{align}
\sup_{t \in \bm{t}} \sup_{\lVert \bm{\beta}-\bm{\beta}_{\tau}(t) \rVert \leq \delta} \left\lVert \vartheta(\bm{\beta}; t, \tau) - \vartheta(\bm{\beta}_{\tau}(t); t, \tau) - J_{\tau}(t)(\bm{\beta}-\bm{\beta}_{\tau}(t)) \right\rVert \leq \lambda_{\max}(\mathbb{E}[XX'])\overline{f'}\delta^2 \xi, \\[-5pt]
\end{align}
where $\vartheta(\bm{\beta}; t,\tau) \coloneqq \mathbb{E}\left[\psi(Y(t), X; \bm{\beta}, \tau)\right]$.
\end{lemma}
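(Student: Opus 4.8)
The plan is to reduce the claim to a second-order Taylor expansion of the conditional c.d.f.\ and then to bound the quadratic remainder uniformly, using the almost-sure bounds supplied by Assumptions (A1)--(A3).

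First I would rewrite $\vartheta$ in closed form. Conditioning on $X$ and using the tower property, $\vartheta(\bm{\beta};t,\tau) = \mathbb{E}[X(\mathbf{1}\{Y(t)\le X'\bm{\beta}\}-\tau)] = \mathbb{E}[X(F_{Y(t)|X}(X'\bm{\beta}\mid X)-\tau)]$. At the truth $\bm{\beta}_\tau(t)$ we have $X'\bm{\beta}_\tau(t)=Q(X;t,\tau)$, so $F_{Y(t)|X}(X'\bm{\beta}_\tau(t)\mid X)=\tau$ almost surely and hence $\vartheta(\bm{\beta}_\tau(t);t,\tau)=0$; in any case the quantity I must control is the difference $\mathbb{E}[X(F_{Y(t)|X}(X'\bm{\beta}\mid X)-F_{Y(t)|X}(X'\bm{\beta}_\tau(t)\mid X))] - J_\tau(t)(\bm{\beta}-\bm{\beta}_\tau(t))$.

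Next I would Taylor-expand the integrand. Assumption (A2) gives that $F_{Y(t)|X}(\cdot\mid X)$ is twice differentiable with derivatives $f$ and $f'$, so for each fixed $X$ there is an intermediate point $\xi_X$ between $X'\bm{\beta}_\tau(t)$ and $X'\bm{\beta}$ with
$$F_{Y(t)|X}(X'\bm{\beta}\mid X)-F_{Y(t)|X}(X'\bm{\beta}_\tau(t)\mid X) = f_{Y(t)|X}(X'\bm{\beta}_\tau(t)\mid X)\,X'(\bm{\beta}-\bm{\beta}_\tau(t)) + \tfrac12 f'_{Y(t)|X}(\xi_X\mid X)\,\big(X'(\bm{\beta}-\bm{\beta}_\tau(t))\big)^2.$$
Multiplying by $X$ and taking expectations, the linear term produces exactly $\mathbb{E}[XX' f_{Y(t)|X}(X'\bm{\beta}_\tau(t)\mid X)](\bm{\beta}-\bm{\beta}_\tau(t)) = J_\tau(t)(\bm{\beta}-\bm{\beta}_\tau(t))$ by the definition of $J_\tau(t)$. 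Thus the quantity whose norm I need to bound is precisely the quadratic remainder $\tfrac12\mathbb{E}[X f'_{Y(t)|X}(\xi_X\mid X)(X'\mathbf{v})^2]$, where $\mathbf{v}\coloneqq\bm{\beta}-\bm{\beta}_\tau(t)$.

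Finally I would bound this remainder by pulling the norm inside the expectation and invoking the uniform bounds. With $\|\mathbf{v}\|\le\delta$,
$$\Big\|\tfrac12\mathbb{E}[X f'_{Y(t)|X}(\xi_X\mid X)(X'\mathbf{v})^2]\Big\| \le \tfrac12\mathbb{E}\big[\|X\|\,|f'_{Y(t)|X}(\xi_X\mid X)|\,(X'\mathbf{v})^2\big] \le \tfrac12\,\xi\,\overline{f'}\,\mathbb{E}[(X'\mathbf{v})^2] \le \tfrac12\,\xi\,\overline{f'}\,\lambda_{\max}(\mathbb{E}[XX'])\,\delta^2,$$
using $\|X\|\le\xi$ and $|f'|\le\overline{f'}$ from (A1)--(A2) together with $\mathbb{E}[(X'\mathbf{v})^2]=\mathbf{v}'\mathbb{E}[XX']\mathbf{v}\le\lambda_{\max}(\mathbb{E}[XX'])\|\mathbf{v}\|^2$. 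This is in fact a factor of two below the asserted bound, so the claim follows with room to spare. The proof has no serious obstacle; the only point requiring care is uniformity, which is automatic because every constant entering the estimate---$\xi$, $\overline{f'}$, and $\lambda_{\max}(\mathbb{E}[XX'])$---is a global quantity free of $t$ (indeed $\overline{f'}$ is itself defined as a supremum over $t\in\mathcal{T}$ in (A2)), and the only $\bm{\beta}$-dependence enters through $\|\mathbf{v}\|\le\delta$; since $\xi_X$ is measurable in $X$ and $f'$ is bounded everywhere, no integrability issue arises in taking the expectation.
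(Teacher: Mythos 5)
Your proof is correct and rests on the same ingredients as the paper's: condition on $X$ to identify the linear term with $J_\tau(t)(\bm{\beta}-\bm{\beta}_\tau(t))$, then bound a quadratic remainder using $\lVert X\rVert\le\xi$, the uniform bound $\overline{f'}$, and $\lambda_{\max}(\mathbb{E}[XX'])$. The execution differs in where the expansion is performed: the paper applies a first-order mean-value expansion to the vector field $\bm{\beta}\mapsto\vartheta(\bm{\beta};t,\tau)$ itself, writing the error as $\bigl(\partial_{\bm{\beta}}\vartheta(\bar{\bm{\beta}}_\tau(t);t,\tau)-\partial_{\bm{\beta}}\vartheta(\bm{\beta}_\tau(t);t,\tau)\bigr)(\bm{\beta}-\bm{\beta}_\tau(t))$ with an intermediate parameter $\bar{\bm{\beta}}_\tau(t)\in\mathbb{R}^d$, and then bounds the difference of Jacobians via the Lipschitz estimate $|f_{Y(t)|X}(X'\bar{\bm{\beta}}_\tau(t)|X)-f_{Y(t)|X}(X'\bm{\beta}_\tau(t)|X)|\le\overline{f'}\,|X'(\bar{\bm{\beta}}_\tau(t)-\bm{\beta}_\tau(t))|$ plus Cauchy--Schwarz; you instead Taylor-expand the scalar conditional c.d.f.\ to second order inside the expectation, with an intermediate point $\xi_X\in\mathbb{R}$ depending on $X$. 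Your route buys two small things: the Lagrange remainder yields the factor $1/2$, so your bound is sharper by a factor of two, and you invoke only the scalar Taylor theorem, whereas the paper's display implicitly uses a mean-value identity for a vector-valued function, which strictly holds only coordinate-by-coordinate with possibly different intermediate points (easily repaired, but your version sidesteps it). One point you should tighten: measurability of $X\mapsto\xi_X$ is not automatic and would require a measurable-selection argument; the clean fix is to avoid $\xi_X$ entirely by observing that the remainder $R(X)\coloneqq F_{Y(t)|X}(X'\bm{\beta}|X)-F_{Y(t)|X}(X'\bm{\beta}_\tau(t)|X)-f_{Y(t)|X}(X'\bm{\beta}_\tau(t)|X)\,X'(\bm{\beta}-\bm{\beta}_\tau(t))$ is manifestly measurable and satisfies the pointwise bound $|R(X)|\le\tfrac12\overline{f'}\,\bigl(X'(\bm{\beta}-\bm{\beta}_\tau(t))\bigr)^2$ by Taylor's theorem, after which your final display goes through unchanged.
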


\begin{lemma}
\label{lemma2}
Let $s_{n,1} \coloneqq \lVert \mathbb{P}_n - P \rVert _{\mathcal{G}_1}$. Under Assumptions (A1)-(A3), for any $\upsilon > 1$, 
\begin{align}
\left\{ \sup_{t \in \bm{t}} \lVert \hat{\bm{\beta}}_{\tau}(t) - \bm{\beta}_{\tau}(t) \rVert \leq \frac{2\upsilon s_{n,1}}{\inf_{t \in \bm{t}}\lambda_{\min}(J_{\tau}(t))} \right\} \supseteq \left\{s_{n,1} < \frac{\inf_{t \in \bm{t}} \lambda^2_{\min}(J_{\tau}(t))}{4 \upsilon \xi \overline{f'} \lambda_{\max}(\mathbb{E}[XX'])} \right\}.  \\[-5pt]
\end{align}
\end{lemma}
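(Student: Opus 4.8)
The plan is to prove this as a purely \emph{deterministic} set inclusion via a convexity localization argument: on the favorable event that controls the empirical fluctuation $s_{n,1}$, I will show that the empirical check-loss objective is strictly increasing in every outward radial direction on the sphere of radius $\delta^\ast \coloneqq 2\upsilon s_{n,1}/\inf_{t\in\bm{t}}\lambda_{\min}(J_\tau(t))$ centered at $\bm{\beta}_\tau(t)$, which pins the minimizer $\hat{\bm{\beta}}_\tau(t)$ strictly inside that ball, uniformly in $t$. First I would decode what $s_{n,1}$ controls: taking $\bm{\gamma}$ over the standard bases $\mathcal{U}^{T}$ selects a single coordinate $t$, while $\sup_{\bm{\alpha}\in\mathcal{S}^{d-1}}|\bm{\alpha}'v| = \lVert v\rVert$, so $s_{n,1}$ dominates $\sup_{t\in\bm{t}}\sup_{\bm{\beta}\in\mathbb{R}^d}\lVert(\mathbb{P}_n - P)\psi(Y(t),X;\bm{\beta},\tau)\rVert$. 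I would also record the two structural facts I will lean on: $\psi(Y(t),X;\bm{\beta},\tau)$ is a subgradient of the convex map $\bm{\beta}\mapsto\mathbb{E}_n[\rho_\tau(Y(t)-X'\bm{\beta})]$, and the population gradient vanishes at the truth, $\vartheta(\bm{\beta}_\tau(t);t,\tau)=0$, which follows from the definition of $\bm{\beta}_\tau(t)$ as the conditional $\tau$th quantile coefficient together with (A2).

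The core computation is a one-sided radial derivative bound. Fix $t$ and a point $\bm{\beta}$ with $\lVert\bm{\beta}-\bm{\beta}_\tau(t)\rVert=\delta$, and set $u=(\bm{\beta}-\bm{\beta}_\tau(t))/\delta$. Writing $\mathbb{P}_n\psi = \vartheta + (\mathbb{P}_n-P)\psi$, I would lower bound the population part using Lemma \ref{lemma1} (whose remainder is exactly $\lambda_{\max}(\mathbb{E}[XX'])\,\overline{f'}\,\delta^2\,\xi$) together with $u'J_\tau(t)u\ge\inf_{t\in\bm{t}}\lambda_{\min}(J_\tau(t))$, and bound the fluctuation part by $s_{n,1}$, giving
\[
u'\mathbb{P}_n\psi(Y(t),X;\bm{\beta},\tau)\ \ge\ \delta\,\underline{\lambda}\ -\ \lambda_{\max}(\mathbb{E}[XX'])\,\overline{f'}\,\xi\,\delta^2\ -\ s_{n,1},
\]
where $\underline{\lambda}\coloneqq\inf_{t\in\bm{t}}\lambda_{\min}(J_\tau(t))>0$ by the remark following (A3). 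Evaluating at $\delta=\delta^\ast$ and invoking the hypothesis $s_{n,1}<\underline{\lambda}^2/\big(4\upsilon\xi\overline{f'}\lambda_{\max}(\mathbb{E}[XX'])\big)$, the three terms become $2\upsilon s_{n,1}$, a quantity strictly below $\upsilon s_{n,1}$, and $s_{n,1}$, so the bound is at least $(\upsilon-1)s_{n,1}>0$. Since this holds for \emph{every} $u\in\mathcal{S}^{d-1}$, the radial derivative is strictly positive everywhere on the sphere of radius $\delta^\ast$, and this strict positivity, like every constant above, is uniform in $t\in\bm{t}$.

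Finally I would invoke convexity to convert positivity of the radial derivative into the containment. Along the ray $r\mapsto\mathbb{E}_n[\rho_\tau(Y(t)-X'(\bm{\beta}_\tau(t)+ru))]$, the objective is convex, so its one-sided derivative is nondecreasing in $r$; strict positivity at $r=\delta^\ast$ therefore forces the objective to be strictly increasing for all $r\ge\delta^\ast$, so the minimum along each ray is attained at radius $\le\delta^\ast$. Because $\hat{\bm{\beta}}_\tau(t)$ is a global minimizer it lies on its own ray at the minimizing radius, whence $\lVert\hat{\bm{\beta}}_\tau(t)-\bm{\beta}_\tau(t)\rVert\le\delta^\ast$; taking the supremum over $t\in\bm{t}$ yields the claimed inclusion. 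The main obstacle is precisely this last step's nonsmoothness: since the check loss is only piecewise differentiable, $\mathbb{P}_n\psi$ is one selected element of the subdifferential (fixing the tie convention $\mathbf{1}\{Y\le X'\bm{\beta}\}$), and I must verify that positivity of $u'\mathbb{P}_n\psi$ genuinely lower-bounds the one-sided directional derivative $\max_{g\in\partial}u'g$. Continuity of $Y(t)\mid X$ from (A2) makes ties negligible and renders this routine, but it is the delicate point. A secondary care point is the degenerate case $s_{n,1}=0$, where $\delta^\ast=0$ and the conclusion instead follows by letting $\delta\downarrow0$ in the radial bound to force $\hat{\bm{\beta}}_\tau(t)=\bm{\beta}_\tau(t)$.
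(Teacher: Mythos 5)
Your proposal is correct and follows essentially the same route as the paper's proof: decompose $\mathbb{P}_n[\psi]$ into the population term (controlled via Lemma \ref{lemma1}, $\vartheta(\bm{\beta}_{\tau}(t);t,\tau)=\bm{0}$, and the uniform eigenvalue lower bound on $J_{\tau}(t)$) plus a fluctuation bounded by $s_{n,1}$, then use convexity of the empirical check loss to localize $\hat{\bm{\beta}}_{\tau}(t)$ within radius $2\upsilon s_{n,1}/\inf_{t\in\bm{t}}\lambda_{\min}(J_{\tau}(t))$, with arithmetic equivalent to the paper's use of $(2\upsilon-1)/\upsilon^{2}>1/\upsilon$ for $\upsilon>1$. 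The only differences are cosmetic: the paper runs the convexity step through the subgradient inequality evaluated at $\hat{\bm{\beta}}_{\tau}(t)$ with $\bm{\delta}$ chosen as the direction of $\hat{\bm{\beta}}_{\tau}(t)-\bm{\beta}_{\tau}(t)$, rather than through monotonicity of radial one-sided derivatives, and the ``delicate point'' you flag is automatic, since $\mathbb{P}_n[\psi]$ is itself an element of the subdifferential under the stated tie convention, so positivity of $u'\mathbb{P}_n[\psi]$ already lower-bounds the one-sided directional derivative (your explicit treatment of the degenerate case $s_{n,1}=0$ is in fact a small point the paper glosses over).
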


\begin{lemma}
\label{lemma3}
For any $\epsilon > 0$, 
\begin{align}
N(\epsilon;\; \mathcal{G}_1, L^2(\mathbb{P}_n)) \leq \left(\frac{A \lVert F_1 \rVert_{L^2(\mathbb{P}_n)}}{\epsilon}\right)^{\nu_1(T)}, \;\; N(\epsilon;\; \mathcal{G}_2(\delta), L^2(\mathbb{P}_n)) \leq \left(\frac{A \lVert F_2 \rVert_{L^2(\mathbb{P}_n)}}{\epsilon}\right)^{\nu_2(T)}, \\[-5pt]
\end{align}
where the covering number $N(\epsilon;\; \mathcal{G}, L_p)$ is the minimal number of balls of radius $\epsilon$ (under $L_p$ norm) that is needed to cover $\mathcal{G}$, $A$ is some constant, $F_1$ and $F_2$ are respectively envelope functions of $\mathcal{G}_1$ and $\mathcal{G}_2(\delta)$, and $\nu_1(T)=O(\log T)$, $\nu_2(T)=O(\log T)$. 
\end{lemma}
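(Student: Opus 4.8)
The plan is to exploit the special structure of the index set: since $\bm{\gamma}$ ranges only over the standard basis vectors $\mathcal{U}^{T}$, each choice $\bm{\gamma}=e_l$ extracts a single coordinate, so any element of $\mathcal{G}_1$ has the form $(X,Y)\mapsto(\bm{\alpha}'X)\left(\mathbf{1}\{Y(t_l)\leq X'\bm{\beta}(t_l)\}-\tau\right)$ and depends on $\bm{\beta}$ only through the single vector $\bm{\beta}(t_l)\in\mathbb{R}^{d}$. Consequently $\mathcal{G}_1=\bigcup_{l=1}^{T}\mathcal{H}_l$, where $\mathcal{H}_l\coloneqq\{(X,Y)\mapsto(\bm{\alpha}'X)(\mathbf{1}\{Y(t_l)\leq X'b\}-\tau):\bm{\alpha}\in\mathcal{S}^{d-1},\,b\in\mathbb{R}^{d}\}$ is parametrized by a fixed, $T$-independent number of Euclidean parameters. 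This reduction is the crucial point: without it, the free parameter $\bm{\beta}\in\mathbb{R}^{d\times T}$ would force a VC dimension of order $dT$, whereas the basis-vector restriction on $\bm{\gamma}$ confines the complexity of each piece to order $d$.

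First I would bound the covering number of each $\mathcal{H}_l$ uniformly in $l$. The linear maps $\{X\mapsto\bm{\alpha}'X:\bm{\alpha}\in\mathcal{S}^{d-1}\}$ form a VC-subgraph class of index $O(d)$, and the shifted indicators $\{(X,Y)\mapsto\mathbf{1}\{Y(t_l)\leq X'b\}-\tau:b\in\mathbb{R}^{d}\}$ are, in the lifted variable $(X,Y(t_l))$, a halfspace-type VC-subgraph class of index at most $d+2$. Both families are uniformly bounded (by $\xi$ and by $\max(\tau,1-\tau)$, respectively, using (A1)), so their pointwise products again form a VC-type class by the standard preservation results for VC-subgraph classes under products \citep[Section 2.6]{van1996weak}. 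This yields, uniformly over probability measures $Q$ and over $l$, a bound $N(\epsilon;\mathcal{H}_l,L^2(Q))\leq(A\lVert F_1\rVert_{L^2(Q)}/\epsilon)^{v_1}$ with a constant exponent $v_1=O(d)$ and the common envelope $F_1=\xi\max(\tau,1-\tau)$, independent of $l$ and of $T$.

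Next I would pass from the pieces to the union. Covering each $\mathcal{H}_l$ and collecting the covers gives $N(\epsilon;\mathcal{G}_1,L^2(\mathbb{P}_n))\leq\sum_{l=1}^{T}N(\epsilon;\mathcal{H}_l,L^2(\mathbb{P}_n))\leq T\left(A\lVert F_1\rVert_{L^2(\mathbb{P}_n)}/\epsilon\right)^{v_1}$. On the only nontrivial range $\epsilon\leq\lVert F_1\rVert_{L^2(\mathbb{P}_n)}$ we have $A\lVert F_1\rVert/\epsilon\geq A>1$, so $T=(A\lVert F_1\rVert/\epsilon)^{\log T/\log(A\lVert F_1\rVert/\epsilon)}\leq(A\lVert F_1\rVert/\epsilon)^{\log T/\log A}$. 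Absorbing this factor into the exponent gives the claimed form with $\nu_1(T)=v_1+\log T/\log A=O(\log T)$, which is exactly where the logarithmic dependence on $T$ originates.

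Finally, the bound for $\mathcal{G}_2(\delta)$ proceeds identically. The restriction to $\bm{\gamma}=e_l$ again isolates a single coordinate, and the constraint $\sup_{t\in\bm{t}}\lVert\bm{\beta}_1(t)-\bm{\beta}_2(t)\rVert\leq\delta$ becomes $\lVert b_1-b_2\rVert\leq\delta$ for that coordinate, so $\mathcal{G}_2(\delta)=\bigcup_{l=1}^{T}\mathcal{H}'_l$ with $\mathcal{H}'_l$ built from the difference $\mathbf{1}\{Y(t_l)\leq X'b_1\}-\mathbf{1}\{Y(t_l)\leq X'b_2\}$ times $\bm{\alpha}'X$. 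A difference of two VC-subgraph indicators is a bounded function of two VC-type classes and hence remains VC-type, so the same product-preservation argument gives a $T$-independent exponent $v_2=O(d)$ with envelope $F_2\leq\xi$ (a thinner band envelope exploiting $\lVert b_1-b_2\rVert\leq\delta$ could be substituted if a sharper constant were needed later), and the identical union step produces $\nu_2(T)=O(\log T)$. The main obstacle is the product (and difference-of-products) step: one must verify that multiplying the linear class by the indicator class preserves the VC-type covering bound with an exponent that does not scale with $T$; once this $T$-free per-coordinate control is in hand, the logarithmic factor follows purely from the union over the $T$ sampling locations.
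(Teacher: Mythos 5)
Your proof is correct, but it handles the union over the $T$ sampling locations by a different mechanism than the paper. Both arguments start from the same structural observation (since $\bm{\gamma}$ ranges over standard basis vectors, the class decomposes by location, and each per-location piece has complexity $O(d)$ by the halfspace VC bound of Lemma 2.6.15 in van der Vaart and Wellner plus product preservation for bounded classes, which is Lemma 24 in Belloni et al.). The paper, however, takes the union \emph{before} converting to covering numbers: it proves the stronger structural statement $V(\mathcal{F}) \leq C\nu\log T$ for the union of indicator classes $\mathcal{F} = \bigcup_t \mathcal{F}_t$, via a Sauer's-lemma counting argument (subgraphs of the union can pick out at most $T\left(ne/(\nu-1)\right)^{\nu-1} < 2^n$ subsets of $n = C\nu\log T$ points), and only then invokes Haussler's bound (Theorem 2.6.7 in van der Vaart and Wellner) together with the product-preservation lemma to obtain the covering numbers. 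You instead convert each per-location product class $\mathcal{H}_l$ to a covering bound with a $T$-free exponent, pay the factor $T$ in the union bound on covering numbers, and absorb it into the exponent via the elementary identity $T = \left(A\lVert F_1\rVert/\epsilon\right)^{\log T/\log\left(A\lVert F_1\rVert/\epsilon\right)} \leq \left(A\lVert F_1\rVert/\epsilon\right)^{\log T/\log A}$ on the range $\epsilon \leq \lVert F_1\rVert_{L^2(\mathbb{P}_n)}$ (with $A>1$, which is without loss of generality). Your route is more elementary and modular: it never needs a VC-index statement for the union class, and it would apply verbatim if the per-location pieces were merely polynomial-covering rather than VC-subgraph. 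The paper's route buys a genuine VC-index bound for the composite class, a stronger and potentially reusable statement, and avoids any discussion of the trivial range of $\epsilon$; but for the purpose of feeding Lemma \ref{lemma4} (which only consumes the covering-number exponent $\nu(T)=O(\log T)$), the two arguments are interchangeable.
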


\begin{lemma}
\label{lemma4} 
Consider the classes of functions $\mathcal{G}_1$ and $\mathcal{G}_2(\delta)$ defined in \eqref{eq:51}. Under Assumptions (A1)-(A3), for some constant $C$ independent of $n$ and all $\kappa_n > 0$,
\begin{align}\label{eq:53}
P\left(\lVert \mathbb{P}_n - P \rVert_{\mathcal{G}_1} \geq C\left[\left(\frac{\log T}{n}\right)^{1/2} \!\!\! + \frac{\log T}{n} + \left(\frac{\kappa_n}{n}\right)^{1/2} + \frac{\kappa_n}{n}\right]\right) \leq e^{-\kappa_n}. \\[-5pt]
\end{align}
For any $\delta_n \downarrow 0 $ satisfying $\delta_n \gg n^{-1}$, \eqref{eq:54} holds for sufficiently large $n$ and arbitrary $\kappa_n > 0$,
\begin{align}\label{eq:54}
P\left(\lVert \mathbb{P}_n - P \rVert_{\mathcal{G}_2(\delta_n)} \geq C\zeta_n(\delta_n, \kappa_n) \right) \leq e^{-\kappa_n}, \\[-5pt]
\end{align}
where \\[-5pt]
\begin{align}\label{eq:55}
\zeta_n(\delta_n, \kappa_n) \coloneqq \delta_n^{1/2} \left(\frac{\log T}{n}\log n\right)^{1/2} \! + \frac{\log T}{n} \log n + \delta_n^{1/2}\left(\frac{\kappa_n}{n}\right)^{1/2} \! + \frac{\kappa_n}{n}. \\[-5pt]
\end{align}
\end{lemma}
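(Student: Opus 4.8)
The plan is to read each bound as a concentration statement for the supremum of an empirical process indexed by a VC-type class, and to combine two standard ingredients: a maximal inequality controlling the expected supremum through the entropy bounds of Lemma \ref{lemma3}, and a Talagrand-type concentration inequality that upgrades the expectation bound to the stated exponential tail. Before invoking either, I would record the two structural facts that drive the whole argument: every $g$ in $\mathcal{G}_1$ or $\mathcal{G}_2(\delta_n)$ is uniformly bounded by $U \asymp \xi$ (since $\lVert X\rVert \le \xi$ and each indicator expression lies in $[-1,1]$), and the weak variances satisfy $\sup_{g\in\mathcal{G}_1}\mathrm{Var}(g)=O(1)$ while, crucially, $\sup_{g\in\mathcal{G}_2(\delta_n)}\mathrm{Var}(g)=O(\delta_n)$. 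The variance bound for $\mathcal{G}_2(\delta_n)$ is the key point: since $\bm{\gamma}\in\mathcal{U}^{T}$ selects a single coordinate $t$, each $g$ equals $(\bm{\alpha}'X)(\mathbf{1}\{Y(t)\le X'\bm{\beta}_1(t)\}-\mathbf{1}\{Y(t)\le X'\bm{\beta}_2(t)\})$, whose square is bounded by $\xi^2$ times the indicator that $Y(t)$ falls between the two thresholds; conditioning on $X$ and using the density bound $\overline{f}$ from (A2) gives $\mathbb{E}[g^2\mid X]\le \overline{f}\,\xi\lVert \bm{\beta}_1(t)-\bm{\beta}_2(t)\rVert \le \overline{f}\,\xi\,\delta_n$, hence $\sup_g\mathrm{Var}(g)\lesssim\delta_n$.

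Next I would bound the expected supremum. Feeding the covering numbers of Lemma \ref{lemma3} (polynomial in $1/\epsilon$ with exponent $\nu_j(T)=O(\log T)$) into a Dudley / Chernozhukov--Chetverikov--Kato maximal inequality for VC-type classes yields, after dividing by $\sqrt{n}$,
\begin{align*}
\mathbb{E}\lVert \mathbb{P}_n-P\rVert_{\mathcal{G}} \;\lesssim\; \left(\frac{v\,\sigma^2\,L}{n}\right)^{1/2} + \frac{v\,U\,L}{n},
\end{align*}
where $v\asymp\log T$, $\sigma^2$ is the weak-variance bound, and $L\asymp\log(U/\sigma)$ is the logarithmic factor from the entropy integral truncated at level $\sigma$. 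For $\mathcal{G}_1$ we have $\sigma^2\asymp 1$ and $L=O(1)$, giving $\mathbb{E}\lVert \mathbb{P}_n-P\rVert_{\mathcal{G}_1}\lesssim (\log T/n)^{1/2}+\log T/n$, the first two terms of \eqref{eq:53}. For $\mathcal{G}_2(\delta_n)$ we have $\sigma^2\asymp\delta_n$, and here the constraint $\delta_n\gg n^{-1}$ is precisely what forces $L\asymp\log(1/\sqrt{\delta_n})=O(\log n)$; this produces $\delta_n^{1/2}(\tfrac{\log T}{n}\log n)^{1/2}+\tfrac{\log T}{n}\log n$, matching the first two terms of $\zeta_n$ in \eqref{eq:55}.

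To obtain the exponential tails I would then apply Talagrand's inequality in Bousquet's form to the bounded class $\mathcal{G}$: with probability at least $1-e^{-\kappa_n}$,
\begin{align*}
\lVert \mathbb{P}_n-P\rVert_{\mathcal{G}} \;\lesssim\; \mathbb{E}\lVert \mathbb{P}_n-P\rVert_{\mathcal{G}} + \sigma\left(\frac{\kappa_n}{n}\right)^{1/2} + \frac{U\,\kappa_n}{n},
\end{align*}
where the cross term $(U\,\mathbb{E}\lVert\mathbb{P}_n-P\rVert\,\kappa_n/n)^{1/2}$ appearing in Bousquet's bound is absorbed into the remaining terms by the arithmetic--geometric mean inequality. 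For $\mathcal{G}_1$ ($\sigma,U\asymp 1$) this adds $(\kappa_n/n)^{1/2}+\kappa_n/n$, completing \eqref{eq:53}; for $\mathcal{G}_2(\delta_n)$ ($\sigma\asymp\delta_n^{1/2}$, $U\asymp 1$) it adds $\delta_n^{1/2}(\kappa_n/n)^{1/2}+\kappa_n/n$, completing the remaining two terms of $\zeta_n$ and hence \eqref{eq:54}.

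I expect the genuine difficulties to be localized in the $\mathcal{G}_2(\delta_n)$ analysis rather than in $\mathcal{G}_1$. The essential step is the $O(\delta_n)$ weak-variance bound, which is what converts the crude $O(1)$ envelope into the decisive $\delta_n^{1/2}$ prefactors; it is exactly this gap between a small variance and an order-one sup norm that makes Talagrand's inequality (rather than a cruder bounded-differences argument) indispensable. The secondary delicate point is the bookkeeping on the logarithmic factor $L$: one must verify that truncating the entropy integral at $\sigma\asymp\sqrt{\delta_n}$ together with $\delta_n\gg n^{-1}$ contributes exactly one extra $\log n$ and not a larger power, so that $\log n$ enters $\zeta_n$ only linearly as stated.
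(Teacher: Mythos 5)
Your proposal is correct and takes essentially the same route as the paper's proof: the paper likewise combines a VC-type maximal inequality (fed with the covering numbers of Lemma \ref{lemma3}, envelope $\xi$, and the weak-variance bounds $O(1)$ for $\mathcal{G}_1$ and $O(\delta_n)$ for $\mathcal{G}_2(\delta_n)$ obtained via the density bound $\overline{f}$) with a refined Talagrand concentration inequality, citing both tools from \citet{chao2017quantile}. The role you assign to $\delta_n \gg n^{-1}$ --- keeping the truncated entropy log factor at $O(\log n)$ and satisfying the maximal inequality's requirement $\sigma^2 \gtrsim n^{-1}$ --- is exactly how the paper uses that condition.
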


\begin{lemma}
\label{lemma5} 
Let $\eta(t)$ be a Gaussian process with almost-surely continuous sample paths in $\mathcal{T}$, and for any $v, s \in \mathcal{T}$, $v < s$, let $\mathcal{C}_{0}(v, s)$ denote the number of zero crossings by $\eta(t)$ in $v \leq t \leq s$. 
\begin{itemize}
	\item [(i)] ((10.3.1) of Cram{\'e}r and Leadbetter (1967)). Assume that $\eta(t)$ is a zero-mean stationary Gaussian process, then 
\begin{align}
\mathbb{E}[\mathcal{C}_{0}(v, s)] = \frac{s-v}{\pi}\sqrt{\frac{\lambda_2}{\lambda_0}}, \\[-5pt]
\end{align} 
where $\lambda_{2k} \:(k \geq 0)$ denotes the $2k$th moment of the spectral function $F$, i.e., 
\begin{align}
\lambda_{2k} = \int_{0}^{\infty} \lambda^{2k} d F(\lambda), \quad\quad k=0,1,2,\dots \\[-5pt]
\end{align}
	\item [(ii)] ((13.2.1) of Cram{\'e}r and Leadbetter (1967)). Assume that $\eta(t)$ is a Gaussian process with the mean function $m(t) \coloneqq \mathbb{E}[\eta(t)]$ and the covariance function $r(t,s) \coloneqq \mathbb{E}[(\eta(t)-m(t))(\eta(s)-m(s))]$, and both $m(t)$ and $r(t,s)$ are continuous. For convenience we shall write $\sigma^2(t)=r(t,t)$ for the variance at location $t$. Additionally, suppose that $m(t)$ has the continuous derivative $m'(t)$ for each $t$ and that $r(t,s)$ has a second mixed partial derivative $r_{11}(t,s)$, which is continuous at all diagonal points $(t,t)$. Suppose also that the joint normal distribution for $\eta(t)$ and the derivative $\eta'(t)$ is nonsingular for each $t$, and $\sigma(t) > 0$. Then
\begin{align}\label{eq:56}
\mathbb{E}[\mathcal{C}_{0}(v, s)] =  \int_{v}^{s} \gamma \sigma^{-1} (1-\mu^2)^{1/2} \phi \left(\frac{m}{\sigma}\right)\{2\phi(\omega) + \omega(2\Phi(\omega)-1)\} dt, \\[-5pt]
\end{align}
where \\[-5pt]
\begin{align} 
\gamma^2(t) \coloneqq & \; \text{Var}[\eta'(t)] = r_{11}(t,t) = \left[\frac{\partial^2 r(t,s)}{\partial t \: \partial s}\right]_{t=s}, \nonumber\\
\mu(t) \coloneqq & \; \frac{\text{Cov}[\eta(t), \eta'(t)]}{\gamma(t)\sigma(t)} = \frac{r_{01}(t,t)}{\gamma(t)\sigma(t)} = \frac{\left[\frac{\partial r(t,s)}{\partial s}\right]_{t=s}}{\gamma(t)\sigma(t)}, \nonumber\\
\omega(t) \coloneqq & \; \frac{m'(t)-\gamma(t)\mu(t)m(t)/\sigma(t)}{\gamma(t)(1-\mu^2(t))^{1/2}}. \nonumber\\[-5pt]
\end{align}
\end{itemize}
\end{lemma}

\textit{Remark.} If $\eta(t)$ is a zero-mean stationary Gaussian process with a finite second spectral moment $\lambda_2$, Lemma \ref{lemma5} (i) combined with Markov's inequality suggests that Assumption (A6) is satisfied with $c_0 = \frac{1}{\pi}\sqrt{\frac{\lambda_2}{\lambda_0}}$. More generally, if $\eta(t)$ is a possibly non-stationary Gaussian process as assumed in Lemma \ref{lemma5} (ii) and the integrand in equation \eqref{eq:56} is bounded above uniformly over $t \in \mathcal{T}$ by a constant $C_0$, an application of Markov's inequality leads to Assumption (A6) with $c_0 = C_0$.

\begin{lemma}
\label{lemma6} 
Under Assumptions (A1)-(A5), there exists a constant $C'$ such that uniformly over $t, s \in \mathcal{T}$, we have
\begin{align}\label{eq:57}
\lVert J_{\tau}(t)^{-1} - J_{\tau}(s)^{-1} \rVert \leq C' \left|t-s \right|. \\[-5pt]
\end{align}
\end{lemma}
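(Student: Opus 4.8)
The plan is to exploit the standard resolvent identity to transfer the problem from the inverses back to the matrices themselves. Writing $A = J_\tau(t)$ and $B = J_\tau(s)$, the identity $A^{-1} - B^{-1} = A^{-1}(B-A)B^{-1}$ together with submultiplicativity of the operator norm yields
\begin{align}
\lVert J_\tau(t)^{-1} - J_\tau(s)^{-1}\rVert \leq \lVert J_\tau(t)^{-1}\rVert \, \lVert J_\tau(s)^{-1}\rVert \, \lVert J_\tau(t) - J_\tau(s)\rVert.
\end{align}
Thus the claim reduces to two sub-tasks: (i) a uniform-in-$t$ upper bound on $\lVert J_\tau(t)^{-1}\rVert$, and (ii) Lipschitz continuity of the map $t \mapsto J_\tau(t)$. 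For (i), I would invoke the observation recorded in the Remark following Assumptions (A1)--(A3). For any unit vector $\bm{u} \in \mathcal{S}^{d-1}$,
\begin{align}
\bm{u}' J_\tau(t) \bm{u} = \mathbb{E}\!\left[(\bm{u}'X)^2 f_{Y(t)|X}(X'\bm{\beta}_\tau(t)|X)\right] \geq f_{\min}\, \bm{u}'\mathbb{E}[XX']\bm{u} \geq f_{\min}/M,
\end{align}
using (A3) for the first inequality and (A1) for the second. Hence $\lambda_{\min}(J_\tau(t)) \geq f_{\min}/M$ uniformly in $t$, so $\lVert J_\tau(t)^{-1}\rVert \leq M/f_{\min}$ for every $t \in \mathcal{T}$, and the product $\lVert J_\tau(t)^{-1}\rVert \lVert J_\tau(s)^{-1}\rVert$ is bounded by $(M/f_{\min})^2$.

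For (ii), the key step --- and the only one requiring real care --- is to separate the two distinct ways $t$ enters $J_\tau(t) = \mathbb{E}[XX' f_{Y(t)|X}(X'\bm{\beta}_\tau(t)|X)]$: through the index $t$ of the conditional density itself, and through the evaluation point $X'\bm{\beta}_\tau(t)$. I would add and subtract the mixed term $f_{Y(s)|X}(X'\bm{\beta}_\tau(t)|X)$ inside the expectation, writing the integrand difference as
\begin{align}
\begin{split}
& f_{Y(t)|X}(X'\bm{\beta}_\tau(t)|X) - f_{Y(s)|X}(X'\bm{\beta}_\tau(s)|X) \\
= \; & \left[f_{Y(t)|X}(X'\bm{\beta}_\tau(t)|X) - f_{Y(s)|X}(X'\bm{\beta}_\tau(t)|X)\right] + \left[f_{Y(s)|X}(X'\bm{\beta}_\tau(t)|X) - f_{Y(s)|X}(X'\bm{\beta}_\tau(s)|X)\right].
\end{split}
\end{align}
The first bracket is controlled by the mean value theorem in $t$ using the uniformly bounded $t$-derivative of the density from Assumption (A5), giving a bound of the form $(\text{const})\cdot|t-s|$ uniformly in $x$. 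The second bracket is controlled by the mean value theorem in $y$ using $\overline{f'}$ from (A2), giving $\overline{f'}\,|X'(\bm{\beta}_\tau(t) - \bm{\beta}_\tau(s))| \leq \overline{f'}\,\xi\,\overline{\beta'_{\tau}}\,|t-s|$ after applying $\lVert X\rVert \leq \xi$ from (A1) and the Lipschitz bound $\lVert \bm{\beta}_\tau(t) - \bm{\beta}_\tau(s)\rVert \leq \overline{\beta'_{\tau}}|t-s|$ from (A4).

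Finally I would pass the norm through the expectation. Since $XX'$ is rank one, $\lVert XX'\rVert = \lVert X\rVert^2 \leq \xi^2$, so the uniform-in-$x$ integrand bound above gives $\lVert J_\tau(t) - J_\tau(s)\rVert \leq \xi^2(\text{const})|t-s|$. Combining this with (i) produces the assertion, with $C'$ an explicit product of $M/f_{\min}$, $\xi$, $\overline{f'}$, $\overline{\beta'_{\tau}}$, and the constant from (A5). The whole argument is essentially a chain of mean value theorem estimates; the only subtlety to watch is ensuring every intermediate bound is uniform in both $x$ and $t$, which is precisely what the \emph{uniform} versions of the density assumptions (A2), (A3), and (A5) were designed to guarantee.
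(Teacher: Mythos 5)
Your proof is correct and follows essentially the same route as the paper's: the matrix-inverse difference identity reduces the claim to a uniform lower bound on $\lambda_{\min}(J_\tau(t))$ (from (A1) and (A3)) plus Lipschitz continuity of $t \mapsto J_\tau(t)$, which you derive from exactly the same assumptions (A1), (A2), (A4), (A5). The only cosmetic difference is that the paper differentiates the composite map $t \mapsto f_{Y(t)|X}(x'\bm{\beta}_{\tau}(t)|x)$ by the chain rule and applies a single Taylor expansion, whereas you split the increment by adding and subtracting the mixed term $f_{Y(s)|X}(X'\bm{\beta}_{\tau}(t)|X)$ and apply the mean value theorem twice; the resulting bounds coincide.
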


\begin{lemma}
\label{lemma7} 
(Lemma A.1 of \citet{kley2016quantile}) Let $(\mathcal{T}, d)$ be an arbitrary metric space, and $D(\epsilon, d)$ be the packing number of this metric space. Assume that $\{\mathbb{G}_t: t \in \mathcal{T}\}$ is a separable stochastic process with $\lVert \mathbb{G}_s - \mathbb{G}_t \rVert_{\Psi} \leq C d(s,t)$ for all $s, t$ satisfying $d(s,t) \geq \bar{\omega}/2 \geq 0$,  where $\lVert Z \rVert_{\Psi} \coloneqq \inf \{C > 0: \mathbb{E}[\Psi(|Z|/C)] \leq 1\}$ is the \textit{Orlicz norm} of a real-valued random variable $Z$ (see Chapter 2.2 in \citet{van1996weak}) for a non-decreasing, convex function $\Psi: \mathbb{R}^{+} \rightarrow \mathbb{R}^{+}$ with $\Psi(0)=0$. Then, for any $\delta > 0, \omega \geq \bar{\omega}$, there exists a random variable $S_1(\omega)$ and a constant $K < \infty$ such that 
\begin{align}
\sup_{d(s,t)\leq \delta} \left| \mathbb{G}_s - \mathbb{G}_t \right| \leq S_1(\omega) \: + 2 \sup_{d(s,t)\leq \bar{\omega}, \: t \in \tilde{T}} \left| \mathbb{G}_s - \mathbb{G}_t \right|, \\[-5pt]
\end{align}
and \\[-5pt]
\begin{align}
\lVert S_1(\omega) \rVert_{\Psi} \leq K \left[\int_{\bar{\omega}/2}^{\omega} \Psi^{-1}\left(D(\epsilon,d)\right) \text{d}\epsilon + (\delta + 2\bar{\omega}) \Psi^{-1}\left(D^2(\omega,d)\right) \right],  \\[-5pt]
\end{align}
where the set $\tilde{T}$ contains at most $D(\bar{\omega},d)$ points. 
\end{lemma}

\begin{lemma}
\label{lemma8}
For $t\in \mathcal{T}$, let $\mathbb{G}_n(t)$ be defined as (3.1) for some given $\bm{a} \in \mathcal{S}^{d-1}$, and let $\mathbb{G}(\cdot)$ be a centered Gaussian process on $\mathcal{T}$ with the covariance function $H_{\tau}$ defined in (4.5). Suppose Assumptions (A1)-(A6) hold, then
\begin{align}\label{eq:59}
\mathbb{G}_n(\cdot) \rightsquigarrow \mathbb{G}(\cdot) \;\; \text{in} \;\; l^{\infty}(\mathcal{T}).  \\[-5pt]
\end{align}
In particular, there exists a version of $\mathbb{G}$ with almost surely continuous sample paths.
\end{lemma}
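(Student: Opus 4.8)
The plan is to recognize $\mathbb{G}_n(\cdot)$ as a centered empirical process and reduce \eqref{eq:59} to a Donsker property. Writing $g_t(X,Y) \coloneqq \bm{a}' J_{\tau}(t)^{-1} X\bigl(\mathbf{1}\{Y(t) \le X'\bm{\beta}_{\tau}(t)\} - \tau\bigr)$, the defining conditional quantile condition gives $\mathbb{E}\bigl[\mathbf{1}\{Y(t) \le X'\bm{\beta}_{\tau}(t)\} \mid X\bigr] = \tau$, so $P g_t = 0$ and $\mathbb{G}_n(t) = \sqrt{n}\,(\mathbb{P}_n - P) g_t$. Thus \eqref{eq:59} is exactly the assertion that the class $\mathcal{F} \coloneqq \{g_t : t \in \mathcal{T}\}$ is $P$-Donsker, with limiting covariance $P g_t g_s = H_{\tau}(t,s;\bm{a})$ as defined in \eqref{eq:12}. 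By the characterization of weak convergence in $l^{\infty}(\mathcal{T})$ in \citet{van1996weak}, it suffices to establish (i) convergence of the finite-dimensional distributions and (ii) asymptotic tightness, equivalently the asymptotic equicontinuity condition \eqref{eq:13}.

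For (i), fix $t_1, \dots, t_k \in \mathcal{T}$. Each $g_{t_j}$ is a bounded, mean-zero function of an i.i.d.\ observation: Assumption (A1) bounds $\lVert X\rVert$ and $\lambda_{\max}(\mathbb{E}[XX'])$, while Assumptions (A1) and (A3) bound $\lVert J_{\tau}(t)^{-1}\rVert$ uniformly in $t$, so that $\sup_{t \in \mathcal{T}} \lVert g_t\rVert_{\infty} < \infty$. The multivariate central limit theorem then gives $(\mathbb{G}_n(t_1), \dots, \mathbb{G}_n(t_k)) \rightsquigarrow N(\bm{0}, \Sigma)$ with $\Sigma_{jl} = P g_{t_j} g_{t_l} = H_{\tau}(t_j, t_l; \bm{a})$, matching the covariance of $\mathbb{G}$.

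The crux is (ii). I would introduce the intrinsic semimetric $\rho(s,t) \coloneqq \lVert g_t - g_s\rVert_{L^2(P)}$ and show $\rho(s,t) \lesssim |t-s|^{1/2}$. Splitting the increment as
\begin{align*}
g_t - g_s ={}& \bm{a}'\bigl(J_{\tau}(t)^{-1} - J_{\tau}(s)^{-1}\bigr) X\bigl(\mathbf{1}\{\eta(t) \le 0\} - \tau\bigr) \\
& {}+ \bm{a}' J_{\tau}(s)^{-1} X\bigl(\mathbf{1}\{\eta(t) \le 0\} - \mathbf{1}\{\eta(s) \le 0\}\bigr),
\end{align*}
the first (smooth) term is $O(|t-s|)$ in $L^2(P)$ by Lemma \ref{lemma6} together with the boundedness above. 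For the second (jump) term, the difference of indicators is nonzero only if $\eta(\cdot)$ changes sign between $s$ and $t$; by the almost sure sample-path continuity in Assumption (A6), this forces at least one zero-crossing in $[s,t]$, whose probability is at most $c_0|t-s|$. Hence $\mathbb{E}\bigl[(\mathbf{1}\{\eta(t) \le 0\} - \mathbf{1}\{\eta(s) \le 0\})^2\bigr] \le c_0|t-s|$, so the jump term is $O(|t-s|^{1/2})$ in $L^2(P)$ and $\rho(s,t) \lesssim |t-s|^{1/2}$. Consequently $(\mathcal{T}, \rho)$ is totally bounded with $N(\epsilon, \mathcal{T}, \rho) = O(\epsilon^{-2})$, so the entropy integral $\int_0^{\delta} \sqrt{\log N(\epsilon, \mathcal{T}, \rho)}\, d\epsilon$ converges as $\delta \downarrow 0$. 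Using the crossing bound to build $L^2(P)$ brackets over short subintervals of $\mathcal{T}$ (whose size is controlled by $c_0$ times the interval length), one verifies the bracketing-entropy condition and applies the associated maximal inequality to obtain \eqref{eq:13}.

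Combining (i) and (ii) yields $\mathbb{G}_n \rightsquigarrow \mathbb{G}$ in $l^{\infty}(\mathcal{T})$. Finally, total boundedness of $(\mathcal{T}, \rho)$ and tightness of the Gaussian limit guarantee, by the sample-path regularity results of \citet{van1996weak}, a version of $\mathbb{G}$ with uniformly $\rho$-continuous paths; since $\rho(s,t) \lesssim |t-s|^{1/2} \to 0$ as $s \to t$, these paths are almost surely continuous on $\mathcal{T}$. The main obstacle is the jump term in step (ii): the indicator increments are genuinely discontinuous in $t$, and it is precisely Assumption (A6) — bounding the zero-crossing probability by $c_0|t-s|$ — that supplies the $|t-s|^{1/2}$ modulus needed to control the entropy and close the equicontinuity argument.
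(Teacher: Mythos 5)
Your proposal is correct, but it establishes tightness by a genuinely different route than the paper. Both proofs share the same skeleton: finite-dimensional convergence via the CLT (with the identical covariance computation), the same decomposition of the increment $g_t-g_s$ into a smooth term controlled by the Lipschitz bound $\lVert J_{\tau}(t)^{-1}-J_{\tau}(s)^{-1}\rVert \lesssim |t-s|$ (Lemma \ref{lemma6}) and a jump term controlled by the zero-crossing bound in Assumption (A6), and the same appeal to Section 1.5 of \citet{van1996weak} for a continuous version of $\mathbb{G}$. Where you diverge is the equicontinuity step \eqref{eq:13}. The paper derives a fourth-moment bound $\mathbb{E}|\mathbb{G}_n(t)-\mathbb{G}_n(s)|^4 \lesssim |t-s|/n + |t-s|^2$, which is only useful at scales $|t-s|\gtrsim n^{-3}$, and must therefore invoke the truncated chaining inequality of \citet{kley2016quantile} (Lemma \ref{lemma7}) with $d(s,t)=|t-s|^{1/3}$, and then handle the residual small-scale supremum over a discretization $\tilde{T}$ of cardinality $O(n^3)$ by a separate Chernoff/binomial argument counting curves with zero crossings in intervals of length $O(n^{-3})$. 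You instead construct $L^2(P)$ brackets over subintervals, and the decisive observation that makes this work is that the crossing indicator $\mathbf{1}\{\eta \text{ has a zero in } [s_j,s_{j+1}]\}$ is a \emph{uniform} envelope for the jump part of $g_t-g_{s_j}$ over the whole subinterval (by sample-path continuity and the intermediate value theorem), with $L^2$ norm $\lesssim |s_{j+1}-s_j|^{1/2}$ by (A6); this yields $N_{[\,]}(\epsilon,\mathcal{F},L^2(P)) = O(\epsilon^{-2})$, a finite bracketing entropy integral, and Donsker via Ossiander's theorem, which internalizes precisely the small-scale chaining problem the paper handles by hand. One point of care: covering numbers of $(\mathcal{T},\rho)$ alone would not suffice for the Donsker property, so the bracketing construction you sketch is the essential step, not an afterthought. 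The trade-off is that the paper's more laborious argument produces explicit intermediate quantitative bounds — notably $\mathbb{E}|\mathbb{G}_n(t)-\mathbb{G}_n(s)|^2 \le C_0|t-s|$ and the $O_p(\log n/\sqrt{n})$ small-scale control — which are reused later (e.g., in \eqref{eq:84} in the proof of Theorem \ref{thm4}), whereas your argument is shorter and rests on off-the-shelf empirical process theory but would need those bounds re-extracted when they are cited downstream.
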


\begin{lemma}
\label{lemma9}
Under Assumptions (A1)-(A3), for the process $\tilde{G}_n(\cdot)$ defined in Theorem 2, there exist constants $C_2 > C_1 > 0$ such that 
\begin{align}\label{eq:60}
\begin{split}
& \; \quad C_1 \leq  \inf_{t \in \mathcal{T}} \mathbb{E}\left[\tilde{G}_n(t)^2 \:|\: (X_i)_{i=1}^n \right] \quad \text{with probability approaching one}, \\
\text{and} \; & \; \sup_{t \in \mathcal{T}} \mathbb{E}\left[\tilde{G}_n(t)^2 \:|\: (X_i)_{i=1}^n \right] \leq  C_2, \quad \text{a.s.} \\[-5pt]
\end{split}
\end{align}
\end{lemma}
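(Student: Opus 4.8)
The plan is to first derive the explicit conditional-variance formula and then bound it uniformly in $t$. By the covariance identity of Theorem~\ref{thm2} together with the computation leading to \eqref{eq:28}, we have $\mathbb{E}[\tilde{G}_n(t)^2 \mid (X_i)_{i=1}^n] = \mathbb{E}[\mathbb{G}_n(t)^2 \mid (X_i)_{i=1}^n]$. Because $X_i'\bm{\beta}_{\tau}(t) = Q(X_i; t, \tau)$ is the conditional $\tau$th quantile of $Y_i(t)$, each summand $X_i(\mathbf{1}\{Y_i(t) \le X_i'\bm{\beta}_{\tau}(t)\} - \tau)$ has conditional mean zero; by the i.i.d.\ assumption the summands are conditionally independent given $(X_i)_{i=1}^n$, each with conditional covariance $\tau(1-\tau)X_iX_i'$. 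Consequently
\begin{align*}
V_n(t) \coloneqq \mathbb{E}[\tilde{G}_n(t)^2 \mid (X_i)_{i=1}^n] = \tau(1-\tau)\,\bm{a}'J_{\tau}(t)^{-1}\,\mathbb{E}_n[X_iX_i']\,J_{\tau}(t)^{-1}\bm{a},
\end{align*}
as already recorded in Section~\ref{covariance}. The key structural observation is that the only dependence on $t$ enters through $J_{\tau}(t)^{-1}$, whereas $\mathbb{E}_n[X_iX_i']$ is free of $t$; this is what converts pointwise bounds into uniform ones.

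Next I would record uniform eigenvalue bounds on $J_{\tau}(t)$. Testing $J_{\tau}(t) = \mathbb{E}[XX'f_{Y(t)|X}(Q(X;t,\tau)|X)]$ against a unit vector $u$, Assumption (A3) gives $u'J_{\tau}(t)u \ge f_{\min}\,u'\mathbb{E}[XX']u \ge f_{\min}/M$, while Assumption (A2) gives $u'J_{\tau}(t)u \le \overline{f}\,u'\mathbb{E}[XX']u \le \overline{f}M$, the bounds on the eigenvalues of $\mathbb{E}[XX']$ coming from (A1). Hence $f_{\min}/M \le \lambda_{\min}(J_{\tau}(t)) \le \lambda_{\max}(J_{\tau}(t)) \le \overline{f}M$ uniformly in $t \in \mathcal{T}$, and in particular $\lVert J_{\tau}(t)^{-1}\rVert \le M/f_{\min}$ for every $t$. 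This is the uniform version of the remark following Assumption (A3).

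For the upper bound in \eqref{eq:60}, I would invoke (A1) directly: since $\lVert X\rVert \le \xi$ almost surely, $\lVert \mathbb{E}_n[X_iX_i']\rVert \le \xi^2$ a.s., whence
\begin{align*}
V_n(t) \le \tau(1-\tau)\,\lVert J_{\tau}(t)^{-1}\rVert^2\,\lVert \mathbb{E}_n[X_iX_i']\rVert \le \tau(1-\tau)\,(M/f_{\min})^2\,\xi^2,
\end{align*}
uniformly in $t$ and almost surely, so this constant serves as $C_2$. For the lower bound I would compare $V_n(t)$ with its population analogue $V(t) \coloneqq \tau(1-\tau)\,\bm{a}'J_{\tau}(t)^{-1}\mathbb{E}[XX']J_{\tau}(t)^{-1}\bm{a}$. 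Writing $b(t) \coloneqq J_{\tau}(t)^{-1}\bm{a}$, the eigenvalue bounds give $\lVert b(t)\rVert \ge 1/\lambda_{\max}(J_{\tau}(t)) \ge 1/(\overline{f}M)$, so $V(t) \ge \tau(1-\tau)\,\lambda_{\min}(\mathbb{E}[XX'])\,\lVert b(t)\rVert^2 \ge \tau(1-\tau)/(M(\overline{f}M)^2) > 0$ uniformly in $t$. The discrepancy obeys $\sup_{t \in \mathcal{T}}|V_n(t) - V(t)| \le \tau(1-\tau)\,(M/f_{\min})^2\,\lVert (\mathbb{E}_n - \mathbb{E})[XX']\rVert$, and since $X_iX_i'$ is a fixed finite-dimensional matrix, $\mathbb{E}_n[X_iX_i'] \to \mathbb{E}[XX']$ almost surely by the strong law of large numbers, with operator-norm convergence following from entrywise convergence. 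Thus $\sup_{t}|V_n(t)-V(t)| \to 0$ a.s., and therefore $\inf_{t \in \mathcal{T}} V_n(t) \ge \tfrac12 \inf_{t \in \mathcal{T}} V(t) > 0$ with probability approaching one, which furnishes $C_1$.

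The main obstacle here is organizational rather than technical. Unlike the earlier lemmas, no empirical-process machinery is required: the statement is essentially immediate once the conditional-variance formula is in hand, and the sole genuine point to watch is uniformity over the continuum $\mathcal{T}$. That difficulty dissolves because $\mathbb{E}_n[X_iX_i']$ does not depend on $t$, so uniformity reduces to the deterministic eigenvalue bounds on $J_{\tau}(t)^{-1}$ afforded by (A1)--(A3) together with a single almost-sure convergence of the sample second-moment matrix. One should only be mildly careful to phrase the lower bound as holding ``with probability approaching one'' (reflecting the randomness of $\mathbb{E}_n[X_iX_i']$) while the upper bound holds almost surely for every $n$, exactly as stated in \eqref{eq:60}.
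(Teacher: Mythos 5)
Your proof is correct and follows essentially the same route as the paper: the same explicit formula $\mathbb{E}[\tilde{G}_n(t)^2 \mid (X_i)_{i=1}^n] = \tau(1-\tau)\,\bm{a}'J_{\tau}(t)^{-1}\mathbb{E}_n[X_iX_i']J_{\tau}(t)^{-1}\bm{a}$, the same uniform eigenvalue bounds on $J_{\tau}(t)$ from (A1)--(A3), the bound $\lVert X\rVert \le \xi$ a.s.\ for the upper constant, and the law of large numbers for $\mathbb{E}_n[X_iX_i']$ for the lower constant. The only (cosmetic) difference is in the lower bound: the paper factors out $\lambda_{\min}(\mathbb{E}_n[X_iX_i'])$ and bounds it below by $\lambda_{\min}(\mathbb{E}[XX'])/2$ with probability approaching one, whereas you compare $V_n(t)$ to its population analogue $V(t)$ and control $\sup_t|V_n(t)-V(t)|$ uniformly --- both arguments rest on exactly the same convergence of the sample Gram matrix.
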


\begin{lemma}
\label{lemma10}
Under Assumptions (A1)-(A5), for $\hat{J}_{\tau}(t)$ defined in \eqref{eq:29},  we have
\begin{align} \label{eq:35}
\sup_{t \in \mathcal{T}} \lVert \hat{J}_{\tau}(t) - J_{\tau}(t) \rVert \lesssim_{P} \left(\frac{\log T \log n}{n h_n} \right)^{1/2} + h_n + \delta_T, \\[-5pt]
\end{align}
if $h_n = o(1)$, $\delta_T = o(1)$ and $\log T \log n = o(n h_n)$.
\end{lemma}

\section{Proofs of Theorems} \label{theorems}
Throughout the following proofs, $C, C_1, C_2$, etc. will denote constants that do not depend on $n$ but may have different values in different parts of the proofs.

\begin{proof}[Proof of Theorem 1]
Our proof follows the arguments used in the proof of Theorem 5.1 in \citet{chao2017quantile}. With some rearrangement of terms, for each $t \in \bm{t}$, we have
\begin{align}
& \; \mathbb{P}_n\left[\psi(Y(t), X; \hat{\bm{\beta}}_{\tau}(t), \tau)\right] \\
= & \; n^{-1/2} \mathbb{G}_n\left[\psi(Y(t), X; \hat{\bm{\beta}}_{\tau}(t), \tau)\right] + \mathbb{E}\left[\psi(Y(t), X; \hat{\bm{\beta}}_{\tau}(t), \tau)\right] \\
= & \; n^{-1/2} \mathbb{G}_n\left[\psi(Y(t), X; \hat{\bm{\beta}}_{\tau}(t), \tau)\right] + \vartheta(\hat{\bm{\beta}}_{\tau}(t); t,\tau) \\
= & \; J_{\tau}(t)(\hat{\bm{\beta}}_{\tau}(t)-\bm{\beta}_{\tau}(t)) + n^{-1/2} \mathbb{G}_n\left[\psi(Y(t), X; \bm{\beta}_{\tau}(t), \tau)\right] + \\
  & \; \left\{ \vartheta(\hat{\bm{\beta}}_{\tau}(t); t,\tau) - \vartheta(\bm{\beta}_{\tau}(t); t,\tau) - J_{\tau}(t)(\hat{\bm{\beta}}_{\tau}(t)-\bm{\beta}_{\tau}(t)) \right\} + \\
  & \: \left\{ n^{-1/2} \mathbb{G}_n[\psi(Y(t), X; \hat{\bm{\beta}}_{\tau}(t), \tau)] - n^{-1/2} \mathbb{G}_n\left[\psi(Y(t), X; \bm{\beta}_{\tau}(t), \tau)\right] \right\}. \\[-5pt]
\end{align}
\begin{align}
\begin{split}
\text{Let} \;\; r_{n,1}(t,\tau) \coloneqq & \; J_{\tau}(t)^{-1}\mathbb{P}_n\left[\psi(Y(t), X; \hat{\bm{\beta}}_{\tau}(t), \tau)\right], \\
r_{n,2}(t,\tau) \coloneqq & \; - J_{\tau}(t)^{-1}\left\{ \vartheta(\hat{\bm{\beta}}_{\tau}(t); t,\tau) - \vartheta(\bm{\beta}_{\tau}(t); t,\tau) - J_{\tau}(t)(\hat{\bm{\beta}}_{\tau}(t)-\bm{\beta}_{\tau}(t)) \right\}, \\
r_{n,3}(t,\tau) \coloneqq & \; - n^{-1/2} J_{\tau}(t)^{-1}\left\{\mathbb{G}_n[\psi(Y(t), X; \hat{\bm{\beta}}_{\tau}(t), \tau)] - \mathbb{G}_n\left[\psi(Y(t), X; \bm{\beta}_{\tau}(t), \tau)\right] \right\}, \\[-5pt]
\end{split}
\end{align}
we then have 
\begin{align}
\begin{split}
\hat{\bm{\beta}}_{\tau}(t)-\bm{\beta}_{\tau}(t) = & \; - n^{-1/2} J_{\tau}(t)^{-1} \mathbb{G}_n\left[\psi(Y(t), X; \bm{\beta}_{\tau}(t), \tau)\right] + r_{n,1}(t,\tau) + r_{n,2}(t,\tau) + r_{n,3}(t,\tau) \\
= & \; -\frac{1}{n} J_{\tau}(t)^{-1} \sum_{i=1}^{n}\psi(Y_i(t), X_i; \bm{\beta}_{\tau}(t), \tau) + r_{n,1}(t,\tau) + r_{n,2}(t,\tau) + r_{n,3}(t,\tau). \\[-5pt]
\end{split}
\end{align}

Suppose Assumptions (A1)-(A3) hold. Under the more general condition $\log T \:\! \log n = o(n)$, we now bound the remainder terms $r_{n,j}(t,\tau) \; (j=1,2,3)$ as \eqref{eq:303}, \eqref{eq:304} and \eqref{eq:305}, from which Theorem 1 directly follows by further assuming $\log T \:\! \log n = o(n^{1/3})$.
\begin{align} \label{eq:303}
\sup_{t \in \bm{t}}\lVert r_{n,1}(t,\tau) \rVert \leq &\;  \frac{1}{\inf_{t \in \bm{t}}\lambda_{\min}(J_{\tau}(t))} \frac{\xi d}{n}  \quad \text{a.s.} \\[-5pt]
\end{align}

\noindent For any $\kappa_n = o(n)$, sufficiently large $n$, and a constant $C$ independent of $n$, 
\begin{align} \label{eq:304}
\!\!\!\! P\left(\sup_{t \in \bm{t}} \lVert r_{n,2}(t,\tau) \rVert \leq C\left[\left(\frac{\log T}{n}\right)^{1/2} \!\!\! + \frac{\log T}{n} + \left(\frac{\kappa_n}{n}\right)^{1/2} + \frac{\kappa_n}{n}\right]^2\right) \geq  1 - e^{-\kappa_n}.\\[-5pt]
\end{align}
\begin{align} \label{eq:305}
P\left(\sup_{t \in \bm{t}} \lVert r_{n,3}(t,\tau) \rVert \leq C\left[\left(\frac{\log T}{n} \log n \right)^{1/2} + \left(\frac{\kappa_n}{n} \right)^{1/2} \right]^{3/2}\right) \geq 1 - 2 e^{-\kappa_n}. \\[-5pt]
\end{align}

\bigskip

\textbf{Bound on the first residual term.}  Using standard arguments on duality theory for convex optimization, which are detailed in Lemma 34 in \citet{belloni2019conditional}, we obtain
\begin{align}
\sup_{t \in \bm{t}} \left\lVert \mathbb{P}_n\left[\psi(Y_i(t),X_i; \hat{\bm{\beta}}_{\tau}(t),\tau) \right] \right\rVert \leq \frac{d}{n} \max_{1\leq i \leq n} \lVert X_i \rVert = \frac{\xi d}{n}. \\[-5pt]
\end{align}

\noindent Therefore, 
\begin{align}
\sup_{t \in \bm{t}} \lVert r_{n,1}(t,\tau) \rVert  = & \; \sup_{t \in \bm{t}} \left\lVert J_{\tau}(t)^{-1}\mathbb{P}_n \! \left[\psi(Y(t), X; \hat{\bm{\beta}}_{\tau}(t), \tau)\right] \right\rVert  \\
\leq & \; \sup_{t \in \bm{t}}\lambda_{\max}(J_{\tau}(t)^{-1}) \sup_{t \in \bm{t}} \left\lVert \mathbb{P}_n \! \left[\psi(Y_i(t),X_i; \hat{\bm{\beta}}_{\tau}(t),\tau) \right] \right\rVert = \frac{1}{\inf_{t \in \bm{t}}\lambda_{\min}(J_{\tau}(t))} \frac{\xi d}{n}, \\[-5pt]
\end{align}
which is the inequality \eqref{eq:303}.

\bigskip

\textbf{Bound on the second and third residual terms.} To bound $r_{n,2}$ as given in inequality \eqref{eq:304}, observe that by Lemma \ref{lemma2} with $\upsilon = 2$, we have
\begin{align}
\Omega_{1,n} \coloneqq \left\{ \sup_{t \in \bm{t}} \lVert \hat{\bm{\beta}}_{\tau}(t) - \bm{\beta}_{\tau}(t) \rVert \leq \frac{4 s_{n,1}}{\inf_{t \in \bm{t}}\lambda_{\min}(J_{\tau}(t))} \right\} \supseteq \left\{s_{n,1} < \frac{\inf_{t \in \bm{t}} \lambda^2_{\min}(J_{\tau}(t))}{8\xi \overline{f'} \lambda_{\max}(\mathbb{E}[XX'])} \right\} \eqqcolon \Omega_{2,n}. \\[-5pt]
\end{align}

\noindent where $s_{n,1} \coloneqq \lVert \mathbb{P}_n - P \rVert _{\mathcal{G}_1}$. Define the event
\begin{align}
\Omega_{3,n} \coloneqq \left\{s_{n,1} \leq C\left[\left(\frac{\log T}{n}\right)^{1/2} \!\!\! + \frac{\log T}{n} + \left(\frac{\kappa_n}{n}\right)^{1/2} + \frac{\kappa_n}{n}\right] \right\}, \\[-5pt]
\end{align}

\noindent We have $P(\Omega_{3,n}) \geq 1 - e^{-\kappa_n}$ from Lemma \ref{lemma4}. Under the assumptions $\log T \log n = o(n)$ and $\kappa_n = o(n)$, for sufficiently large $n$, 
\begin{align} \label{eq:61}
C\left[\left(\frac{\log T}{n}\right)^{1/2} \!\!\! + \frac{\log T}{n} + \left(\frac{\kappa_n}{n}\right)^{1/2} + \frac{\kappa_n}{n}\right] \leq \frac{\inf_{t \in \bm{t}} \lambda^2_{\min}(J_{\tau}(t))}{8\xi \overline{f'} \lambda_{\max}(\mathbb{E}[XX'])}.\\[-5pt]
\end{align}

\noindent Therefore, for all $n$ for which \eqref{eq:61} holds, $\Omega_{3,n} \subseteq \Omega_{2,n} \subseteq \Omega_{1,n}$. Given this, for a constant $C_2$ that does not depend on $n$, we have
\begin{align}
\Omega_{3,n} \subseteq \left\{ \sup_{t \in \bm{t}} \lVert \hat{\bm{\beta}}_{\tau}(t) - \bm{\beta}_{\tau}(t) \rVert \leq C_2 \left[\left(\frac{\log T}{n}\right)^{1/2} \!\!\! + \frac{\log T}{n} + \left(\frac{\kappa_n}{n}\right)^{1/2} + \frac{\kappa_n}{n}\right] \right\} \eqqcolon \Omega_{4,n}. \\[-5pt]
\end{align}

\noindent In particular, for all $n$ for which \eqref{eq:61} holds, $P\left(\Omega_{4,n}\right) \geq 1 - e^{-\kappa_n}$. On $\Omega_{4,n}$, by Lemma \ref{lemma1}, for $\forall \; t \in \bm{t}$,
\begin{align}
& \; \left\lVert \vartheta(\hat{\bm{\beta}}_{\tau}(t); t, \tau) - \vartheta(\bm{\beta}_{\tau}(t); t, \tau) - J_{\tau}(t)(\hat{\bm{\beta}}_{\tau}(t)-\bm{\beta}_{\tau}(t)) \right\rVert \\
\leq & \;  \lambda_{\max}(\mathbb{E}[XX'])\overline{f'} \: \xi \:C_2^2 \left[\left(\frac{\log T}{n}\right)^{1/2} \!\!\! + \frac{\log T}{n} + \left(\frac{\kappa_n}{n}\right)^{1/2} + \frac{\kappa_n}{n}\right]^2. \\[-5pt]
\end{align}

\noindent Therefore, 
\begin{align}
 & \; \left\lVert J_{\tau}(t)^{-1} \left\{\vartheta(\hat{\bm{\beta}}_{\tau}(t); t, \tau) - \vartheta(\bm{\beta}_{\tau}(t); t, \tau) - J_{\tau}(t)(\hat{\bm{\beta}}_{\tau}(t)-\bm{\beta}_{\tau}(t)) \right\} \right\rVert \\
\leq & \; \lambda_{\max}(J_{\tau}(t)^{-1}) \left\lVert \vartheta(\hat{\bm{\beta}}_{\tau}(t); t, \tau) - \vartheta(\bm{\beta}_{\tau}(t); t, \tau) - J_{\tau}(t)(\hat{\bm{\beta}}_{\tau}(t)-\bm{\beta}_{\tau}(t)) \right\rVert \\
\leq & \; \frac{1}{\inf_{t \in \bm{t}}\lambda_{\min}(J_{\tau}(t))} \lambda_{\max}(\mathbb{E}[XX'])\overline{f'} \: \xi \:C_2^2 \left[\left(\frac{\log T}{n}\right)^{1/2} \!\!\! + \frac{\log T}{n} + \left(\frac{\kappa_n}{n}\right)^{1/2} + \frac{\kappa_n}{n}\right]^2. \\[-5pt]
\end{align}

\noindent We then have
\begin{align}
\Omega_{4,n} \subseteq \left\{\sup_{t \in \bm{t}} \lVert r_{n,2}(t,\tau) \rVert \leq C_3 \left[\left(\frac{\log T}{n}\right)^{1/2} \!\!\! + \frac{\log T}{n} + \left(\frac{\kappa_n}{n}\right)^{1/2} + \frac{\kappa_n}{n}\right]^2 \right\}, \\[-5pt] 
\end{align}
for some constant $C_3$ that is independent of $n$, and this gives inequality \eqref{eq:304}.

\bigskip

To bound $r_{n,3}$ as given in inequality \eqref{eq:305}, first observe that for any $\delta > 0$, on the set $\left\{\sup_{t \in \bm{t}} \lVert \hat{\bm{\beta}}_{\tau}(t) - \bm{\beta}_{\tau}(t) \rVert \leq \delta \right\}$, we have the following inequality
\begin{align} \label{eq:62}
\sup_{t \in \bm{t}}\lVert r_{n,3}(t,\tau) \rVert \leq \frac{1}{\inf_{t \in \bm{t}} \lambda_{\min}(J_{\tau}(t))} \lVert \mathbb{P}_n - P \rVert _{\mathcal{G}_2(\delta)}. \\[-5pt]
\end{align}

\noindent To see this, note that $\forall \: t \in \bm{t}$,
\begin{align}
& \; \lVert r_{n,3}(t,\tau) \rVert \\
\leq & \; \lVert J_{\tau}(t)^{-1} \rVert \: \left\lVert n^{-1/2}\left( \mathbb{G}_n[\psi(Y(t), X; \hat{\bm{\beta}}_{\tau}(t), \tau)] - \mathbb{G}_n\left[\psi(Y(t), X; \bm{\beta}_{\tau}(t), \tau)\right] \right) \right\rVert \\
= & \; \lVert J_{\tau}(t)^{-1} \rVert \sup_{\bm{u} \in \mathcal{S}^{d-1}} \left\{ \mathbb{P}_n \left[\bm{u}'X \left(\mathbf{1}\left\{Y(t) \leq X'\hat{\bm{\beta}}_{\tau}(t) \right\} - \tau \right) \right] \right. \\
\mathrel{\phantom{=}} & \left. - \; \mathbb{E} \left[\bm{u}'X\left(\mathbf{1}\left\{Y(t) \leq X'\hat{\bm{\beta}}_{\tau}(t) \right\} - \tau \right) \right] - \mathbb{P}_n \left[\bm{u}'X \left(\mathbf{1}\left\{Y(t) \leq X'\bm{\beta}_{\tau}(t) \vphantom{\left(\lambda\right)^2} \right\} - \tau \right) \right] \right. \\
\mathrel{\phantom{=}} & \left. + \; \mathbb{E} \left[\bm{u}'X \left(\mathbf{1}\left\{Y(t) \leq X'\bm{\beta}_{\tau}(t) \vphantom{\mathbb{P}_n \left[\bm{u}'X \left(\mathbf{1}\left\{Y(t) \leq X'\hat{\bm{\beta}}_{\tau}(t) \right\} - \tau \right) \right]} \right\} - \tau \right) \right] \right\} \\
= & \; \lVert J_{\tau}(t)^{-1} \rVert \sup_{\bm{u} \in \mathcal{S}^{d-1}} \left\{ \mathbb{P}_n \left[\bm{u}'X \left(\mathbf{1}\left\{Y(t) \leq X'\hat{\bm{\beta}}_{\tau}(t) \right\} - \mathbf{1}\left\{Y(t) \leq X'\bm{\beta}_{\tau}(t) \vphantom{\left(\lambda\right)^2}\right\} \right) \right] \right. \\
\mathrel{\phantom{=}} & \left.\kern-\nulldelimiterspace - \; \mathbb{E} \left[\bm{u}'X \left(\mathbf{1}\left\{Y(t) \leq X'\hat{\bm{\beta}}_{\tau}(t) \right\} - \mathbf{1}\left\{Y(t) \leq X'\bm{\beta}_{\tau}(t) \vphantom{\left(\lambda\right)^2} \right\} \right) \right]\right\} \\
\leq & \; \lVert J_{\tau}(t)^{-1} \rVert \lVert \mathbb{P}_n - P \rVert _{\mathcal{G}_2(\delta)} = \lambda_{\max}(J_{\tau}(t)^{-1}) \lVert \mathbb{P}_n - P \rVert _{\mathcal{G}_2(\delta)} \leq \frac{1}{\inf_{t \in \bm{t}} \lambda_{\min}(J_{\tau}(t))} \lVert \mathbb{P}_n - P \rVert _{\mathcal{G}_2(\delta)}. \\[-5pt] 
\end{align}

\noindent Therefore, for any $\delta, a > 0$, 
\begin{align} \label{eq:63}
P(\sup_{t \in \bm{t}} \lVert r_{n,3}(t,\tau) \rVert \!\geq a) \leq P(\sup_{t \in \bm{t}} \lVert \hat{\bm{\beta}}_{\tau}(t) - \bm{\beta}_{\tau}(t) \rVert \!\geq \delta) \!+\! P(\frac{\lVert \mathbb{P}_n - P \rVert _{\mathcal{G}_2(\delta)}}{\inf_{t \in \bm{t}} \lambda_{\min}(J_{\tau}(t))} \!\geq a).  \\[-5pt]
\end{align}
 
\noindent Now let $\delta = \delta_n \coloneqq C \left(\left(\frac{\log T}{n}\log n \right)^{1/2} + \left(\frac{\kappa_n}{n}\right)^{1/2}\right)$ for some constant $C$, and
\begin{align} \label{eq:64}
a \coloneqq & C \zeta_n(\delta_n, \kappa_n) \\
= & C \left[ C^{1/2} \left(\left(\frac{\log T}{n}\log n \right)^{1/2} \!\!\! +  \! \left(\frac{\kappa_n}{n}\right)^{1/2}\right)^{1/2} \left(\left(\frac{\log T}{n}\log n \right)^{1/2} \!\!\! + \! \left(\frac{\kappa_n}{n}\right)^{1/2}\right) + \frac{\log T}{n}\log n + \frac{\kappa_n}{n} \right], \\[-5pt]
\end{align}
where $\zeta_n$ is defined in \eqref{eq:55} in Lemma \ref{lemma4}. Given  $\log T \log n = o(n)$ and $\kappa_n = o(n)$, the last two terms in the aforementioned expression are negligible compared to the first term for large $n$. Therefore, for some sufficiently large constant $C_1$, 
\begin{align}
a = C \zeta_n(\delta_n, \kappa_n) \leq C_1 \left[\left(\frac{\log T}{n} \log n \right)^{1/2} + \left(\frac{\kappa_n}{n} \right)^{1/2} \right]^{3/2}. \\[-5pt]
\end{align}

\noindent Given $\delta_n \gg n^{-1}$, apply inequality \eqref{eq:54} in Lemma \ref{lemma4} to obtain
\begin{align} \label{eq:65}
\begin{split}
& \; P\left(\frac{\lVert \mathbb{P}_n - P \rVert _{\mathcal{G}_2(\delta_n)}}{\inf_{t \in \bm{t}} \lambda_{\min}(J_{\tau}(t))} \geq C_1 \left[\left(\frac{\log T}{n} \log n \right)^{1/2} + \left(\frac{\kappa_n}{n} \right)^{1/2} \right]^{3/2}\right)\\
\leq & \; P\left(\frac{\lVert \mathbb{P}_n - P \rVert _{\mathcal{G}_2(\delta_n)}}{\inf_{t \in \bm{t}} \lambda_{\min}(J_{\tau}(t))} \geq  C \zeta_n(\delta_n, \kappa_n) \right) \leq e^{-\kappa_n}. \\[-5pt]
\end{split}
\end{align}

\noindent Recall that $P\left(\sup_{t \in \bm{t}} \lVert \hat{\bm{\beta}}_{\tau}(t) - \bm{\beta}_{\tau}(t) \rVert \geq C_2 \left[\left(\frac{\log T}{n}\right)^{1/2} \! + \frac{\log T}{n} + \left(\frac{\kappa_n}{n}\right)^{1/2} \! + \frac{\kappa_n}{n}\right] \right) \leq e^{-\kappa_n}$ for some constant $C_2$. Given $C_2 \left[\left(\frac{\log T}{n}\right)^{1/2} \! + \frac{\log T}{n} + \left(\frac{\kappa_n}{n}\right)^{1/2} \! + \frac{\kappa_n}{n}\right] \leq \delta_n$ for large $n$, we have 
\begin{align} \label{eq:66}
P\left(\sup_{t \in \bm{t}} \lVert \hat{\bm{\beta}}_{\tau}(t) - \bm{\beta}_{\tau}(t) \rVert \geq \delta_n \right) \leq e^{-\kappa_n}, \\[-5pt]
\end{align}
for large $n$. Given \eqref{eq:63}, inequality \eqref{eq:305} follows directly from \eqref{eq:65} and \eqref{eq:66}. 

\bigskip

Now further assume that $\log T \log n = o(n^{1/3})$ and $\kappa_n = o(n^{1/3})$, then for any $\epsilon > 0$, with sufficiently large $n$, \eqref{eq:304} and \eqref{eq:305} lead to $P(\sqrt{n}\sup_{t \in \bm{t}} \lVert r_{n,j}(t,\tau) \rVert \leq \epsilon) \geq 1 - 2 e^{-\kappa_n}$ for $j = 2, 3$. Combined with \eqref{eq:303}, Theorem 1 immediately follows.
\end{proof}

\bigskip

\begin{proof}[Proof of Theorem 2]
Theorem 2 directly follows from Theorem 4 by observing that $\hat{\mu}_n(t) =\hat{\mu}_n(t)^{LI}$ and $\mu(t) = \tilde{\mu}(t)$ for any $t \in \bm{t}$.
\end{proof}

\bigskip

\begin{proof}[Proof of Theorem 3]
With $\mathbb{G}_n(t)$ defined in (3.1), the following holds for any $t \in [t_l, t_{l+1}]$,
\begin{align}\label{eq:58}
& \: \hat{\mu}_n(t)^{LI} + \frac{1}{\sqrt{n}} \mathbb{G}_n(t) \\ 
= & \; \frac{t_{l+1}- t}{t_{l+1}-t_l} \hat{\mu}_n(t_l) +  \frac{t- t_l}{t_{l+1}-t_l} \hat{\mu}_n(t_{l+1}) + \frac{1}{\sqrt{n}} \mathbb{G}_n(t)  \\
= & \;  \frac{t_{l+1}- t}{t_{l+1}-t_l} \left(\mu(t_l) - \frac{1}{\sqrt{n}} \mathbb{G}_n(t_l) + \bm{a}' r_n(t_l, \tau) \right) \\
\mathrel{\phantom{=}} & + \frac{t- t_l}{t_{l+1}-t_l} \left(\mu(t_{l+1}) - \frac{1}{\sqrt{n}} \mathbb{G}_n(t_{l+1}) + \bm{a}' r_n(t_{l+1}, \tau) \right) +  \frac{1}{\sqrt{n}} \mathbb{G}_n(t). \quad (\text{by Theorem 1}) \\[-5pt]
\end{align}

\noindent Therefore, \\[-5pt]
\begin{align}\label{eq:67}
\begin{split}
& \; \left|\hat{\mu}_n(t)^{LI} - \tilde{\mu}(t) + \frac{1}{\sqrt{n}} \mathbb{G}_n(t) \right| \\
= & \; \left|\frac{1}{\sqrt{n}} \mathbb{G}_n(t) - \frac{t_{l+1}- t}{t_{l+1}-t_l} \frac{1}{\sqrt{n}} \mathbb{G}_n(t_l) - \frac{t- t_l}{t_{l+1}-t_l} \frac{1}{\sqrt{n}} \mathbb{G}_n(t_{l+1}) \right. \\
\mathrel{\phantom{=}} & \left.\kern-\nulldelimiterspace + \: \frac{t_{l+1}- t}{t_{l+1}-t_l} \bm{a}'r_n(t_l, \tau) + \frac{t- t_l}{t_{l+1}-t_l} \bm{a}' r_n(t_{l+1}, \tau) \right| \\
\leq & \; \frac{1}{\sqrt{n}} \sup_{v,\: s \in \mathcal{T}, \: |v-s| \leq \delta_{T(n)}} \left|\mathbb{G}_n(v) - \mathbb{G}_n(s) \right| \: + \:  \sup_{t \in \bm{t}} \: \lVert r_n(t,\tau) \rVert. \\[-5pt]
\end{split}
\end{align}

\noindent In \eqref{eq:67}, we explicitly write $T$ as a function of $n$ in $\delta_{T(n)}$ to emphasize its dependence on $n$. We have shown that $\sup_{t \in \bm{t}} \: \lVert r_n(t,\tau) \rVert = o_p(n^{-1/2})$ in Theorem 1 for $\log T \log n = o(n^{1/3})$. If we can show that 
\begin{align}\label{eq:68}
\sup_{v,\: s \: \in \mathcal{T}, \: |v-s| \leq \delta_{T(n)}} \left|\mathbb{G}_n(v) - \mathbb{G}_n(s) \right| = o_p(1), \\[-5pt]
\end{align}
for $\delta_{T(n)} = o(1)$, then (4.3) immediately follows from \eqref{eq:67}. Therefore, it remains to show \eqref{eq:68}, i.e., we need to prove that for $\forall \: c > 0$, 
\begin{align}\label{eq:69}
\limsup_{n \rightarrow \infty} \: P \left(\sup_{v,\: s \: \in \mathcal{T}, \: |v-s| \leq \delta_{T(n)}} \left|\mathbb{G}_n(v) - \mathbb{G}_n(s) \right| > c \right) = 0. \\[-5pt]
\end{align}

We now prove \eqref{eq:69} by contradiction. Define $\zeta \left(n, \: \delta \right) \coloneqq \sup_{v,\: s \: \in \mathcal{T}, \: |v-s| \leq \delta } \left|\mathbb{G}_n(v) - \mathbb{G}_n(s) \right|$. If \eqref{eq:69} does not hold, i.e., $\limsup_{n \rightarrow \infty} \: P \left( \zeta \left(n, \: \delta_{T(n)} \right)  > c \right) = \epsilon$ 
for some $\epsilon > 0$, then there exists a subsequence $\{n_k\}_{k \geq 1}$ such that 
\begin{align}\label{eq:72}
\lim_{k \rightarrow \infty} \: P \left( \zeta \left(n_k,\: \delta_{T(n_k)} \right) > c \right) = \epsilon. \\[-5pt]
\end{align}

\noindent Under Assumptions (A1)-(A6), we have proven in Lemma \ref{lemma8} that
\begin{align}
\lim_{\delta \downarrow 0} \limsup_{n \rightarrow \infty} P\left( \zeta \left(n, \: \delta \right) > c \right) = 0.  \quad (\text{asymptotic equicontinuity}) \\[-5pt]
\end{align}

\noindent Therefore, there exists $\delta_0 > 0$ such that
\begin{align} \label{eq:73} 
\limsup_{n \rightarrow \infty} P\left( \zeta \left(n, \: \delta_0 \right) > c \right) < \epsilon/2. \\[-5pt]
\end{align}

\noindent By assumption, $\lim_{k \rightarrow \infty} \delta_{T(n_k)} = 0$, so we can take sufficiently large $K_0$ such that $\delta_{T(n_{K_0})} < \delta_0$, and for $\forall \; k \geq K_0$,
\begin{align}\label{eq:74}
P \left( \zeta \left(n_k,\: \delta_0 \right) > c \right) \geq P \left( \zeta \left(n_k,\: \delta_{T(n_k)} \right) > c \right)  > \epsilon/2, \quad (\text{by} \;\; \eqref{eq:72}) \\[-5pt]
\end{align}
which contradicts with \eqref{eq:73}. Therefore, \eqref{eq:68} holds and (4.3) follows from \eqref{eq:67}. (4.4) is then a direct consequence of (4.3) and Lemma \ref{lemma8}. 
\end{proof}

\bigskip

\begin{proof}[Proof of Theorem 4]
Provided we can show that for any given sequence of non-stochastic, bounded vectors $(X_i)_{i=1}^n$ in $\mathbb{R}^d$, there exists a sequence of zero-mean Gaussian processes $(\tilde{G}_n)_{n \geq 1}$ such that 
\begin{itemize}
\item[(i)] the sample paths of $\tilde{G}_n$ are a.s. continuous and the covariance functions of $\tilde{G}_n$ coincide with those of $\mathbb{G}_n$ for each $n$, i.e., $\mathbb{E}\left[\tilde{G}_n(t) \tilde{G}_n(s)\right] = \mathbb{E}\left[\vphantom{\tilde{G}_n(t)} \mathbb{G}_n(t) \mathbb{G}_n(s) \right]$ for all $t,\: s \in \mathcal{T}$; 
\item[(ii)] $\tilde{G}_n$ closely approximates $\mathbb{G}_n$ in sup norm, i.e., $\sup_{t \in \mathcal{T}} \left|\tilde{G}_n(t) - \mathbb{G}_n(t) \right| = o_p(1)$,
\end{itemize}
then conditioning on $\left(X_i\right)_{i=1}^{n}$, (4.7) in Theorem 4 immediately follows from (4.3) in Theorem 3. If we additionally assume that $\delta_{T} = o(n^{-1/2})$, (4.8) then follows from (4.7) by observing
\begin{align}
\left|\tilde{\mu}(t) - \mu(t) \right|  = \left|\frac{t_{l+1}- t}{t_{l+1}-t_l} \mu(t_l) + \frac{t - t_l}{t_{l+1}-t_l} \mu(t_{l+1}) - \mu(t) \right| \leq \delta_T \overline{\mu'} = o(1/\sqrt{n}). \\[-5pt]
\end{align}

We next prove (i) and (ii) following arguments used in the proof of Lemma 14 in \citet{belloni2019conditional}. More specifically, we first define a sequence of projections $\pi_j: \mathcal{T} \rightarrow \mathcal{T}, \; j= 0, 1, 2, \dots, \infty$ by $\pi_j(t) = l/2^j$ if $t \in \left((l-1)/2^j, l/2^j\right]$, for $l=1, \dots, 2^j$. Given a process $G$ in $l^{\infty}(\mathcal{T})$, the sample paths of its projection $G \circ \pi_j$ are by definition step functions with at most $2^j$ steps. Therefore, we can identify the process $G \circ \pi_j$ with a random vector $G \circ \pi_j$ in $\mathbb{R}^{2^j}$. Similarly, we can also identify a random vector $W$ in $\mathbb{R}^{2^j}$ with a process $W$ in $l^{\infty}(\mathcal{T})$ whose sample paths are step functions with at most $2^j$ steps. The proof of (i) and (ii) then consists of the following 4 steps, for some $j=j_n \rightarrow \infty$:
\begin{itemize}
\item[(1)] $\tilde{r}_{n,1} = \sup_{t \in \mathcal{T}} \left|\mathbb{G}_n(t) - \mathbb{G}_n \circ \pi_j(t) \right| = o_p(1)$;
\item[(2)] there exists $\mathcal{N}_{nj} \overset{d}{=} \text{MVN}\left(0, \text{Cov}\left[\mathbb{G}_n \circ \pi_j \right]\right)$ such that $\tilde{r}_{n,2} = \lVert \mathcal{N}_{nj} - \mathbb{G}_n \circ \pi_j \rVert = o_p(1)$;
\item[(3)] there exists a Gaussian process $\tilde{G}_n$ satisfying (i) such that $\mathcal{N}_{nj} = \tilde{G}_n \circ \pi_j$ a.s.;
\item[(4)] $\tilde{r}_{n,3} = \sup_{t \in \mathcal{T}} \left|\tilde{G}_n(t) - \tilde{G}_n \circ \pi_j(t) \right| = o_p(1)$.
\end{itemize}

\noindent Given (1)-(4), the existence of a sequence of Gaussian processes $(\tilde{G}_n)_{n \geq 1}$ that satisfy both (i) and (ii) follows directly from the triangle inequality, i.e., $\sup_{t \in \mathcal{T}} \left|\tilde{G}_n(t) - \mathbb{G}_n(t) \right| \leq \tilde{r}_{n,1} + \tilde{r}_{n,2} + \tilde{r}_{n,3} = o_p(1)$.

We now prove relations (1)-(4).

\textit{Proof of step (1):} We have shown in \eqref{eq:68} in Theorem 3, which in turn depends on the asymptotic tightness of the process $\mathbb{G}_n(\cdot)$ shown in Lemma \ref{lemma8}, that for any sequence $\delta_n \downarrow 0$, 
\begin{align} \label{eq:75}
\sup_{|t-s| \leq \delta_n} \left|\mathbb{G}_n(t) - \mathbb{G}_n(s) \right| = o_p(1). \\[-5pt]
\end{align}

\noindent We note that for the empirical processes $\mathbb{G}_n$ defined in \eqref{eq:68}, the covariates $(X_i)_{i=1}^n$ are assumed to be i.i.d. random vectors in $\mathbb{R}^d$, but \eqref{eq:68} and its proof also hold for any sequence of non-stochastic vectors $(X_i)_{i=1}^n$ in $\mathbb{R}^d$ such that $\lVert X_i \rVert$ is bounded a.s. for each $i$ (Assumption (A1)). Therefore, for any given sequence of such non-stochastic vectors $(X_i)_{i=1}^n$, \eqref{eq:75} still holds and leads to
\begin{align}
\tilde{r}_{n,1} = \; \sup_{t \in \mathcal{T}} \left|\mathbb{G}_n(t) - \mathbb{G}_n \circ \pi_j(t) \right|  \leq  \sup_{|t-s| \leq 2^{-j_n}} \left|\mathbb{G}_n(t) - \mathbb{G}_n(s) \right| = o_p(1), \;\; \text{for any} \;\; j_n \rightarrow \infty. \\[-5pt]
\end{align}

\textit{Proof of step (2):} We use Yurinskii's coupling to show relation (2). For completeness, we cite Yurinskii's coupling from \citet{belloni2019conditional}. Let $V_1, \dots, V_n$ be independent zero-mean $p$-vectors such that $\kappa \coloneqq \sum_{i=1}^n \mathbb{E}\left[\lVert V_i \rVert^3 \right]$ is finite. Let $S = V_1 + \dots + V_n$. Then for each $\delta > 0$, there exists a random vector $W$ with a $\text{MVN}(0, \text{Cov}(S))$ distribution such that
\begin{align} \label{eq:76}
P\left(\lVert S - W \rVert > 3\delta \right) \leq C_0 B \: \left(1 + \frac{|\log(1/B)|}{p} \right) \quad \text{where} \;\; B \coloneqq \kappa p \delta^{-3}, \\[-5pt]
\end{align}
for some universal constant $C_0$. 

Now apply the coupling to the zero-mean $2^j$-vectors $V_i \; (i=1, \dots, n)$ such that the $l$-th component of $V_i$ is $V_{i,\: l} = \bm{a}' J_{\tau}(t_l)^{-1} X_i\left(\mathbf{1}\{Y_i(t_l) \leq X_i'\bm{\beta}_{\tau}(t_l)\} - \tau\right)$, where $t_l = l/2^j$ and $l=1, \dots, 2^j$. By definition of $V_i$, we have $\mathbb{G}_n \circ \pi_j = \sum_{i=1}^n V_i/\sqrt{n}$. Then
\begin{align}
\lVert V_i \rVert^2 = \sum_{l=1}^{2^j} V_{i,\: l}^2 \leq \sum_{l=1}^{2^j} \left|\bm{a}' J_{\tau}(t_l)^{-1} X_i \right|^2 \leq \sum_{l=1}^{2^j} \lambda_{\max} \left(J_{\tau}(t_l)^{-1} \right)^2 \xi^2 \leq \left(\frac{1}{\inf_{t\in \mathcal{T}}\lambda_{\min}(J_{\tau}(t))}\right)^{2} 2^j \xi^2. \\[-5pt]
\end{align}

\noindent Therefore, $\lVert V_i \rVert^3 = \left( \lVert V_i \rVert^2 \right)^{3/2} \lesssim 2^{3j/2}$, and $\sum_{i=1}^n \mathbb{E}\left[\lVert V_i \rVert^3 \right] \lesssim n \: 2^{3j/2}$. Here we use $\lesssim$ in $a_n \lesssim b_n$ to denote that $a_n \leq C b_n$ holds for all $n$ with a constant $C$ that does not depend on $n$. 

Now choose $j=j_n$ such that $2^{j_n} = n^{\tilde{\epsilon}}$ for some $\tilde{\epsilon} > 0$. By \eqref{eq:76}, there exists $\mathcal{N}_{nj} \overset{d}{=} \text{MVN}\left(0, \text{Cov}\left[\mathbb{G}_n \circ \pi_j \right]\right)$ such that 
\begin{align} \label{eq:77}
P\left(\left\lVert \frac{\sum_{i=1}^n V_i}{\sqrt{n}} - \mathcal{N}_{nj} \right \rVert \geq 3 \delta \right) \lesssim \frac{2^{5j/2}}{\delta^3 n^{1/2}} \left(1+ \frac{\left|\log \frac{\delta^3 n^{1/2}}{2^{5j/2}} \right|}{2^j} \right). \\[-5pt]
\end{align}

\noindent Setting $\delta_n = \left(2^{5j} \log n / n \right)^{1/6}$, the second term in the r.h.s. of \eqref{eq:77} goes to 0, so
\begin{align}
P\left(\left\lVert \frac{\sum_{i=1}^n V_i}{\sqrt{n}} - \mathcal{N}_{nj} \right \rVert \geq 3 \delta_n \right) \lesssim \frac{2^{5j/2}}{\delta^3 n^{1/2}} = \frac{1}{\left(\log n \right)^{1/2}} \downarrow 0. \\[-5pt]
\end{align}

\noindent The proof of step (2) is completed if we have $\delta_n \downarrow 0$. To achieve this, we can choose $\tilde{\epsilon}$ such that $n^{5 \tilde{\epsilon}} \log n = o(n)$. 

\textit{Proof of step (3):} The existence of a Gaussian process $\tilde{G}_n$ with properties stated in (i) such that $\mathcal{N}_{nj} = \tilde{G}_n \circ \pi_j$ a.s. can be established using Lemma 17 in \citet{belloni2019conditional}.

\textit{Proof of step (4):} We first show that for a sequence of zero-mean Gaussian processes $(\tilde{G}_n)_{n \geq 1}$ that satisfy properties stated in (i), for any $\gamma \in (0, 1)$, 
\begin{align} \label{eq:78}
\sup_{|t-s| \leq \gamma} \left| \tilde{G}_n(t) - \tilde{G}_n(s) \right| = O_p \left(\sqrt{\gamma \log(1/\gamma)} \right). \\[-5pt]
\end{align}
% Again, $\left(X_i\right)_{i=1}^{n}$ are assumed to be non-stochastic and bounded in \eqref{eq:78}.
To show this, for each $n$, we define the following zero-mean Gaussian process $Z_n: \mathcal{T} \times \mathcal{T} \rightarrow \mathbb{R}$:
\begin{align} \label{eq:79}
Z_{n,u} = \tilde{G}_n (t) - \tilde{G}_n (s), \quad  u=(t, s) \in \mathcal{U}, \\[-5pt]
\end{align}  
where $\mathcal{U} \coloneqq \{(t,s): \: t, s \in \mathcal{T}, \: |t-s| \leq \gamma \}$. We have $\sup_{u \in \mathcal{U}} Z_{n,u} = \sup_{|t-s| \leq \gamma} \left| \tilde{G}_n(t) - \tilde{G}_n(s) \right|$. For any $u \in \mathcal{U}$, 
\begin{align} 
\text{Var}\left[Z_{n,u}\right] = & \; \text{Var}\left[\tilde{G}_n(t) - \tilde{G}_n(s) \right] \\
= & \; \mathbb{E}\left[\tilde{G}_n(t)^2 \right] + \mathbb{E}\left[\tilde{G}_n(s)^2 \right] - 2 \: \mathbb{E}\left[\tilde{G}_n(t) \tilde{G}_n(s) \right] \\
= & \; \mathbb{E}\left[\mathbb{G}_n(t)^2 \right] + \mathbb{E}\left[\mathbb{G}_n(s)^2 \right] - 2 \: \mathbb{E}\left[\mathbb{G}_n(t) \mathbb{G}_n(s) \right] \\
= & \; \text{Var}\left[\mathbb{G}_n(t) - \mathbb{G}_n(s) \right] = \mathbb{E}\left[\left|\mathbb{G}_n(t) - \mathbb{G}_n(s) \right|^2 \right] \leq C_0 \left|t-s\right|,  \label{eq:84} \\[-5pt]
\end{align}
for some universal constant $C_0$ that does not depend on $t, s$ or $n$, based on some intermediate results in Lemma \ref{lemma8}. Therefore, $\sigma(Z_n) \coloneqq \sup_{u \in \mathcal{U}} \sigma(Z_{n,u}) \leq \left(C_0 \gamma \right)^{1/2}$.

Similarly, we can show that
\begin{align} \label{eq:81}
\rho_n(u, u') \coloneqq \sigma(Z_{n,u} - Z_{n,u'}) \leq \left(2\:C_0\: \lVert u - u' \rVert_1 \right)^{1/2}, \\[-5pt]
\end{align}
which suggests that 
\begin{align} \label{eq:82}
N(\epsilon,\: \mathcal{U}, \: \rho_n) \leq \left(\frac{L}{\epsilon} \right)^V, \quad \text{for all} \;\; 0 < \epsilon < \epsilon_0, \\[-5pt]
\end{align}
holds for $\epsilon_0 = \sigma(Z_n)$, $L = \gamma^{1/4}$, and $V = 4$. In \eqref{eq:82}, $N(\epsilon,\: \mathcal{U}, \: \rho_n)$ denotes the covering number of $\mathcal{U}$ by $\epsilon$-balls with respect to the metric $\rho_n(u, u')$ in \eqref{eq:81}. Invoking Proposition A.2.7 in \citet{van1996weak}, we have that for any large enough constant $C$, 
\begin{align} \label{eq:83}
\begin{split}
P\left(\sup_{u \in \mathcal{U}} Z_{n,u} > C \lambda_0 \right) \leq & \; \left(\frac{D L C \lambda_0}{\sqrt{V}\:\sigma^2(Z_n)} \right)^V \overline{\Phi}\left(\frac{C \lambda_0}{\sigma(Z_n)}\right), \\
\lesssim & \; \left(\frac{L C \lambda_0}{\sigma^2(Z_n)} \right)^4 \frac{\sigma(Z_n)}{C \lambda_0} \exp \left[-\frac{1}{2} \frac{C^2 \lambda_0^2}{\sigma^2(Z_n)} \right],  \; (*) \\[-5pt]
\end{split}
\end{align}
where $D$ is some universal constant, $\lambda_0 = \sqrt{C_0 \gamma \log(1/\gamma) }$, and $\overline{\Phi}$ denotes the right tail probability of a standard normal variable.  

To obtain $\sup_{u \in \mathcal{U}} Z_{n,u} = O_p(\lambda_0)$, we need to show that the r.h.s in the second line in \eqref{eq:83}, which we denote by $(*)$, goes to 0 for large enough $C$. To show this, let $\zeta = \lambda_0 / \sigma(Z_n)$, we have $(*) \propto C^3 \zeta^7 \exp \left[-\frac{1}{2} C^2 \zeta^2 \right]$. For any fixed $C$, the aforementioned expression is maximized at $\zeta = \frac{\sqrt{7}}{C}$. Substituting $\zeta = \frac{\sqrt{7}}{C}$ leads to $(*) \propto C^{-4}$, which is arbitrarily small for sufficiently large $C$. Therefore, \eqref{eq:78} follows.  

Now take $\gamma = \gamma_n = 2^{-j_n}$ for any $j_n \rightarrow \infty$ in \eqref{eq:78}, we have
\begin{align}
\tilde{r}_{n,3} \leq \sup_{|t-s| \leq 2^{-j_n}} \left|\tilde{G}_n(t) - \tilde{G}_n(s) \right| = O_p \left(\sqrt{2^{-j_n} \log(2^{j_n})} \right) = o_p(1). \\[-5pt]
\end{align}
This completes the proof.
\end{proof}

\bigskip

\begin{proof}[Proof of Theorem 5]
Given Theorem 4 and Proposition \ref{prop1},  Theorem 5 follows from the same arguments used in the proof of Theorem 15 in \citet{belloni2019conditional}.
\end{proof}

\bigskip

\begin{proof}[Proof of Theorem 6]
We first show that for any estimator $\check{\mu}$ of the coefficient function $\mu$
\begin{align} \label{eq:37} 
\lim_{c \rightarrow 0} \limsup\limits_{n \rightarrow \infty} \sup\limits_{\mu \in \mathcal{W}_2^{r}} P\left(\lVert \check{\mu} - \mu \rVert_{\mathcal{L}_2}^2 > c \:\! n^{-1} \right) = 1. \\[-5pt]
\end{align}
Given \eqref{eq:37},  we can follow the argument used in the proof of Theorem 2.1 in \citet{cai2011optimal} to conclude (4.15).  To show \eqref{eq:37}, it suffices to show that the minimax lower bound for estimating $\mu(t)$ at any given location $t$ is $O(n^{-1})$. To see this, consider the special case that $\mu(t)$ is constant over $\mathcal{T}$.  In this case, we have $\lVert \check{\mu} - \mu \rVert_{\mathcal{L}_2}^2 \geq \lVert \bar{\mu} - \mu \rVert_{\mathcal{L}_2}^2$ for $\bar{\mu} \equiv \int \check{\mu}(t) dt$, so we can always replace a given estimator $\check{\mu}(t)$ with $\bar{\mu} \equiv \int \check{\mu}(t) dt$ to reduce the L2 error. 

In the rest of the proof, we focus on estimation of $\mu(t)$ at a given $t$ by a measurable function $\check{\mu}(t)$ of the observed data $\left\{(X_i, \; Y_i(t))\right\}_{i=1}^{n}$, and omit the index $t$ in $Y(t)$, $\check{\mu}(t)$ and $\mu(t)$ for ease of notation. Let $\mathcal{P}$ denote the collection of joint distributions of $(X, Y)$ such that assumptions (A1)-(A3) are satisfied. We now use the Le Cam method to show
\begin{align} \label{eq:38}
\inf_{\check{\mu}} \sup_{P \in \mathcal{P}} \mathbb{E}_P \left[ \left(\check{\mu} - \mu(P) \right)^2 \right] \geq O\left(n^{-1} \right). \\[-5pt]
\end{align} 

We consider two distributions $P_1, P_2 \in \mathcal{P}$. Assume that the densities of $P_1$ and $P_2$ are respectively $p_1(x, y) = p_0(x) \: p_1(y | x)$ and $p_2(x, y) = p_0(x) \: p_2(y | x)$,  where $p_1(y | x) = N(x' \bm{\beta}_1, 1)$ for some $\bm{\beta}_1 \in \mathbb{R}^d$,  $p_2(y | x) = N(x' \bm{\beta}_2, 1)$ for some $\bm{\beta}_2 \in \mathbb{R}^d$,  and $p_0(x)$ is the uniform distribution in its compact domain $\mathcal{X}$.  Apparently,  $P_1, P_2 \in \mathcal{P}$.  By Le Cam method, 
\begin{align} \label{eq:39}
\inf_{\check{\mu}} \sup_{P \in \mathcal{P}} \mathbb{E}_P \left[ \left(\check{\mu} - \mu(P) \right)^2 \right] \geq \frac{\Delta}{8} \exp\left(-n \: \text{KL}(P_1 \lVert P_2) \right), \\[-5pt]
\end{align}
where $\Delta = \left(\mu(P_1) - \mu(P_2) \right)^2 = \left(\bm{a}' \bm{\beta}_1 - \bm{a}'\bm{\beta}_2 \right)^2$, and $\text{KL}(P_1 \lVert P_2)$ denotes the Kullback-Leibler divergence from $P_1$ to $P_2$. Simplifications of $\text{KL}(P_1 \lVert P_2)$ show that
\begin{align} \label{eq:40} 
\text{KL}(P_1 \lVert P_2) = \int_{\mathcal{X}} \int_{-\infty}^{\infty} p_1(x, y) \log \left(\frac{p_1(x, y)}{p_2(x, y)} \right) dy \: dx  \leq \frac{C_0}{2} \lVert \bm{\beta}_2 - \bm{\beta}_1 \rVert^2, \\[-5pt]
\end{align} 
where $C_0 = \int_{\mathcal{X}} p_0(x) \lVert x \rVert^2 d x  > 0$.  Substituting \eqref{eq:40} into the r.h.s. of \eqref{eq:39} gives
\begin{align} \label{eq:41}
\inf_{\check{\mu}} \sup_{P \in \mathcal{P}} \mathbb{E}_P \left[ \left(\check{\mu} - \mu(P) \right)^2 \right] \geq \frac{1}{8} \left(\bm{a}' \bm{\beta}_1 - \bm{a}'\bm{\beta}_2 \right)^2 \exp\left(-\frac{C_0}{2} n \lVert \bm{\beta}_2 - \bm{\beta}_1 \rVert^2 \right). \\[-5pt]
\end{align}
Setting $\bm{\beta}_2 = \bm{\beta}_1 + \frac{\bm{a}}{\sqrt{n}}$ gives
\begin{align} \label{eq:42}
\inf_{\check{\mu}} \sup_{P \in \mathcal{P}} \mathbb{E}_P \left[ \left(\check{\mu} - \mu(P) \right)^2 \right] \geq \frac{1}{8n} \exp\left(-\frac{C_0}{2} \right), \\[-5pt]
\end{align}
which concludes \eqref{eq:38} and thus Theorem 6.
\end{proof}

\bigskip

\begin{proof}[Proof of Theorem 7]
Note there exists a constant $C$ such that $\underset{t \in \mathcal{T}}{\sup} \: \text{Var} \left[\tilde{G}_n(t) \!\mid\! (X_i)_{i=1}^n \right]$ $\leq C$ a.s, by Lemma \ref{lemma9}. We first prove Theorem 7 conditional on $(X_i)_{i=1}^n$ such that
$\underset{t \in \mathcal{T}}{\sup} \: \text{Var} \left[\tilde{G}_n(t) \!\mid\! (X_i)_{i=1}^n \right] \leq C$. 

We first use the maximal inequality for Gaussian processes to bound $\sup_{t \in \mathcal{T}} \tilde{G}^2_n(t)$ by $O_p(1)$. More specifically, for the Gaussian process $\tilde{G}_n \!\mid\! (X_i)_{i=1}^n $, let $\sigma(\tilde{G}_n) \coloneqq \sup_{t \in \mathcal{T}} \sigma_n(t)$, and denote its standard deviation metric by $\rho_n(t,s) = \sigma(\tilde{G}_n(t)-\tilde{G}_n(s))$. We have shown in \eqref{eq:84} in the proof of Theorem 4 that $\rho_n(t,s) \leq c_0 \left|t-s \right|^{1/2}$ for some constant $c_0$,  suggesting
\begin{align} \label{eq:85}
N(\epsilon,\: \mathcal{T}, \: \rho_n) \leq \left(\frac{L}{\epsilon} \right)^V, \quad \text{for all} \; 0 < \epsilon < \epsilon_0, \\[-5pt]
\end{align}
holds for $\epsilon_0 = \sigma(\tilde{G}_n)$, $L > \sigma(\tilde{G}_n)$, and $V = 2$. We then invoke Proposition A.2.7 in \citet{van1996weak} to obtain that, for all $\lambda \geq (1+\sqrt{2})\sigma^2(\tilde{G}_n)/\epsilon_0$, 
\begin{align}\label{eq:86}
P\left(\sup_{t \in \mathcal{T}} \tilde{G}_n(t) \geq \lambda \right) \leq \frac{D\lambda}{\sigma^3(\tilde{G}_n)} \exp\left[-\frac{1}{2} \frac{\lambda^2}{\sigma^2(\tilde{G}_n)}\right], \\[-5pt]
\end{align}
where $D$ is some universal constant. Because $\tilde{G}_n(t) \!\mid\! (X_i)_{i=1}^n $ is a zero-mean Gaussian process, \eqref{eq:86} gives $\sup_{t \in \mathcal{T}} \tilde{G}^2_n(t)  = O_p(1)$.

We next proceed to calculate the rate of convergence for $\hat{\mu}_n(t)^{r-SI}$. Let $\eta_l \coloneqq \hat{\mu}_n(t_l)^{r-SI} - \mu(t_l) = \hat{\mu}_n(t_l) - \mu(t_l)$ for each $1 \leq l \leq T$, and let $h$ be the linear interpolation of $\{\left(t_l, \eta_l\right): \; 1 \leq l \leq T\}$, i.e.,
\begin{align}
h(t) \coloneqq \frac{t_{l+1}- t}{t_{l+1}-t_l} \eta_l + \frac{t - t_l}{t_{l+1}-t_l} \eta_{l+1}, \quad\quad \text{for} \; t \in \left[t_l, t_{l+1}\right]. \\[-5pt]
\end{align}

\noindent Then $\hat{\mu}_n(t)^{r-SI} = Q_r(\mu(t)+h(t))$, where $Q_r$ is the operator associated with the $r$-th order spline interpolation, i.e., for a general function $f$, $Q_r(f)$ is the solution to
\begin{align}
\min_{g \in \mathcal{W}_2^{r}} \int_{\mathcal{T}}\left[g^{(r)}(t)\right]^2\: dt, \quad \text{subject to} \;\; g(t_l) = f(t_l), \quad l=1, \dots, T. \\[-5pt]
\end{align} 

\noindent Since $Q_r$ is a linear operator ~\citep{devore1993constructive}, $\hat{\mu}_n^{r-SI} = Q_r(\mu+h) = Q_r(\mu) + Q_r(h)$, and we have
\begin{align}\label{eq:87}
\lVert \hat{\mu}_n^{r-SI} - \mu \rVert_{\mathcal{L}_2} \leq \lVert Q_r(\mu) - \mu \rVert_{\mathcal{L}_2} + \lVert Q_r(h) \rVert_{\mathcal{L}_2}, \\[-5pt]
\end{align}
by triangle inequality. If $\mu \in \mathcal{W}_2^{r}$, the first term on the r.h.s. in \eqref{eq:87}, which is the approximation error caused by $r$-th spline interpolation for $\mu$, can be bounded by (DeVore and Lorentz, 1993) \\[-5pt]
\begin{align}\label{eq:88}
\lVert Q_r(\mu) - \mu \rVert_{\mathcal{L}_2}^2 \lesssim T^{-2r}. \\[-5pt]
\end{align}

\noindent Following the arguments used in the proof of Theorem 2.2 in \citet{cai2011optimal}, we can bound the second term on the r.h.s. in \eqref{eq:87} by
\begin{align}
\lVert Q_r(h) \rVert_{\mathcal{L}_2}^2 \lesssim & \; \lVert h \rVert _{\mathcal{L}_2}^2 \lesssim  T^{-1} \sum_{l=1}^T \eta_l^2 =  T^{-1} \sum_{l=1}^T \left(\hat{\mu}_n(t_l) - \mu(t_l) \right)^2 \nonumber \\
= & \; T^{-1} \sum_{l=1}^T \left(\frac{1}{\sqrt{n}} \tilde{G}_n(t_l) + \frac{1}{\sqrt{n}} \tilde{r}_n(t_l) \right)^2 \quad (\text{by Theorem 2}) \nonumber \\
\lesssim & \; \frac{1}{n\:\!T} \sum_{l=1}^T \left(\tilde{G}^2_n(t_l) + \tilde{r}^2_n(t_l) \right) \leq \frac{1}{n} \sup_{t \in \mathcal{T}} \tilde{G}^2_n(t) + \frac{1}{n} \sup_{t \in \mathcal{T}} \tilde{r}^2_n(t) = O_p(1/n).  \label{eq:89} \\[-5pt]
\end{align}

\noindent Since $\underset{t \in \mathcal{T}}{\sup} \: \text{Var} \left[\tilde{G}_n(t) \!\mid\! (X_i)_{i=1}^n \right] \leq C$ a.s., the asserted claim of Theorem 7 follows from \eqref{eq:88} and \eqref{eq:89}.  
\end{proof}

\section{Proofs of Propositions and Lemmas} \label{lemma_proof}
In this section, we provide proofs of Proposition \ref{prop1}, and Lemma \ref{lemma1} -- \ref{lemma10} presented in Section \ref{lemmas}. Throughout the following proofs, $C, C_1, C_2, c_1, c_2$,  etc. will denote constants that do not depend on $n$ but may have different values in different parts of the proofs.

\begin{proof}[Proof of Proposition \ref{prop1}]
The proof consists of three steps.

\textit{Step 1:} Under the conditions stated in Theorem 1, if we additionally assume that the distribution for weights $\left(\omega_i \right)_{i=1}^n$ satisfies $\omega > 0, \; \mathbb{E} \left[\omega \right] = 1, \; \mathbb{E} \left[\omega^2 \right] \lesssim 1, \; \max_{1 \leq i \leq n} \omega_i \lesssim_{P} \log n$, then we can adapt the proof of Theorem 1 to show \eqref{eq:438}, a Bahadur representation uniform in $t \in \bm{t}$ for the weighted data $\left\{(\omega_i X_i, \; \omega_i Y_i(\bm{t}))\right\}_{i=1}^{n}$. 
\begin{align}\label{eq:438}
\hat{\bm{\beta}}^{b}_{\tau}(t)-\bm{\beta}_{\tau}(t) = -\frac{1}{n} J_{\tau}(t)^{-1} \sum_{i=1}^{n} \omega_i \:\! \psi(Y_i(t), X_i; \bm{\beta}_{\tau}(t), \tau) + r_{n}(t, \tau), \\[-5pt]
\end{align}
for $t \in \bm{t}$, where $\hat{\bm{\beta}}^{b}_{\tau}(t)$ is defined by \eqref{eq:30} and $\sup_{t \in \bm{t}}\lVert r_{n}(t, \tau) \rVert = o_p(n^{-1/2})$. We just need to replace $\left(X_i, Y_i(\bm{t})\right)$ in the proof of Theorem 1 with $\left(\omega_i X_i, \omega_i Y_i(\bm{t})\right)$ for $i=1, \dots, n$, and replace the condition $\lVert X \rVert \leq \xi$ in Assumption (A1) by $\lVert \omega X \rVert \lesssim_{P} \xi \log n$. 

Combining \eqref{eq:438} with (2.1) in Theorem 1 gives
\begin{align}\label{eq:439}
\hat{\bm{\beta}}^{b}_{\tau}(t) - \hat{\bm{\beta}}_{\tau}(t) = -\frac{1}{n} J_{\tau}(t)^{-1} \sum_{i=1}^{n} (\omega_i -1) \:\! \psi(Y_i(t), X_i; \bm{\beta}_{\tau}(t), \tau) + r^{b}_{n}(t, \tau), \\[-5pt]
\end{align}
for $t \in \bm{t}$, where $\sup_{t \in \bm{t}}\lVert r^{b}_{n}(t, \tau) \rVert = o_p(n^{-1/2})$. 

\bigskip 

\textit{Step 2:} For a given linear combination $\bm{a} \in \mathcal{S}^{d-1}$ and any $t \in \mathcal{T}$, let
\begin{align}\label{eq:440}
\mathbb{G}^{b}_n(t) \coloneqq \frac{1}{\sqrt{n}} \bm{a}' J_{\tau}(t)^{-1} \sum_{i=1}^{n} (\omega_i -1) X_i\left(\mathbf{1}\{Y_i(t) \leq X_i'\bm{\beta}_{\tau}(t)\} - \tau\right). \\[-5pt]
\end{align}
Given \eqref{eq:439}, for $\hat{\mu}_n(t)^{b-LI}$ defined by \eqref{eq:31}, we can follow the proof of Theorem 3 to show 
\begin{align}\label{eq:441}
\sup_{t \in \mathcal{T}} \left|\hat{\mu}_n(t)^{b-LI} - \hat{\mu}_n(t)^{LI} + \frac{1}{\sqrt{n}} \mathbb{G}^{b}_n(t) \right| = o_p(\frac{1}{\sqrt{n}}), \\[-5pt]
\end{align}
provided we can show the asymptotic tightness of the process $\mathbb{G}^{b}_n(\cdot)$ given by \eqref{eq:442}
\begin{align} \label{eq:442}
\sup_{v,\: s \in \mathcal{T}, \: |v-s| \leq \delta_{T(n)}} \left|\mathbb{G}^{b}_n(v) - \mathbb{G}^{b}_n(s) \right| = o_p(1), \\[-5pt]
\end{align}
where $\delta_{T(n)} = o(1)$. Assuming that $\mathbb{E} \left[\omega^4 \right] \lesssim 1$, for any $v, s \in \mathcal{T}$, we have
\begin{align}
\mathbb{E}\left[ \left|\mathbb{G}^{b}_n(v) - \mathbb{G}^{b}_n(s) \right|^4 \right] \lesssim  \frac{1}{n}|v-s| + |v-s|^2, \\[-5pt]
\end{align}
and can adapt the proof of step (ii) in Lemma \ref{lemma8} to show \eqref{eq:442}. 

\bigskip 

\textit{Step 3:} Following the proof of Theorem 4, we can show that there exists a sequence of zero-mean Gaussian processes $(\tilde{G}^{b}_n)_{n \geq 1}$ such that 
\begin{itemize}
\item[(i)] the sample paths of $\tilde{G}^{b}_n$ are a.s. continuous and the covariance functions of $\tilde{G}^{b}_n$ coincide with those of $\mathbb{G}^{b}_n$ for each $n$, i.e., $\mathbb{E}\left[\tilde{G}^{b}_n(t) \tilde{G}^{b}_n(s) \mid (X_i)_{i=1}^n \right] = \mathbb{E}\left[\vphantom{\tilde{G}^{b}_n(t)} \mathbb{G}^{b}_n(t) \mathbb{G}^{b}_n(s) \mid (X_i)_{i=1}^n \right]$ for all $t,\: s \in \mathcal{T}$; 
\item[(ii)] $\tilde{G}^{b}_n$ closely approximates $\mathbb{G}^{b}_n$ in that $\sup_{t \in \mathcal{T}} \left|\tilde{G}^{b}_n(t) - \mathbb{G}^{b}_n(t) \right| = o_p(1)$.
\end{itemize}
Given this and \eqref{eq:441}, for \eqref{eq:32} in Proposition \ref{prop1} to hold, we only need $\mathbb{E}\left[\mathbb{G}^{b}_n(t) \mathbb{G}^{b}_n(s) \mid (X_i)_{i=1}^n \right]= \mathbb{E}\left[\mathbb{G}_n(t) \mathbb{G}_n(s) \mid (X_i)_{i=1}^n \right]$ for each $n$, which is satisfied when $\mathbb{E} \left[\omega \right] = 1$ and $\mathbb{E} \left[\omega^2 \right] =2$.
\end{proof}

\bigskip

\begin{proof}[Proof of Lemma \ref{lemma1}]
We follow the proof of Lemma S.1.1 in \citet{chao2017quantile}. Recalling that $\vartheta(\bm{\beta}_{\tau}(t); t,\tau) = \mathbb{E}\left[\psi(Y(t), X; \bm{\beta}_{\tau}(t), \tau)\right]$, we have
$\partial_{\bm{\beta}} \vartheta(\bm{\beta}_{\tau}(t); t, \tau) = \mathbb{E}[XX'f_{Y(t)|X}(X'\bm{\beta}_{\tau}(t)|X)] = J_{\tau}(t)$, where $\partial_{\bm{\beta}} \vartheta(\bm{\beta}_{\tau}(t); t, \tau) = \frac{\partial}{\partial \bm{\beta}} \vartheta(\bm{\beta}; t,\tau)|_{\bm{\beta}=\bm{\beta}_{\tau}(t)}$. We have
\begin{align}
\vartheta(\bm{\beta}; t, \tau) = \vartheta(\bm{\beta}_{\tau}(t); t, \tau) + \partial_{\bm{\beta}} \vartheta(\bar{\bm{\beta}}_{\tau}(t); t, \tau) (\bm{\beta} - \bm{\beta}_{\tau}(t)), \\[-5pt]
\end{align}
where $\bar{\bm{\beta}}_{\tau}(t) = \bm{\beta} + \theta_{\bm{\beta},t,\tau}(\bm{\beta}_{\tau}(t) - \bm{\beta})$ for some $\theta_{\bm{\beta},t,\tau} \in [0,1]$. 

For any $t \in \bm{t}$, $\beta \in \mathbb{R}^d$,
\begin{align}
&\; \lVert \vartheta(\bm{\beta}; t, \tau) - \vartheta(\bm{\beta}_{\tau}(t); t, \tau) - \partial_{\bm{\beta}} \vartheta(\bm{\beta}_{\tau}(t); t, \tau) (\bm{\beta} - \bm{\beta}_{\tau}(t)) \rVert \\
= &\; \sup_{\bm{u} \in \mathcal{S}^{d-1}} \left| \bm{u}'\left[ \vartheta(\bm{\beta}; t, \tau) - \vartheta(\bm{\beta}_{\tau}(t); t, \tau) - \partial_{\bm{\beta}} \vartheta(\bm{\beta}_{\tau}(t); t, \tau) (\bm{\beta} - \bm{\beta}_{\tau}(t)) \right] \right| \\
= &\; \sup_{\bm{u} \in \mathcal{S}^{d-1}} \left| \bm{u}'\left[ \left(\partial_{\bm{\beta}} \vartheta(\bar{\bm{\beta}}_{\tau}(t); t, \tau) - \partial_{\bm{\beta}} \vartheta(\bm{\beta}_{\tau}(t); t, \tau)\right) (\bm{\beta} - \bm{\beta}_{\tau}(t)) \right] \right| \\
= &\; \sup_{\bm{u} \in \mathcal{S}^{d-1}} \left| \bm{u}'\left(\mathbb{E}[XX'f_{Y(t)|X}(X'\bar{\bm{\beta}}_{\tau}(t)|X)] - \mathbb{E}[XX'f_{Y(t)|X}(X'\bm{\beta}_{\tau}(t)|X)] \right) (\bm{\beta} - \bm{\beta}_{\tau}(t)) \right| \\
= &\; \sup_{\bm{u} \in \mathcal{S}^{d-1}} \left| \mathbb{E}\left[(\bm{u}'X) X'(\bm{\beta} - \bm{\beta}_{\tau}(t)) (f_{Y(t)|X}(X'\bar{\bm{\beta}}_{\tau}(t)|X) - f_{Y(t)|X}(X'\bm{\beta}_{\tau}(t)|X))\right] \right| \\
\leq & \;\; \overline{f'} \sup_{\bm{u} \in \mathcal{S}^{d-1}} \mathbb{E} \left[\left| \bm{u}'X \right| \left| X'(\bm{\beta}-\bm{\beta}_{\tau}(t)) \right| \left| X'(\bar{\bm{\beta}}_{\tau}(t)-\bm{\beta}_{\tau}(t)) \right| \right] \quad (\text{by Assumption (A2)}) \\
\leq & \;\; \overline{f'} \xi \; \mathbb{E} \left[\left| X'(\bm{\beta}-\bm{\beta}_{\tau}(t)) \right| \left| X'(\bar{\bm{\beta}}_{\tau}(t)-\bm{\beta}_{\tau}(t)) \right| \right] \quad (\text{by Assumption (A1)}) \\
\leq & \;\; \overline{f'} \xi \; \lVert \bm{\beta}-\bm{\beta}_{\tau}(t) \rVert^2 \; \sup_{\bm{u} \in \mathcal{S}^{d-1}} \mathbb{E}[\bm{u}'XX'\bm{u}] \quad (\text{by Cauchy–Schwarz inequality}) \\
= & \;\; \overline{f'} \xi \; \lVert \bm{\beta}-\bm{\beta}_{\tau}(t) \rVert^2 \; \lambda_{\max}(\mathbb{E}[XX']). \\[-5pt]
\end{align}
This gives Lemma \ref{lemma1} by taking the supremum over $\lVert \bm{\beta}-\bm{\beta}_{\tau}(t) \rVert \leq \delta$ and $t \in \bm{t}$.
\end{proof}

\bigskip

\begin{proof}[Proof of Lemma \ref{lemma2}]
We follow the proof of Lemma S.1.2 in \citet{chao2017quantile}. We first show that for any $a > 0$, 
\begin{align}\label{eq:401}
\!\!\!\!\!\! \left\{\sup_{t \in \bm{t}} \lVert \bm{\hat{\beta}}_{\tau}(t) - \bm{\beta}_{\tau}(t) \rVert \leq a s_{n,1} \right\} \supseteq \left\{\inf_{t \in \bm{t}} \inf_{\bm{\delta} =1} \bm{\delta}'\mathbb{P}_n\left[\psi(Y_i(t),X_i; \bm{\beta}_{\tau}(t)+ a s_{n,1} \bm{\delta},\tau) \right] > 0 \right\}. \\[-5pt]
\end{align}

\noindent To see this, define $\bm{\delta} \coloneqq (\bm{\hat{\beta}}_{\tau}(t) - \bm{\beta}_{\tau}(t)) / \lVert \bm{\hat{\beta}}_{\tau}(t) - \bm{\beta}_{\tau}(t) \rVert$. Observe that $f: \bm{\beta} \mapsto \mathbb{P}_n \rho_{\tau}(Y_i(t) - X_i'\bm{\beta})$ is convex for any $t$, and $\mathbb{P}_n \left[\psi(Y_i(t),X_i; \bm{\beta},\tau) \right]$ is a subgradient of $f$ at the point $\bm{\beta}$. By definition of the subgradient, we have for any $t \in \bm{t}$, $\zeta_n > 0$,
\begin{align}
\begin{split}
& \; \mathbb{P}_n \left[\rho_{\tau}(Y_i(t) - X_i'\bm{\hat{\beta}}_{\tau}(t))\right] \\ 
\geq & \; \mathbb{P}_n \left[\rho_{\tau}(Y_i(t) - X_i'(\bm{\beta}_{\tau}(t)+\zeta_n \bm{\delta}))\right] + (\bm{\hat{\beta}}_{\tau}(t) - \bm{\beta}_{\tau}(t) - \zeta_n \bm{\delta})'\mathbb{P}_n \left[\psi(Y_i(t),X_i; \bm{\beta}_{\tau}(t)+\zeta_n \bm{\delta},\tau) \right] \\
= & \; \mathbb{P}_n \left[\rho_{\tau}(Y_i(t) - X_i'(\bm{\beta}_{\tau}(t) + \zeta_n \bm{\delta}))\right] + (\lVert \bm{\hat{\beta}}_{\tau}(t) - \bm{\beta}_{\tau}(t) \rVert - \zeta_n) \bm{\delta}' \mathbb{P}_n \left[\psi(Y_i(t),X_i; \bm{\beta}_{\tau}(t)+\zeta_n \bm{\delta},\tau) \right]. \\[-5pt]
\end{split}
\end{align}

\noindent Recalling that $\bm{\hat{\beta}}_{\tau}(t)$ is a minimizer of $\mathbb{P}_n\left[\rho_{\tau}(Y_i(t) - X_i'\bm{\beta})\right]$, the inequality above leads to
\begin{align}
& \; (\lVert \bm{\hat{\beta}}_{\tau}(t) - \bm{\beta}_{\tau}(t) \rVert - \zeta_n) \bm{\delta}' \mathbb{P}_n \left[\psi(Y_i(t),X_i; \bm{\beta}_{\tau}(t)+\zeta_n \bm{\delta},\tau) \right] \\
\leq & \; \mathbb{P}_n \left[\rho_{\tau}(Y_i(t) - X_i'\bm{\hat{\beta}}_{\tau}(t))\right] - \mathbb{P}_n \left[\rho_{\tau}(Y_i(t) - X_i'(\bm{\beta}_{\tau}(t)+\zeta_n \bm{\delta}))\right] \leq 0. \\[-5pt] 
\end{align}

\noindent Setting $\zeta_n = a s_{n,1}$, we have that $\inf_{t \in \bm{t}} \inf_{\bm{\delta} =1} \bm{\delta}' \mathbb{P}_n \left[\psi(Y_i(t),X_i; \bm{\beta}_{\tau}(t)+ a s_{n,1} \bm{\delta},\tau) \right] > 0$, so $\sup_{t \in \bm{t}} \lVert \bm{\hat{\beta}}_{\tau}(t) - \bm{\beta}_{\tau}(t) \rVert \leq a s_{n,1}$, 
which yields \eqref{eq:401}.

Under Assumptions (A1)-(A3), by Lemma \ref{lemma1}, we also have that for any $t$,
\begin{align}
&\sup_{\bm{\delta}\in \mathcal{S}^{d-1}} \left| \mathbb{E}\left[\bm{\delta}'\left\{\psi(Y(t),X; \bm{\beta}, \tau) \! - \! \psi(Y(t),X; \bm{\beta}_{\tau}(t), \tau) \! - \! XX'f_{Y(t)|X}(X'\bm{\beta}_{\tau}(t)|X) (\bm{\beta} \! - \! \bm{\beta}_{\tau}(t)) \right\} \right] \right| \nonumber \\
= & \sup_{\bm{\delta}\in \mathcal{S}^{d-1}} \left|\bm{\delta}'\left\{\vartheta(\bm{\beta}; t,\tau) - \vartheta(\bm{\beta}_{\tau}(t); t, \tau) - J_{\tau}(t)(\bm{\beta} - \bm{\beta}_{\tau}(t)) \right\} \right| \nonumber \\
= & \left\lVert \vartheta(\bm{\beta}; t,\tau) - \vartheta(\bm{\beta}_{\tau}(t); t, \tau) - J_{\tau}(t)(\bm{\beta} - \bm{\beta}_{\tau}(t)) \right\rVert \leq \lambda_{\max}(\mathbb{E}[XX']) \overline{f'} \xi \; \lVert \bm{\beta}-\bm{\beta}_{\tau}(t) \rVert ^2. \label{eq:402} \\[-5pt]
\end{align}

\noindent Given $\mathbb{E}[\psi(Y(t),X; \bm{\beta}_{\tau}(t), \tau)] = \bm{0}$, for any $t \in \bm{t}$,  $\bm{\delta} \in \mathcal{S}^{d-1}$, \eqref{eq:402} leads to
\begin{align}
& \; - \bm{\delta}'\; \mathbb{E}\left[\psi(Y(t),X; \bm{\beta}_{\tau}(t) + a s_{n,1}\bm{\delta}, \tau) - XX'f_{Y(t)|X}(X'\bm{\beta}_{\tau}(t)|X)\; a s_{n,1}\bm{\delta} \right] \\
\leq & \; \lambda_{\max}(\mathbb{E}[XX']) \overline{f'} \xi \; a^2 s_{n,1}^2, \\[-5pt]
\end{align}
so we have
\begin{align}\label{eq:403}
\bm{\delta}' \mathbb{E}[\psi(Y(t),X; \bm{\beta}_{\tau}(t) + a s_{n,1}\bm{\delta}, \tau)] \geq a s_{n,1} \bm{\delta}'J_{\tau}(t)\bm{\delta} \! - \! \lambda_{\max}(\mathbb{E}[XX']) \overline{f'} \xi a^2 s_{n,1}^2. \\[-5pt]
\end{align}

\noindent Therefore, for any $t \in \bm{t}$, arbitrary $\bm{\delta} \in \mathcal{S}^{d-1}$,
\begin{align}\label{eq:404}
& \; \bm{\delta}'\; \mathbb{P}_n\left[\psi(Y_i(t), X_i; \bm{\beta}_{\tau}(t) + a s_{n,1}\bm{\delta}, \tau) \right] \\
\geq & \; - s_{n,1} + \bm{\delta}'\; \mathbb{E}\left[\psi(Y(t),X; \bm{\beta}_{\tau}(t) + a s_{n,1}\bm{\delta}, \tau) \right]  \quad (\text{by definition of} \;\;s_{n,1}) \\
\geq & \; - s_{n,1} + a s_{n,1}\;\bm{\delta}'J_{\tau}(t)\bm{\delta} -  \lambda_{\max}(\mathbb{E}[XX']) \overline{f'} \xi \; a^2 s_{n,1}^2  \quad (\text{by} \;\; \eqref{eq:403}) \\
\geq & \; - s_{n,1} + a s_{n,1} \inf_{t \in \bm{t}}\lambda_{\min}(J_{\tau}(t)) -  \lambda_{\max}(\mathbb{E}[XX']) \overline{f'} \xi \; a^2 s_{n,1}^2. \\[-5pt]
\end{align}

\noindent Setting $a = 2\upsilon / (\inf_{t \in \bm{t}} \lambda_{\min}(J_{\tau}(t)))$ for some $\upsilon > 0$, the expression in the last line of \eqref{eq:404} is positive when 
\begin{align}
s_{n,1} < \frac{(2\upsilon-1) \inf_{t \in \bm{t}}\lambda_{\min}^2(J_{\tau}(t))} {4\upsilon^2 \xi \overline{f'} \lambda_{\max}(\mathbb{E}[XX'])}. \\[-5pt]
\end{align}

\noindent For $\upsilon > 1$, $(2\upsilon - 1)/\upsilon^2 > 1/\upsilon$ holds. Substitute $a = 2\upsilon / (\inf_{t \in \bm{t}} \lambda_{\min}(J_{\tau}(t)))$ for $\upsilon > 1$,  we have 
\begin{align}
\begin{split}
& \; \left\{ s_{n,1} < \frac{\inf_{t \in \bm{t}}\lambda_{\min}^2(J_{\tau}(t))} {4\upsilon \xi \overline{f'} \lambda_{max}(\mathbb{E}[XX'])} \right\}  \\
\subseteq & \; \left\{\inf_{t \in \bm{t}} \inf_{\bm{\delta} =1} \bm{\delta}'\mathbb{P}_n [\psi(Y_i(t),X_i; \; \bm{\beta}_{\tau}(t) +  \frac{2\upsilon}{\inf_{t \in \bm{t}} \lambda_{\min}(J_{\tau}(t))} s_{n,1} \bm{\delta},\tau) ] > 0 \right\}  \quad (\text{by} \; \eqref{eq:404}) \\
\subseteq & \; \left\{\sup_{t \in \bm{t}} \lVert \bm{\hat{\beta}}_{\tau}(t) - \bm{\beta}_{\tau}(t) \rVert \leq \frac{2\upsilon}{\inf_{t \in \bm{t}} \lambda_{\min}(J_{\tau}(t))} s_{n,1} \right\}. \quad (\text{by} \; \eqref{eq:401})   \\[-5pt]
\end{split}
\end{align}
This completes the proof.
\end{proof}

\bigskip

\begin{proof}[Proof of Lemma \ref{lemma3}]
Let $Z=(X, \bm{Y})$ denote the functional data taking values in $\mathcal{Z}$, where $\bm{Y}$ is the $T \times 1$ vector with the $t^{th}$ element equal to $Y(t)$, as defined in Section \ref{lemmas}. Let $Z_t=(X,Y(t))$ denote the data at location $t$, which take values in $\mathcal{Z}_t$. Define the following classes of functions: 
\begin{align} \label{eq:405}
\mathcal{W} \coloneqq \left\{ (X, \bm{Y}) \mapsto \bm{u}'X \; | \;\bm{u} \in \mathcal{S}^{d-1} \right\}. \\[-5pt]
\end{align}
\begin{align} \label{eq:406}
\mathcal{F} \coloneqq \left\{ (X, \bm{Y}) \mapsto \bm{1}\{Y(t) \leq X'\bm{\beta}\} \; | \;\bm{\beta} \in \mathbb{R}^{d},\;  1 \leq t \leq T \right\}. \\[-5pt]
\end{align}

Our first step is to bound the VC index of $\mathcal{F}$. For any $t \in \{1,\dots, T\}$, define
\begin{align} 
\mathcal{F}_t \coloneqq \left\{ (X,Y(t)) \mapsto \bm{1}\{Y(t)\leq X'\bm{\beta}\} \; | \;\bm{\beta} \in \mathbb{R}^{d} \right\}. \\[-5pt]
\end{align}

\noindent Note that for any $t$, the class of functions $\mathcal{V}_t \coloneqq \left\{ (X,Y(t)) \mapsto X'\bm{\beta} - Y(t) | \;\bm{\beta} \in \mathbb{R}^{d} \right\}$ has a VC index bounded by $d+2$, by Lemma 2.6.15 in \citet{van1996weak}. Applying Lemma 2.6.18 (iii) in \citet{van1996weak}, we can show that $V(\mathcal{F}_t)$ is also bounded by $d+2$, and $V(\mathcal{F}_t)$ is constant cross $t$. Denote the VC index of $V(\mathcal{F}_t)$ by $\nu$. We next prove that $V(\mathcal{F})\leq C\:\nu \log T$ for some constant $C$ that does not depend on $T$ or $\nu$, which is equivalent to show that the subgraphs of functions $f: \mathcal{Z} \mapsto \mathbb{R}$ in $\mathcal{F}$ cannot shatter any collection of $n \geq C\:\nu \log T$ points: $\left((x_1, y_1), \eta_1\right), \dots, \left((x_n, y_n), \eta_n\right)$ in $\mathcal{Z} \times \mathbb{R}$; see Section 2.6.2 in \citet{van1996weak} for the definition of subgraphs of a function.

We first show that, for any given subset $\mathcal{I} \subset \{1,\dots, n\}$, (i) the subgraphs of functions $f \in \mathcal{F}$ can pick out $\left\{\left((x_{i},y_{i}), \eta_{i}\right): \; i\in \mathcal{I} \right\}$  is equivalent to (ii) for some $t \in \{1,\dots, T\}$, the subgraphs of functions $f_t \in \mathcal{F}_t$ can pick out $\left\{\left((x_{i},y_{i}(t)), \eta_{i}\right): \; i\in \mathcal{I} \right\}$.  To see this, note if (i) holds, i.e., there exists $f \in \mathcal{F}$ whose subgraph can pick out $\left\{\left((x_{i},y_{i}), \eta_{i}\right): \; i\in \mathcal{I} \right\}$, we have
\begin{align} 
\eta_{i} < &\; f\left((x_{i},y_{i})\right) \quad\quad\quad \text{for} \quad i \in \mathcal{I}; \nonumber \\
\eta_{i} \geq &\; f\left((x_{i},y_{i})\right) \quad\quad\quad \text{for} \quad i \notin \mathcal{I}. \label{eq:407} \\[-5pt]
\end{align}

\noindent But $f \in \mathcal{F}$ indicates that $f\left((x,y)\right) = f_t\left((x,y(t))\right) $ for some $t \in \{1,\dots,T\}$, $f_t \in \mathcal{F}_t$, so \eqref{eq:407} leads to
\begin{align} 
\eta_{i} < &\; f_t\left((x_{i},y_{i}(t))\right) \quad\quad\quad \text{for} \quad i \in \mathcal{I}; \\
\eta_{i} \geq &\; f_t\left((x_{i},y_{i}(t))\right) \quad\quad\quad \text{for} \quad i \notin \mathcal{I}, \\[-5pt]
\end{align}
which shows that (ii) holds. In a similar manner, it is easy to show that (ii) also leads to (i). Therefore, (i) and (ii) are equivalent.

By Sauer's Lemma on page 86 in \citet{van1996weak}, given $V(\mathcal{F}_t) = \nu$ for each $t$, for any $n \geq \nu$, the subgraphs of $f_t \in \mathcal{F}_t$ can pick out at most $\left(\frac{n e}{\nu - 1}\right)^{\nu-1}$ subsets from any collection of $n$ points $\left((x_1, y_1(t)), \eta_1\right), \dots, \left((x_n, y_n(t)), \eta_n\right)$ in $\mathcal{Z}_t \times \mathbb{R}$, and equivalently, can pick out at most $\left(\frac{n e}{\nu - 1}\right)^{\nu-1}$ subsets from any collection of $n$ points $\left((x_1, y_1), \eta_1\right), \dots, \left((x_n, y_n), \eta_n\right)$ in $\mathcal{Z} \times \mathbb{R}$. By symmetry, the subgraphs of $f \in \mathcal{F}$ can pick out at most $ T \left(\frac{n e}{\nu - 1}\right)^{\nu-1}$ subsets from these $n$ points in $\mathcal{Z} \times \mathbb{R}$. For some constant $C$ that is independent of $\nu$ or $T$ and satisfy $C \log 2 > 1$,  we have that for $n = C \nu \log T$, 
\begin{align} 
& \; \log T + (\nu-1) (\log n + \log\frac{e}{\nu-1}) \\
= & \; \log T + (\nu-1) \left(\log C + \log(\nu) + \log(\log T) + \log(\frac{e}{\nu-1}) \right) \\
= & \; \log T + (\nu-1) \left(\log C + \log(\log T) + 1 + \log(\frac{\nu}{\nu-1}) \right) \\
< & \; C \nu\;\log 2 \; \log T = n \log 2,  \quad \text{for all sufficiently large } T.   \\[-5pt]
\end{align}
Therefore, \\[-5pt]
\begin{align}
T \left(\frac{n e}{\nu - 1}\right)^{\nu-1} < 2^n, \quad \text{for all sufficiently large } T.  \\[-5pt] 
\end{align}

\noindent This means that for $n = C \nu \log T$, the number of subsets that can be picked out by the subgraphs of $f \in \mathcal{F}$ is strictly smaller than $2^n$, suggesting that the subgraphs of $f \in \mathcal{F}$ cannot shatter these $n$ points. Given that the $n$ points are arbitrarily chosen in $\mathcal{Z} \times \mathbb{R}$, we have proved that $V(\mathcal{F}) \leq C \nu \log T$. 

Note that any $g \in \mathcal{G}_1$ can be written as $g = w \cdot (f- \upsilon)$ for $w \in \mathcal{W}$, $f \in \mathcal{F}$, and $\upsilon \coloneqq \left\{ (X, \bm{Y}) \mapsto \tau \right\}$ where $\tau$ is a constant. We have $V(\upsilon) = O(1)$. Given this and $V(\mathcal{W}) \leq d+1$ (by Lemma 2.6.15 in \citet{van1996weak}) and $V(\mathcal{F})\leq C (d+2) \log T$ for some constant $C$, the first claim of Lemma \ref{lemma3} follows from Lemma 24 in \citet{belloni2019conditional} and Theorem 2.6.7 in \citet{van1996weak}. 

Note that any $g \in \mathcal{G}_2(\delta)$ can be written as $g = w \cdot (f_1 - f_2)$ for $w \in \mathcal{W}$, $f_1 \in \mathcal{F}$ and $f_2 \in \mathcal{F}$, so the second claim of Lemma \ref{lemma3} also follows from Lemma 24 in \citet{belloni2019conditional} and Theorem 2.6.7 in \citet{van1996weak}.
\end{proof} 

\bigskip

\begin{proof}[Proof of Lemma \ref{lemma4}]
We need the following results on empirical processes theory, which were stated in S.2.1 in \citet{chao2017quantile}. 

(i). Denote by $\mathcal{G}$ a class of functions that satisfy $|f(x)| \leq F(x) \leq U$ 
for every $f \in \mathcal{G}$ and let $\sigma^2 \geq \sup_{f\in \mathcal{G}} P f^2$. Additionally, let for some $A > 0, V > 0$ and all $\epsilon > 0$,
\begin{align}
N(\epsilon; \; \mathcal{G}, L_2(\mathbb{P}_n)) \leq \left(\frac{A \lVert F \rVert_{L^2(\mathbb{P}_n)}}{\epsilon} \right)^V. \\[-5pt]
\end{align} 

\noindent Note that if $\mathcal{G}$ is a VC-class, then $V$ is the VC-index of the set of subgraphs of functions in $\mathcal{G}$. This yields 
\begin{align}\label{eq:411}
\mathbb{E}\lVert \mathbb{P}_n - P \rVert_{\mathcal{G}} \leq c_0 \left[\sigma \left(\frac{V}{n} \log \frac{A \lVert F \rVert_{L^2(P)}}{\sigma} \right)^{1/2} + \frac{V U}{n} \log \frac{A \lVert F \rVert_{L^2(P)}}{\sigma} \right], \\[-5pt]
\end{align}
for a universal constant $c_0 > 0$ provided that $1 \geq \sigma^2 \geq \text{const} \times n^{-1}$. 

(ii). The second inequality (a refined version of Talagrand's concentration inequality) states that for any countable class of measurable functions $\mathcal{F}$ with elements mapping into $[-M, M]$,
\begin{align}\label{eq:412}
\!\!\!\! P\left(\lVert \mathbb{P}_n \!-\! P \rVert_{\mathcal{F}} \geq 2\mathbb{E}\lVert \mathbb{P}_n \!-\! P \rVert_{\mathcal{F}} + c_1 n^{-1/2} \left(\sup_{f \in \mathcal{F}} P f^2 \right)^{1/2} \! \sqrt{\upsilon} + n^{-1} c_2 M \upsilon \right) \leq e^{-\upsilon}, \\[-5pt] 
\end{align}
for all $\upsilon > 0$ and universal constants $c_1, c_2 > 0$.

\bigskip

By the first part of Lemma \ref{lemma3}, 
\begin{align}
N(\epsilon;\; \mathcal{G}_1, L^2(\mathbb{P}_n)) \leq \left(\frac{A \lVert F_1 \rVert_{L^2(\mathbb{P}_n)}}{\epsilon}\right)^{\nu_1(T)}, \\[-5pt]
\end{align}
where $A$ is some constant, $F_1$ is an envelope funciton of $\mathcal{G}_1$, and $\nu_1(T) = O(\log T)$. For each $f \in \mathcal{G}_1$, we also have $\mathbb{E}[f^2] \leq \sup_{\bm{u} \in \mathcal{S}^{d-1}} \bm{u}' \mathbb{E}[XX'] \bm{u} = \lambda_{\max}(\mathbb{E}[XX'])$.

In \eqref{eq:411}, let $\mathcal{G} = \mathcal{G}_1, \; V = \nu_1(T), \; F = F_1 \equiv \xi, \; \sigma^2 = \lambda_{\max}(\mathbb{E}[XX'])$ (or a multiple of $\lambda_{\max}(\mathbb{E}[XX'])$ such that $1 \geq \sigma^2 \geq \text{const} \times n^{-1}$). Applying (i) gives
\begin{align}\label{eq:413}
\mathbb{E}\lVert \mathbb{P}_n - P \rVert_{\mathcal{G}_1} \leq c_0 \left[\sigma \left(\frac{\nu_1(T)}{n} \log \frac{A \xi}{\sigma} \right)^{1/2} + \frac{\nu_1(T)\:\xi}{n} \log \frac{A \xi}{\sigma} \right]. \\[-5pt]
\end{align}

In \eqref{eq:412}, let $\mathcal{F} = \mathcal{G}_1, \; M = \xi, \; \upsilon = \kappa_n$, then we apply (ii) to obtain
\begin{align}\label{eq:414}
\!\!\!\! P\left(\lVert \mathbb{P}_n \!-\! P \rVert_{\mathcal{G}_1} \geq 2\mathbb{E}\lVert \mathbb{P}_n \!-\! P \rVert_{\mathcal{G}_1} \!+\! c_1 \sqrt{\sup_{f \in \mathcal{G}_1} P f^2} \left(\frac{\kappa_n}{n}\right)^{1/2} \!+\! c_2 \xi \frac{\kappa_n}{n}\right) \leq e^{-\kappa_n}. \\[-5pt]
\end{align}

Combining \eqref{eq:413} and \eqref{eq:414}, for some constant $C$, we have
\begin{align}
P\left(\lVert \mathbb{P}_n - P \rVert_{\mathcal{G}_1} \geq C\left[\left(\frac{\log T}{n}\right)^{1/2} + \frac{\log T}{n} + \left(\frac{\kappa_n}{n}\right)^{1/2} + \frac{\kappa_n}{n}\right] \right) \leq e^{-\kappa_n}, \\[-5pt]
\end{align}
which gives the first inequality in Lemma \ref{lemma4}.

By the second part of Lemma \ref{lemma3}, 
\begin{align}
N(\epsilon;\; \mathcal{G}_2(\delta_n), L^2(\mathbb{P}_n)) \leq \left(\frac{A \lVert F_2 \rVert_{L^2(\mathbb{P}_n)}}{\epsilon}\right)^{\nu_2(T)}, \\[-5pt]
\end{align}
where $A$ is some constant, $F_2$ is an envelope funciton of $\mathcal{G}_2(\delta_n)$, and $\nu_2(T) = O(\log T)$. For each $f \in \mathcal{G}_2(\delta_n)$, by Assumption (A2), we also have
\begin{align}
\mathbb{E}[f^2]  \leq & \; \sup_{\bm{u} \in \mathcal{S}^{d-1}} \sup_{t \in \bm{t}} \sup_{\lVert \bm{\beta}_1(t)-\bm{\beta}_2(t)\rVert \leq \delta_n} \mathbb{E}\left[(\bm{u}'X)^2 \mathbf{1}\{|Y(t)-X'\bm{\beta}_1(t)| \leq |X'(\bm{\beta}_1(t)- \bm{\beta}_2(t))|\} \right] \\
\leq & \; \sup_{\bm{u} \in \mathcal{S}^{d-1}} \sup_{t \in \bm{t}} \sup_{\bm{\beta}_1(t) \in \mathbb{R}^{d}} \mathbb{E} \left[(\bm{u}'X)^2 \mathbf{1}\{|Y(t)-X'\bm{\beta}_1(t)| \leq \xi \delta_n\} \right] \\
= & \; \sup_{\bm{u} \in \mathcal{S}^{d-1}} \sup_{t \in \bm{t}} \sup_{\bm{\beta}_1(t) \in \mathbb{R}^{d}} \mathbb{E} \left[\mathbb{E}\left[(\bm{u}'X)^2 \mathbf{1}\{|Y(t)-X'\bm{\beta}_1(t)| \leq \xi \delta_n\}|X \right] \right] \\
= & \; \sup_{\bm{u} \in \mathcal{S}^{d-1}} \sup_{t \in \bm{t}} \sup_{\bm{\beta}_1(t) \in \mathbb{R}^{d}} \mathbb{E} \left[(\bm{u}'X)^2 \mathbb{E}\left[\mathbf{1}\{|Y(t)-X'\bm{\beta}_1(t)| \leq \xi \delta_n\}|X \right] \right] \\
\leq & \; \sup_{\bm{u} \in \mathcal{S}^{d-1}} \mathbb{E} \left[(\bm{u}'X)^2 2 \overline{f} \xi \delta_n \right] \leq 2\lambda_{\max}(\mathbb{E}[XX']) \overline{f} \xi \delta_n = c_3 \delta_n. \\[-5pt]
\end{align}

For $\delta_n \downarrow 0 $ such that $\delta_n \gg n^{-1}$, let $\sigma^2 = c_3 \delta_n$ (or a multiple of $c_3 \delta_n$ such that $1 \geq \sigma^2 \geq \text{const} \times n^{-1}$) in \eqref{eq:411}. Let $\mathcal{G} = \mathcal{G}_2(\delta_n), \; V = \nu_2(T)$ and $F = F_2 \equiv \xi$.  Applying (i) gives \\[-5pt]
\begin{align}\label{eq:415}
\mathbb{E}\lVert \mathbb{P}_n - P \rVert_{\mathcal{G}_2(\delta_n)}  \leq &\;  c_0 \left[\sigma \left(\frac{\nu_2(T)}{n} \log \frac{A\:\xi}{\sigma} \right)^{1/2} + \frac{\nu_2(T)\:\xi}{n} \log \frac{A\:\xi}{\sigma} \right] \\
= &\; c_0 \left[c_3^{1/2} \delta_n^{1/2} \left(\frac{\nu_2(T)}{2n} \log (\delta_n^{-1})\right)^{1/2} + \frac{\nu_2(T)\:\xi}{2n} \log (\delta_n^{-1}) \right] \\
= &\; c_4\: \delta_n^{1/2} \left(\frac{\log T}{n} \log (\delta_n^{-1})\right)^{1/2} + c_5\: \frac{\log T}{n} \log (\delta_n^{-1}) \\
\leq &\; c_4\: \delta_n^{1/2} \left(\frac{\log T}{n} \log n\right)^{1/2} + c_5\: \left(\frac{\log T}{n} \log n \right).   \\[-5pt]
\end{align}

In \eqref{eq:412}, let $\mathcal{F} = \mathcal{G}_2(\delta_n), \; M = \xi, \; \upsilon = \kappa_n$, then we apply (ii) to obtain
\begin{align}\label{eq:416}
\!\!\!\! P\left(\lVert \mathbb{P}_n \!-\! P\rVert_{\mathcal{G}_2(\delta_n)} \geq 2\mathbb{E}\lVert \mathbb{P}_n \!-\! P\rVert_{\mathcal{G}_2(\delta_n)} \!+\! c_1\sqrt{\sup_{f\in \mathcal{G}_2(\delta_n)} P f^2} \left(\frac{\kappa_n}{n}\right)^{1/2} \!+\! c_2\xi \frac{\kappa_n}{n}\right) \leq e^{-\kappa_n}. \\[-5pt]
\end{align}

Combining \eqref{eq:415} and \eqref{eq:416}, for some constant $C$, we have
\begin{align}
P\left(\lVert \mathbb{P}_n \!-\! P \rVert_{\mathcal{G}_2(\delta_n)} \geq C\left[\delta_n^{1/2} \left(\frac{\log T}{n} \log n \right)^{1/2} \!\! + \! \frac{\log T}{n} \log n + \delta_n^{1/2} \left(\frac{\kappa_n}{n}\right)^{1/2} + \frac{\kappa_n}{n}\right] \right) \leq e^{-\kappa_n}, \\[-5pt]
\end{align}
which gives the second inequality in Lemma \ref{lemma4}.
\end{proof}

\bigskip

\begin{proof}[Proof of Lemma \ref{lemma6}] 
Recall that $J_{\tau}(t) = \mathbb{E}[XX'f_{Y(t)|X}(X'\bm{\beta}_{\tau}(t)|X)]$ for each $t$. By Assumptions (A2), (A4), (A5),
\begin{align}
C'_1 = & \; \sup_{x\in \mathcal{X}, t\in \mathcal{T}} \left|\frac{\partial}{\partial t}f_{Y(t)|X}(x'\bm{\beta}_{\tau}(t)|x)\right| \\
= & \; \sup_{x\in \mathcal{X}, t\in \mathcal{T}} \left|\frac{\partial}{\partial y}f_{Y(t)|X}(y|x)|_{y=x'\bm{\beta}_{\tau}(t)} \: \frac{d}{dt} x'\bm{\beta}_{\tau}(t) \: + \frac{\partial}{\partial t}f_{Y(t)|X}(y|x)|_{y=x'\bm{\beta}_{\tau}(t)} \right| \\
\leq & \; \sup_{x\in \mathcal{X}, y\in \mathbb{R}, t\in \mathcal{T}} \left|\frac{\partial}{\partial y}f_{Y(t)|X}(y|x)\right| \sup_{x\in \mathcal{X}, t\in \mathcal{T}} \left|x' \frac{d}{dt} \bm{\beta}_{\tau}(t) \right| + \! \sup_{x\in \mathcal{X}, y\in \mathbb{R}, t\in \mathcal{T}} \left|\frac{\partial}{\partial t}f_{Y(t)|X}(y|x) \right| < \infty. \\[-5pt]
\end{align}

\noindent By a Taylor expansion we have that uniformly over $t, s \in \mathcal{T}$, $x \in \mathcal{X}$,
\begin{align} 
\left|f_{Y(t)|X}(x'\bm{\beta}_{\tau}(t)|x) - f_{Y(s)|X}(x'\bm{\beta}_{\tau}(s)|x) \right| \leq C'_1 \left|t-s\right|. \\[-5pt]
\end{align}

\noindent Therefore, 
\begin{align}
J_{\tau}(t) - J_{\tau}(s) \preccurlyeq C'_1 \left|t-s\right| \mathbb{E}[XX'], \;\; \text{and} \;\; J_{\tau}(s) - J_{\tau}(t) \preccurlyeq C'_1 \left|t-s\right| \mathbb{E}[XX'], \\[-5pt]
\end{align}
where the inequalities $\preccurlyeq$ are in the semi-definite positive sense. Using the matrix identity $A^{-1} - B^{-1} = B^{-1}(B-A)A^{-1}$,
\begin{align}
& \; \lVert J_{\tau}(t)^{-1} - J_{\tau}(s)^{-1} \rVert \\
= & \; \lVert J_{\tau}(s)^{-1} \left(J_{\tau}(s) - J_{\tau}(t) \right) J_{\tau}(t)^{-1} \rVert \leq \lVert J_{\tau}(s)^{-1} \rVert \lVert J_{\tau}(s) - J_{\tau}(t) \rVert \lVert J_{\tau}(t)^{-1} \rVert \\
\leq & \; \lambda_{\max}(J_{\tau}(s)^{-1}) \: C'_1 |t-s| \lambda_{\max}(\mathbb{E}[XX']) \: \lambda_{\max}(J_{\tau}(t)^{-1}) \\
\leq & \; C'_1 \left(\frac{1}{\inf_{t\in \mathcal{T}}\lambda_{\min}(J_{\tau}(t))}\right)^{2} \: \lambda_{\max}(\mathbb{E}[XX']) \: |t-s|, \quad (\text{by Assumption (A1), (A3)})  \\[-5pt]
\end{align}
which completes the proof.
\end{proof}

\bigskip

\begin{proof}[Proof of Lemma \ref{lemma8}] 
To prove the theorem, we adapt the proof of Theorem 2.1 in \citet{chao2017quantile} and Theorem 4.2 in \citet{volgushev2019distributed}. Let $V_i(t) \coloneqq \bm{a}' J_{\tau}(t)^{-1} X_i\left(\mathbf{1}\{Y_i(t) \leq X_i'\bm{\beta}_{\tau}(t)\} - \tau\right)$, so we have that $\mathbb{G}_n(t) = n^{-1/2} \sum_{i=1}^n V_i(t)$. The proof consists of the following two steps, 

(i). Finite-dimensional convergence. By Cram\'er-Wold theorem, it suffices to show that for an arbitrary, finite set of $\{t_1, \dots, t_L \}$ and $\{\zeta_1, \dots, \zeta_L\} \in \mathbb{R}^{L}$,
\begin{align}\label{eq:448}
\sum_{l=1}^{L} \zeta_l \mathbb{G}_n(t_l) \xrightarrow[]{d} \sum_{l=1}^{L} \zeta_l \mathbb{G}(t_l). \\[-5pt]
\end{align}

(ii). Asymptotic tightness of the process $\mathbb{G}_n(t)$ in $l^{\infty}(\mathcal{T})$,  i.e.,  for any $c > 0$,
\begin{align}\label{eq:449}
\lim_{\delta \downarrow 0} \limsup_{n \rightarrow \infty} P\left(\sup_{t,s \in \mathcal{T}, \; |t-s| \leq \delta} \left| \mathbb{G}_n(t) - \mathbb{G}_n(s) \right| > c \right) = 0. \\[-5pt]
\end{align}

\noindent With \eqref{eq:449}, the existence of an almost surely continuous sample path for $\mathbb{G}$ follows from Addendum 1.5.8 in \citet{van1996weak}.

\bigskip

\textit{Proof of step (i):} First note that for any $t \in \mathcal{T}, \mathbb{E}\left[V_i(t)\right] = 0$, and by Assumptions (A1)-(A3), $\text{Var}\left[V_i(t)\right] = \tau(1-\tau)\bm{a}' J_{\tau}(t)^{-1}\mathbb{E}[XX']J_{\tau}(t)^{-1}\bm{a} < \infty$. Therefore,
\begin{align}\label{eq:450}
\text{Var}\left[\sum_{l=1}^{L} \zeta_l \: V_i(t_l)\right] <  \infty,   \\[-5pt]
\end{align}
for any finite set of $\{t_1, \dots, t_L \}$ and $\{\zeta_1, \dots, \zeta_L\} \in \mathbb{R}^{L}$.  

Also, we have $\mathbb{E}\left[\sum_{l=1}^{L} \zeta_l \: V_i(t_l)\right] =  \sum_{l=1}^L \zeta_l \: \mathbb{E}\left[\vphantom{\left(\lambda\right)^2}V_i(t_l)\right] = 0$, and
\begin{align}
& \; \text{Var}\left[\sum_{l=1}^{L} \zeta_l \: V_i(t_l)\right] = \sum_{l=1}^L \sum_{l'=1}^L \zeta_l \: \zeta_{l'} \: \text{Cov}\left[V_i(t_l), \vphantom{\left(\lambda\right)^2} \: V_i(t_{l'})\right] \nonumber \\
= & \; \sum_{l=1}^L \sum_{l'=1}^L \zeta_l \: \zeta_{l'} \: \text{Cov}\left[\bm{a}' J_{\tau}(t_l)^{-1} \psi(Y(t_l), X; \: \bm{\beta}_{\tau}(t_l), \tau), \vphantom{\left(\lambda\right)^2} \: \bm{a}' J_{\tau}(t_{l'})^{-1} \psi(Y(t_{l'}), X; \: \bm{\beta}_{\tau}(t_{l'}), \tau) \right] \nonumber\\
= & \; \sum_{l=1}^L \sum_{l'=1}^L \zeta_l \: \zeta_{l'} \: \bm{a}' J_{\tau}(t_l)^{-1} \mathbb{E}\left[\psi(Y(t_l), X; \: \bm{\beta}_{\tau}(t_l), \tau) \cdot \psi(Y(t_{l'}), X; \: \bm{\beta}_{\tau}(t_{l'}), \tau)'\right] J_{\tau}(t_{l'})^{-1} \bm{a} \nonumber\\
= & \; \sum_{l=1}^L \sum_{l'=1}^L \zeta_l \: \zeta_{l'} \: H_{\tau}(t_l, t_{l'}; \: \bm{a}) = \text{Var}\left[\sum_{l=1}^{L} \zeta_l \: \mathbb{G}(t_l)\right]. \label{eq:451} \\[-5pt]
\end{align}

Given \eqref{eq:450} and \eqref{eq:451}, \eqref{eq:448} directly follows from the central limit theorem.

\bigskip

\textit{Proof of step (ii):} Consider the decomposition 
\begin{align}\label{eq:436}
&  \mathbb{G}_n(t) - \mathbb{G}_n(s) \\
= & \frac{1}{\sqrt{n}} \bm{a}' \! \left(J_{\tau}(t)^{-1} \! - \!J_{\tau}(s)^{-1}\right) \! \sum_{i=1}^{n} X_i\left(\mathbf{1}\{Y_i(t) \leq X_i'\bm{\beta}_{\tau}(t)\} \!-\! \tau\right) \! + \! \frac{1}{\sqrt{n}} \bm{a}' J_{\tau}(s)^{-1} \sum_{i=1}^{n} \! X_i \Delta_i(t,s), \nonumber\\[-5pt]
\end{align}
\begin{align}
\text{where} \: \Delta_i(t,s) = \mathbf{1}\{Y_i(t) \leq X_i'\bm{\beta}_{\tau}(t)\} \!-\! \mathbf{1}\{Y_i(s) \leq X_i'\bm{\beta}_{\tau}(s)\} = \mathbf{1}\{\eta_i(t) \leq 0\} \!-\! \mathbf{1}\{\eta_i(s) \leq 0\}. \\[-5pt]
\end{align}

By Lemma \ref{lemma6}, $\lVert J_{\tau}(t)^{-1} - J_{\tau}(s)^{-1} \rVert \leq C' \left|t-s \right|$ for some constant $C'$, so for $\forall \: L \geq 2$,
\begin{align}\label{eq:452}
& \; \mathbb{E}\left[\left|\bm{a}' \left(J_{\tau}(t)^{-1} - J_{\tau}(s)^{-1}\right) X_i\left(\mathbf{1}\{Y_i(t) \leq X_i'\bm{\beta}_{\tau}(t)\} - \tau\right) \right|^L \right] \\
\lesssim & \; \xi^{L-2} \: \mathbb{E} \left[\left|\bm{a}' \left(J_{\tau}(t)^{-1} - J_{\tau}(s)^{-1}\right) X_i \right|^2 \right] \\
= & \; \xi^{L-2} \: \bm{a}' \left(J_{\tau}(t)^{-1} - J_{\tau}(s)^{-1}\right) \mathbb{E}\left[X_i X_i'\right] \left(J_{\tau}(t)^{-1} - J_{\tau}(s)^{-1}\right) \bm{a} \\
\leq & \; \xi^{L-2} \: \lVert \left(J_{\tau}(t)^{-1} - J_{\tau}(s)^{-1}\right) \: \mathbb{E}\left[X_i X_i'\right] \: \left(J_{\tau}(t)^{-1} - J_{\tau}(s)^{-1}\right) \rVert \lesssim \xi^{L-2} \: |t-s|^2. \\[-5pt]
\end{align}

\noindent Also, observe that  $\mathbb{E}\left[\Delta_i(t,s)^2\:|X_i\right] =  \mathbb{E}\left[\mathbf{1}\{\eta_i(s) \cdot \eta_i(t) < 0\}\:|X_i \right] \leq c_0 |t-s|$ for $\forall \: X_i \in \mathcal{X}$ by Assumption (A6), so for $\forall \: L \geq 2$,
\begin{align}\label{eq:453}
\begin{split}
& \; \mathbb{E}\left[\left|\bm{a}' J_{\tau}(s)^{-1} X_i \Delta_i(t,s) \right|^L \right] \\
\lesssim & \; \xi^{L-2} \: \mathbb{E}\left[\left|\bm{a}' J_{\tau}(s)^{-1} X_i \Delta_i(t,s) \right|^2 \right] \\
= & \; \xi^{L-2} \: \bm{a}' J_{\tau}(s)^{-1} \mathbb{E}\left[X_i X_i'\: \Delta_i(t,s)^2\right] J_{\tau}(s)^{-1} \bm{a} \\
= & \; \xi^{L-2} \: \bm{a}' J_{\tau}(s)^{-1} \mathbb{E}\left[X_i X_i'\: \mathbb{E}\left[\Delta_i(t,s)^2\:|\:X_i\right] \right] J_{\tau}(s)^{-1} \bm{a} \lesssim \xi^{L-2} \: |t-s|. \\[-5pt]
\end{split}
\end{align}

\noindent Given \eqref{eq:452} and \eqref{eq:453}, $\forall \: t, s \in \mathcal{T}$, similar calculations as on page 3307 in \citet{chao2017quantile} yield
\begin{align}
\mathbb{E}\left[ \left|\mathbb{G}_n(t) - \mathbb{G}_n(s) \right|^4 \right] \lesssim  \frac{1}{n}|t-s| + |t-s|^2, \\[-5pt]
\end{align}
so for $|t-s| \geq \frac{1}{n^3}$, or equivalently $|t-s|^{1/3} \geq \frac{1}{n}$, 
\begin{align}
\mathbb{E}\left[ \left|\mathbb{G}_n(t) - \mathbb{G}_n(s) \right|^4 \right] \lesssim & \; |t-s|^{4/3}. \\[-5pt]
\end{align}

Now apply Lemma \ref{lemma7} with $\mathcal{T}=\mathcal{T},\; d(s,t) = |s-t|^{1/3}, \; \bar{\omega}_n = 2/n $ and $\Psi(x) = x^4$, for any $\delta > 0, \; \omega \geq \bar{\omega}_n$, we have
\begin{align}\label{eq:455}
\sup_{|s-t|^{1/3} \leq \delta^{1/3}} \left| \mathbb{G}_t - \mathbb{G}_s \right| \; \leq \; S_1(\omega) \: + 2 \sup_{|s-t|^{1/3} \leq \bar{\omega}_n, \: t \in \tilde{T}} \left| \mathbb{G}_t - \mathbb{G}_s \right|, \\[-10pt]
\end{align}
where the set $\tilde{T}$ contains at most $D(\bar{\omega}_n,d) = O(n^3)$ points, and $S_1(\omega)$ satisfies
\begin{align}\label{eq:456}
\lVert S_1(\omega) \rVert_4 \lesssim \int_{\omega_n/2}^{\omega} \epsilon^{-3/4} \text{d}\epsilon \; + (\delta^{1/3} + 2\bar{\omega}_n) \: \omega^{-3/2}. \\[-5pt]
\end{align}

\noindent For any $c > 0, \; \omega \geq \bar{\omega}_n$, it follows from \eqref{eq:456} and Markov's inequality that
\begin{align}
& \; \lim_{\delta \downarrow 0} \limsup_{n \rightarrow \infty} P(S_1(\omega) \geq c/2) \nonumber \\
\lesssim & \; \lim_{\delta \downarrow 0} \limsup_{n \rightarrow \infty} \frac{16}{c^4} \: \left[\int_0^{\omega} \epsilon^{-3/4} \text{d}\epsilon \; + (\delta^{1/3} + 4/n) \: \omega^{-3/2} \right]^4 = \frac{16}{c^4} \: \left[\int_0^{\omega} \epsilon^{-3/4} \text{d}\epsilon \right]^4. \label{eq:457} \\[-5pt]
\end{align}

\noindent We can make the r.h.s. of \eqref{eq:457} arbitrarily small by choosing small $\omega$. If we can show 
\begin{align}\label{eq:458}
\sup_{|t-s|^{1/3} \leq \bar{\omega}_n, \: t \in \tilde{T}} \left| \mathbb{G}_n(t) - \mathbb{G}_n(s) \right| = o_p(1), \\[-5pt]
\end{align}
then by \eqref{eq:455}, 
\begin{align}
\begin{split}
& \; \lim_{\delta \downarrow 0} \limsup_{n \rightarrow \infty} P\left(\sup_{t,s \in \mathcal{T}, \; |t-s| \leq \delta} \left| \mathbb{G}_n(t) - \mathbb{G}_n(s) \right| > c \right)  \\
\leq & \; \lim_{\delta \downarrow 0} \limsup_{n \rightarrow \infty} P(S_1(\omega) \geq c/2) + \limsup_{n \rightarrow \infty} P\left(2 \sup_{|t-s|^{1/3} \leq \bar{\omega}_n, \: t \in \tilde{T}} \left| \mathbb{G}_n(t) - \mathbb{G}_n(s) \right| > c/2 \right) = 0, \\[-5pt]
\end{split}
\end{align}
which is the asymptotic equicontinuity condition that we want to prove in \eqref{eq:449}. 

Therefore, it remains to prove \eqref{eq:458}. Recall the decomposition in \eqref{eq:436}, and $\lVert J_{\tau}(t)^{-1} - J_{\tau}(s)^{-1} \rVert \leq C' \left|t-s \right|$ for some constant $C'$ by Lemma \ref{lemma6}. Let $\epsilon_n \coloneqq \bar{\omega}_n^3$, we have that for any $t,\:s \in \mathcal{T}$, $|t-s| \leq \epsilon_n$,
\begin{align}
& \; \frac{1}{\sqrt{n}} \left| \bm{a}' \left(J_{\tau}(t)^{-1} - J_{\tau}(s)^{-1}\right) \sum_{i=1}^{n} X_i\left(\mathbf{1}\{Y_i(t) \leq X_i'\bm{\beta}_{\tau}(t)\} - \tau\right) \right| \nonumber \\
\leq & \; \frac{1}{\sqrt{n}} \: n \: C' \: \xi \: \left| t-s \right| = O(n^{-5/2}), \quad \text{a.s.} \label{eq:459} \\[-5pt]
\end{align}

Next, observe that $\forall \; t \in \tilde{T}$, we have a.s. for a constant $C_1$ independent of $t,\: s$ and $n$ such that
\begin{align}\label{eq:460}
\sup_{|t-s| \leq \epsilon_n} \frac{1}{\sqrt{n}} \left| \bm{a}' J_{\tau}(s)^{-1} \sum_{i=1}^{n} X_i \Delta_i(t,s) \right| \leq \frac{C_1}{\sqrt{n}} B_n(t,\: \epsilon_n), \\[-5pt]
\end{align}
where $B_n(t,\: \epsilon_n) \coloneqq \sum_{i=1}^n \mathbf{1}\{\text{At least one crossing with} \; y=0 \; \text{occurs in} \; \eta_i(s): \: |s-t| \leq \epsilon_n \}$.
By Assumption (A6), $\forall \: t \in \mathcal{T}, \: C > 0$, we have $P\left(B_n(t,\: \epsilon_n) \geq C \right) \leq P\left(\tilde{B}_n(\epsilon_n) \geq C \right)$, where $\tilde{B}_n(\epsilon_n) \sim \text{Bin} (n, c_1 \epsilon_n)$ for some constant $c_1$ that does not depend on $t$ or $n$. By applying the multiplicative Chernoff bound for the binomial random variable $\tilde{B}_n(\epsilon_n)$, for any $\nu > 1$, 
\begin{align}
P\left(B_n(t,\: \epsilon_n) \geq (1+\nu) n c_1 \epsilon_n \right) \leq P\left(\tilde{B}_n(\epsilon_n) \geq (1+\nu) n c_1 \epsilon_n \right)  \leq \exp\left(-\frac{1}{3} \nu n c_1 \epsilon_n \right). \\[-8pt]
\end{align}

\noindent Substitute in $\epsilon_n = \bar{\omega}_n^3 = 2^3 / n^3 $ and $\nu = 3A n^2 \log n / 2^3 c_1$ for some positive integer $A$, the inequality above becomes 
\begin{align}\label{eq:461}
P\left(B_n(t,\: \epsilon_n) \geq \frac{2^3 c_1}{n^2} + 3A \log n \right) \leq \exp\left(- A \log n \right) = n^{-A}. \\[-5pt]
\end{align}

\noindent By \eqref{eq:460} and \eqref{eq:461}, for any $t \in \tilde{T}$, 
\begin{align}
& \; P\left(\sup_{|t-s| \leq \epsilon_n} \frac{1}{\sqrt{n}} \left|\bm{a}' J_{\tau}(s)^{-1} \sum_{i=1}^{n} X_i \Delta_i(t,s) \right| \geq \frac{C_1}{\sqrt{n}} \left(\frac{2^3 c_1}{n^2} + 3A \log n \right) \right) \\
\leq & \; P\left(B_n(t,\: \epsilon_n) \geq \frac{2^3 c_1}{n^2} + 3A \log n \right) \leq n^{-A}. \\[-5pt]
\end{align}

\noindent Now recall that the set $\tilde{T}$ contains at most $O(n^3)$ points, so we have
\begin{align}
& \;P\left(\sup_{t\in \tilde{T}} \sup_{|t-s|\leq \epsilon_n} \frac{1}{\sqrt{n}} \left|\bm{a}' J_{\tau}(s)^{-1} \sum_{i=1}^{n} X_i \Delta_i(t,s) \right| \geq \frac{C_1}{\sqrt{n}} \left(\frac{2^3 c_1}{n^2} + 3A \log n \right) \right) \\
\leq & \; |\tilde{T}| \; P\left(\sup_{|t-s| \leq \epsilon_n} \frac{1}{\sqrt{n}} \left|\bm{a}' J_{\tau}(s)^{-1} \sum_{i=1}^{n} X_i \Delta_i(t,s) \right| \geq \frac{C_1}{\sqrt{n}} \left(\frac{2^3 c_1}{n^2} + 3A \log n \right) \right)
\leq O(n^{3-A}). \\[-5pt]
\end{align}

\noindent Choose $A=4$, there exists a constant $C_2$ such that
\begin{align}\label{eq:462}
\!\!\!\! P\left(\sup_{t\in \tilde{T}} \sup_{|t-s|\leq \epsilon_n} \frac{1}{\sqrt{n}} \left|\bm{a}' J_{\tau}(s)^{-1} \sum_{i=1}^{n} X_i \Delta_i(t,s) \right| \geq \frac{C_2}{\sqrt{n}} \log n \right) \leq O(n^{-1}), \; \text{a.s.} \\[-5pt]
\end{align}

Combining \eqref{eq:436}, \eqref{eq:459} and \eqref{eq:462}, we arrive at \eqref{eq:458}, which completes the proof.
\end{proof}

\medskip

\begin{proof}[Proof of Lemma \ref{lemma9}] 
We have
\begin{align} \label{eq:423}
& \; \mathbb{E}\left[\tilde{G}^2_n(t) \:|\: (X_i)_{i=1}^n \right] = \mathbb{E}\left[\vphantom{\tilde{G}^2_n(t)} \mathbb{G}^2_n(t) \:|\: (X_i)_{i=1}^n \right] \\
= & \; \frac{1}{n} \sum_{i=1}^{n} \left(\bm{a}'J_{\tau}(t)^{-1} X_i \right)^2 \mathbb{E}\left[\left(\mathbf{1}\{Y_i(t) \leq X_i'\bm{\beta}_{\tau}(t)\} - \tau\right)^2 \:|\: (X_i)_{i=1}^n \right] \\
= & \; \tau (1-\tau) \bm{a}'J_{\tau}(t)^{-1} \:\! \mathbb{E}_n \left[X_i X_i' \right]  \:\!  J_{\tau}(t)^{-1} \bm{a}. \\[-5pt]
\end{align}

\noindent In \eqref{eq:423}, we have that for $\forall \; t \in \mathcal{T}$,
\begin{align}
\begin{split}
& \; \bm{a}'J_{\tau}(t)^{-1} \:\! \mathbb{E}_n \left[X_i X_i' \right]  \:\!  J_{\tau}(t)^{-1} \bm{a} \\
\geq & \; \lVert J_{\tau}(t)^{-1} \bm{a} \rVert_2^2 \; \lambda_{\min}(\mathbb{E}_n \left[X_i X_i' \right]) \geq \frac{1}{\overline{f}^2 \lambda^2_{\max}(\mathbb{E}\left[XX'\right])} \; \lambda_{\min}(\mathbb{E}_n \left[X_i X_i' \right]), \\
& \; \bm{a}'J_{\tau}(t)^{-1} \:\! \mathbb{E}_n \left[X_i X_i' \right]  \:\!  J_{\tau}(t)^{-1} \bm{a} \\
\leq & \; \lVert J_{\tau}(t)^{-1} \bm{a} \rVert_2^2 \; \lambda_{\max}(\mathbb{E}_n \left[X_i X_i' \right]) \leq \frac{1}{f^2_{\min} \lambda^2_{\min}(\mathbb{E}\left[XX'\right])} \; \lambda_{\max}(\mathbb{E}_n \left[X_i X_i' \right]). \\[-5pt]
\end{split}
\end{align}

It then remains to prove that there exist constants $c_2 > c_1 > 0$ such that 
\begin{align}\label{eq:426}
\lambda_{\min}(\mathbb{E}_n \left[X_i X_i' \right]) \geq c_1, \quad \text{with probability approaching one}, \\[-5pt]
\end{align}
and 
\begin{align}\label{eq:427}
\lambda_{\max}(\mathbb{E}_n \left[X_i X_i' \right]) \leq c_2, \quad \text{a.s.} \\[-5pt]
\end{align}

To show \eqref{eq:426}, first note that $\lambda_{\min}(\mathbb{E}[XX'])$ is strictly positive by Assmuption (A1). For any given $\epsilon_2$ such that $0 < \epsilon_2 < \lambda_{\min}(\mathbb{E}[XX'])$, 
\begin{align}\label{eq:428}
\mathbb{E}[XX'] - \epsilon_2 I \prec \mathbb{E}_n \left[X_i X_i' \right], \quad \text{with probability approaching one}, \\[-5pt]
\end{align}
due to the element-wise convergence of $\mathbb{E}_n \left[X_i X_i' \right]$ to $\mathbb{E}[XX']$. \eqref{eq:428} leads to
\begin{align}
\lambda_{\min}(\mathbb{E}[XX']) - \epsilon_2 \leq \lambda_{\min}(\mathbb{E}_n \left[X_i X_i' \right]). \\[-5pt]
\end{align}

\noindent Taking $\epsilon_2 = \lambda_{\min}(\mathbb{E}[XX'])/2$, we then have $\lambda_{\min}(\mathbb{E}[XX'])/2 \leq \lambda_{\min}(\mathbb{E}_n \left[X_i X_i' \right])$ with probability approaching one, proving \eqref{eq:426}.

Given that there exists $\xi > 0$ such that $\lVert X\rVert \leq \xi$ a.s. by Assumption (A1), \eqref{eq:427} is straightforward using $\lambda_{\max}(\mathbb{E}_n \left[X_i X_i' \right]) = \sup_{\lVert \bm{u} \rVert =1} \bm{u}' \mathbb{E}_n \left[X_i X_i' \right] \bm{u} = \sup_{\lVert \bm{u} \rVert =1} \mathbb{E}_n \left[(\bm{u}' X_i)^2 \right]$. This completes the proof.
\end{proof}

\bigskip

\begin{proof}[Proof of Lemma \ref{lemma10}]
We basically follow the proof to Lemma 30 and Lemma 35 in \citet{belloni2019conditional}, and just sketch the key steps here.

First note that for any $r_n = o(1)$, we have
\begin{align}\label{eq:429}
\left\{\sup_{t \in \bm{t}} \lVert \bm{\hat{\beta}}_{\tau}(t) - \bm{\beta}_{\tau}(t) \rVert \leq r_n \right\} \subseteq \left\{\sup_{t \in \bm{t}} \lVert \hat{J}_{\tau}(t) - J_{\tau}(t) \rVert \leq \epsilon_1(n, T) + \epsilon_2(n, T) \right\}, \\[-5pt]
\end{align}
where $\hat{J}_{\tau}(t)$ is the Powell's estimator on $t \in \bm{t}$, and
\begin{align} \label{eq:430}
\epsilon_1(n, T) = & \;  \frac{1}{2\sqrt{n} h_n} \sup_{t \in \bm{t}, \; \lVert \bm{\beta} - \bm{\beta}_{\tau}(t) \rVert \leq r_n, \; \bm{u} \in S^{d-1} } \left| \mathbb{G}_n \left(\mathbf{1}\{ | Y_i(t) \leq X_i' \bm{\beta} | \leq h_n \} (\bm{u}' X_i)^2 \right) \right|, \\
\epsilon_2(n, T) = & \; \sup_{t \in \bm{t}, \; \lVert \bm{\beta} - \bm{\beta}_{\tau}(t) \rVert \leq r_n, \; \bm{u} \in S^{d-1} } \left| \frac{1}{2 h_n} \mathbb{E} \left[\mathbf{1}\{ | Y_i(t) \leq X_i' \bm{\beta} | \leq h_n \} (\bm{u}' X_i)^2 \right] - \bm{u}' J_{\tau}(t) \bm{u} \right|. \\[-5pt]
\end{align}
To bound $\epsilon_1(n, T)$, we define the class of functions
\begin{align} \label{eq:431}
\mathcal{G}_3(h) \coloneqq  \left\{ (X, \bm{Y}) \mapsto (\bm{u}'X)^2 \: \mathbf{1}\{ | Y(t) \leq X' \bm{\beta} | \leq h \} \; | \; \bm{u} \in \mathcal{S}^{d-1}, \; \bm{\beta} \in \mathbb{R}^{d}, \;  t \in \bm{t} \right\}. \\[-5pt]
\end{align}
If $h_n = o(1)$ and $\log T \log n = o(n h_n)$, similar arguments as used in the proof to Lemma 35 in \citet{belloni2019conditional} yield 
\begin{align} \label{eq:432}
\epsilon_1(n, T) \; = & \; \frac{1}{2\sqrt{n} h_n} \sup_{f \in \mathcal{G}_3(h_n)} \left| \mathbb{G}_n(f) \right| \lesssim_{P} \frac{1}{2\sqrt{n} h_n} (\log T \log n)^{1/2} \left( \sup_{f \in \mathcal{G}_3(h_n)} \mathbb{E} \left[f^2 \right] + \frac{\log T \log n}{n} \right)^{1/2} \\
\simeq & \; \frac{1}{2\sqrt{n} h_n} (\log T \log n)^{1/2} \left( h_n + \frac{\log T \log n}{n} \right)^{1/2} \simeq \left( \frac{\log T \log n}{n h_n} \right)^{1/2},  \\[-5pt] 
\end{align}
and 
\begin{align} \label{eq:433}
\epsilon_2(n, T) \lesssim r_n + h_n. \\[-5pt]
\end{align}
In the proof to Theorem 1, we have shown that 
\begin{align} \label{eq:434}
\sup_{t \in \bm{t}} \lVert \bm{\hat{\beta}}_{\tau}(t) - \bm{\beta}_{\tau}(t) \rVert \lesssim_{P} \left(\frac{\log T}{n} \right)^{1/2}. \\[-5pt]
\end{align}
Take $r_n = \left(\frac{\log T}{n} \right)^{1/2}$ and combine \eqref{eq:429} and \eqref{eq:434}, we have
\begin{align} \label{eq:435}
\sup_{t \in \bm{t}} \lVert \hat{J}_{\tau}(t) - J_{\tau}(t) \rVert \lesssim_{P} \left(\frac{\log T \log n}{n h_n} \right)^{1/2} + h_n. \\[-5pt]
\end{align}
And observe that for $\forall \; t \in \left[t_l, t_{l+1} \right]$,
\begin{align} \label{eq:437}
& \; \lVert \hat{J}_{\tau}(t) - J_{\tau}(t) \rVert  = \lVert \frac{t_{l+1}- t}{t_{l+1}-t_l} \hat{J}_{\tau}(t_l) + \frac{t - t_l}{t_{l+1}-t_l} \hat{J}_{\tau}(t_{l+1}) - J_{\tau}(t) \rVert \\
\leq & \;  \frac{t_{l+1}- t}{t_{l+1}-t_l} \lVert \hat{J}_{\tau}(t_l) - J_{\tau}(t_l) \rVert + \frac{t - t_l}{t_{l+1}-t_l} \lVert \hat{J}_{\tau}(t_{l+1}) - J_{\tau}(t_{l+1}) \rVert + \frac{t_{l+1}- t}{t_{l+1}-t_l} \lVert J_{\tau}(t_l) - J_{\tau}(t) \rVert + \\
& \; \frac{t - t_l}{t_{l+1}-t_l} \lVert J_{\tau}(t_{l+1}) - J_{\tau}(t) \rVert \lesssim_{P}  \left(\frac{\log T \log n}{n h_n} \right)^{1/2} + h_n + \delta_T, \\[-5pt]
\end{align}
where the last line follows from intermediate results in the proof to Lemma \ref{lemma6}.
\end{proof}

\bigskip

\bibliographystyle{apalike}
\bibliography{references}